\crefname{equation}{}{}
\newtheorem{lemma}{Lemma}
\newtheorem{theorem}{Theorem}
\def\lb{\left(}
\def\rb{\right)}
\def\lcb{\left\{}
\def\rcb{\right\}}
\def\ln{\left\|}
\def\rn{\right\|}
\def\lsb{\left[}
\def\rsb{\right]}
\def\P{\mathcal{P}}
\def\Po{\mathcal{P}_{\Omega}}
\def\Ho{\mathcal{H}_{\Omega}}
\def\I{\mathcal{I}}
\def\T{\mathcal{T}}
\def\De{\Delta}
\def\La{\Lambda}
\def\Si{\Sigma}
\DeclareMathOperator{\rank}{rank}
\DeclareMathOperator{\diag}{diag}
\begin{document}

\title{Leave-One-Out Analysis for  Nonconvex Robust Matrix Completion with General Thresholding Functions}

\author{Tianming Wang\thanks{T. Wang is with the School of Mathematics, Southwestern University of Finance and Economics, Chengdu, Sichuan, China (wangtm@swufe.edu.cn).} \and Ke Wei\thanks{K. Wei is with the School of Data Science, Fudan University, Shanghai, China (kewei@fudan.edu.cn).}
}

\maketitle

\begin{abstract}
We study the problem of robust matrix completion (RMC), where the partially observed entries of an underlying low-rank matrix is corrupted by  sparse noise. Existing analysis of the non-convex methods for this problem either requires the explicit but empirically redundant regularization in the algorithm or requires sample splitting in the analysis. In this paper, we consider a simple yet efficient nonconvex method which alternates between a projected gradient step for the low-rank part and a thresholding step for the sparse noise part. Inspired by  leave-one-out analysis for low rank matrix completion, it is established that the method can achieve linear convergence for a general class of thresholding functions, including for example soft-thresholding and SCAD. To the best of our knowledge, this is the first leave-one-out analysis on a nonconvex method for RMC. Additionally,
when applying our result to low rank matrix completion, it improves the sampling complexity of existing result for the singular value projection method.
\end{abstract}

\section{Introduction}

Low rank matrix completion has received a lot of investigations in the last decade, both from theoretical
and algorithmic aspects. In the pioneering works of Cand\`es, Recht and Gross \cite{Candes2009, gross2011recovering}, it is shown that the nuclear norm based convex relaxation can achieve exact matrix completion under nearly optimal sampling complexity. 
When the size of the matrix is large, nonconvex methods are more suitable for low rank matrix completion. Many computationally efficient approaches have been developed towards this line of research, including gradient descent \cite{zheng-pg,procrustesflow2016}, iterative hard thresholding \cite{Jain2010,tanner2013normalized}, Riemannian optimization \cite{mishra2012riemannian,vandereycken2013low}, to name just a few. 

The nonconvex methods also merit theoretical recovery guarantees under the notion of incoherence and the standard random sampling assumption. In order to exploit the incoherence property sufficiently, either explicit projection onto the incoherence region \cite{zheng-pg} or sample splitting \cite{Hardt2014,Jain2015,Wei2020} can be utilized in the convergence analysis of the nonconvex methods. 
The latter one splits the observed samples into disjoint subsets in order to create independence across iterations. However, it is well recognized that neither of the two techniques is necessary in practice. 
This gap has recently been filled based on the leave-one-out analysis technique \cite{El2013,El2018,Abbe2020,Adel2018,Zhong2018}. When applying this technique to analyze nonconvex algorithms, the idea is to  build auxiliary sequences that are exceedingly close to each other and also allows the decoupling of the statistical dependency. As a result, the utilization  of leave-one-out  allows us to show that the iterates  always stay in a incoherence region and thus has successfully improved the theoretical guarantees of nonconvex methods for low rank matrix completion. In particular, Ma et al. \cite{Ma2019} establish the linear convergence of the vanilla gradient descent method for low rank matrix completion, while Ding and Chen \cite{Ding2020} establish the linear convergence of the  singular value projection (SVP, also known as iterative hard thresholding) algorithm \cite{Jain2010,tanner2013normalized}.

In this paper, we consider the robust matrix completion (RMC) problem which arises for example in robust PCA from partial samples \cite{Cherapanamjeri2017}. The goal in RMC is to recover the ground truth low-rank matrix $L^{\star}\in\mathbb{R}^{n_1\times n_2}$ from partially observed  measurements that are corrupted by sparse outliers. More precisely, each observation can be written as
\begin{align}
M_{ij} = L^{\star}_{ij} + S^{\star}_{ij},\quad (i,j)\in \Omega,\label{eq:RMC-model}
\end{align}
where  $\Omega\subseteq[n_1]\times[n_2]$ is an index subset with $[n]:=[1,2,\dots,n]$, and $S^{\star}_{ij}$ denotes the outlier. There are also different nonconvex methods developed for RMC. 
In \cite{Yi2016}, a sorting operation is first used to update the outlier estimation, and then  a gradient step is taken to update the low-rank part represented via Burer-Monteiro factorization. The analysis therein  requires the explicit projection  onto the incoherence region which, as in the matrix completion case, is often deemed as an artifact of the proof. In a later work \cite{Cherapanamjeri2017}, an algorithm which combines SVP with hard-thresholding  is considered to solve the RMC problem. In every iteration, it performs a projected gradient step for the low-rank matrix, and then uses hard-thresholding for the outlier estimation. However, the theoretical guarantee in \cite{Cherapanamjeri2017} is derived based on the sample splitting trick. An extension of the leave-one-out analysis in \cite{Ding2020} turns out to be quite challenging when there exist sparse outliers.

 \subsection{Main Contributions}
To the best of our knowledge, there is no leave-one-out analysis for nonconvex methods targeting the RMC problem. In a closely related work \cite{Chen2021}, theoretical recovery guarantee is improved for a convex program by building a connection between the optimal solution of the convex program with the iterates of a nonconvex algorithm. The nonconvex algorithm can be viewed as a  projection-free version of the method used in \cite{Yi2016}, except that the sorting operation is replaced by the soft-thresholding operation. However, the nonconvex algorithm  in \cite{Chen2021} is not a practical one, since it requires initialization at the ground truth and is only used to verify certain optimal conditions. That being said, it motivates us to derive the theoretical guarantee of certain nonconvex methods for RMC based on the leave-one-out analysis technique, thus achieving a projection and sample splitting free result. Overall, the novelty of this paper are  summarized as follows.
\begin{enumerate}
    \item We consider a nonconvex method which alternates between SVP  for the low rank estimation and entrywise thresholding for the sparse outlier estimation. The theoretical recovery guarantee of the nonconvex method has been established based on the leave-one-out analysis technique for a class of thresholding functions, including for example soft-thresholding
and SCAD. It is also worth pointing out that our analysis can handle any sparse enough outliers, without requiring the support and signs of the outliers to be random, as assumed in \cite{Chen2021}. To the best of our knowledge, this is the first leave-one-out analysis for nonconvex methods targeting  the RMC problem. 
    \item 
     When applying our result to the matrix completion problem, our sample complexity  improves substantially upon the one  in \cite{Ding2020} over the dependency on $\kappa$, $\mu$ and $r$, i.e., from $O(\frac{\kappa^6\mu^4r^6\log n}{n})$ to $O(\frac{\kappa^4\mu^3r^3\log n}{n})$.
\end{enumerate}
For the analysis, as inspired by \cite{Ding2020}, we rewrite the update of the algorithm as a perturbation around the ground truth and then establish the contraction of $\|E^{t-1,\infty}\|_2$,  $\|\Delta^{t,\infty}\|_{2,\infty}$, and $\|D^{t,\infty}\|_F$ (see Section~\ref{proof_outline} for the definitions) by induction and the leave-one-out analysis. That being said, we would like to emphasize that the proof details are quite technically involved and new techniques have to be developed to achieve the improved analysis and to handle the outliers.
\begin{itemize}
    \item Firstly, to achieve the improved sample complexity, instead of first bounding $\|\Delta^{t,\infty}\|_{2,\infty}$ and then bounding $\|D^{t,\infty}\|_F$ based on $\|\Delta^{t,\infty}\|_{2,\infty}$, we have established two inequalities involving $\|\Delta^{t,\infty}\|_{2,\infty}$ and  $\|D^{t,\infty}\|_F$ simultaneously. This allows us to relax the requirement on the distance to the ground truth when performing the analysis. In addition, the extension of the proofs in \cite{Ding2020} for positive-definite matrices to general matrices needs to be done properly, including providing bounds for new terms appeared due to asymmetry.
    \item Secondly, while the outlier terms in the initialization can be controlled easily by the sparsity of the outliers, the outlier terms in the induction steps need to be bounded carefully. In particular, our analysis does not require the pattern and the signs of the outliers to be random (as is assumed in \cite{Chen2021}), but is applicable for any sort of outliers.
    \item Last but not least, more effective technical tools have been used in the initialization and induction steps (for example using \cite[Lemma 2]{Chen2015} instead of \cite[Lemma 21]{Ding2020}) and new lemmas have also been developed (for example Lemmas \ref{lem:bound1}~and~\ref{lem:P_Omega_AB}).
\end{itemize}

\subsection{Assumptions and Notations}

Without loss of generality, suppose the ground truth  $L^{\star}\in\mathbb{R}^{n\times n}$ is a square, rank-$r$ matrix.  Denote by $L^{\star}=U^{\star}\Si^{\star}\lb V^{\star}\rb^T$  the compact SVD of $L^{\star}$, where $U^{\star}\in\mathbb{R}^{n\times r},V^{\star}\in\mathbb{R}^{n\times r}$, and $\Si^{\star} = \diag(\sigma_1^{\star},\cdots,\sigma_r^{\star})$. With the convention
$\sigma_1^{\star}\geq\cdots\geq\sigma_r^{\star}$,  the condition number of $L^{\star}$, denoted $\kappa$, is given by $\kappa:=\sigma_1^{\star}/\sigma_r^{\star}$. 

\noindent\textbf{Assumption~1.} Assume $L^{\star}\in\mathbb{R}^{n\times n}$ is $\mu$-incoherent, i.e.,
$$
\ln U^{\star}\rn_{2,\infty}\leq \sqrt{\frac{\mu r}{n}},~
\ln V^{\star}\rn_{2,\infty}\leq \sqrt{\frac{\mu r}{n}},
$$
where for any matrix $M$,  $M_{i,:}$ denotes the $i$-th row of $M$,  and $\|M\|_{2,\infty}:=\max\limits_{i}\ln M_{i,:}\rn_2$.

Let $\|M\|_{\infty}:=\max\limits_{i,j} |M_{ij}|$. It follows immediately from Assumption 1 that 
$$
\ln L \rn_{\infty}\leq \frac{\mu r}{n}\sigma_1^{\star},\quad\mbox{and}\quad\max\left\{\ln L\rn_{2,\infty},\ln L^T\rn_{2,\infty}\right\}\leq \sqrt{\frac{\mu r}{n}}\sigma_1^{\star}.
$$

\noindent\textbf{Assumption~2.} Each entry is independently observed with probability $p$, i.e.,
$$
\mathbb{P}\lcb (i,j)\in \Omega \rcb = p.
$$

\noindent\textbf{Assumption~3.} Denote by $\Omega_{S^{\star}}$ the support of the outliers in $\Omega$. It is assumed that $\Po\lb S^{\star}\rb$ is $2\alpha p$-sparse, i.e., it has no more than $2\alpha p n$ nonzero entries per row and column.

\noindent{\textbf{Remark~1.}} Assumptions 1 and 2 are standard assumptions in the matrix completion literature. Assumption 1, together with Assumption 3, can be considered as separation conditions on $(L^{\star},S^{\star})$ \cite{Candes2011,Chen2021}. One can see that if each observed entry independently has probability $\alpha$ to be corrupted by an outlier, Assumption 3 can be satisfied with high probability\footnote{Throughout the paper, this terminology means that the event occurs with probability at least $1 - C_1n^{-C_2}$ for some constants $C_1, C_2 > 0$ and $C_2$ sufficiently large.}. However, Assumption 3 can account for other sparsity patterns of the outliers as well.

Lastly, we introduce more notations that will be used throughout this paper. We denote by $e_i$ the the $i$-th standard basis vector, by $\bm{1}$  the all-one vector, and by $I$  the identity matrix. Their dimensions will be  clear from the context. We use $\langle\cdot,\cdot\rangle$ to denote the standard inner product between two matrices. For a matrix $M$, we  use $M_{:,j}$ to denote the $j$-th column of $M$. The spectral norm and Frobenius norm of $M$ are denoted by $\ln M\rn_2$ and $\ln M\rn_{\mathrm{F}}$, respectively.

\section{Nonconvex RMC with General Thresholding Functions}

\subsection{Algorithm and Main Result}
Let $\Po:\mathbb{R}^{n\times n}\rightarrow\mathbb{R}^{n\times n}$  be the projection onto the set indexed by $\Omega$, i.e,
$$
\lb\Po(Z)\rb_{ij} = \left\{\begin{array}{cc}
Z_{ij} & (i,j)\in \Omega \\
0      & \text{otherwise}
\end{array}\right.,
$$
and $\P_r$  be the operator that computes the best rank-$r$ approximation of a matrix. The iterates of the  SVP algorithm \cite{Jain2010} for low rank matrix completion (i.e., $S^{\star}_{ij}=0$ in \eqref{eq:RMC-model}) can be written as
$$
L^{t+1}=\P_r\lb L^t-p^{-1}\Po\lb L^t-M\rb\rb,
$$
which is a projected gradient descent algorithm for a rank-constrained least squares problem,
$$
\min_{Z}~\frac12\ln\Po\lb Z-M\rb\rn_{\mathrm{F}}^2\quad \text{s.t.}\quad\rank(Z) = r. 
$$
When the measurements are corrupted by sparse outliers, the R-RMC algorithm proposed in \cite{Cherapanamjeri2017} combines SVP with hard-thresholding for the outliers,
\begin{equation}\label{eq:R-RMC}
\begin{aligned}
S^{t}=\mathcal{HT}_{\eta^{t}}\lb\Po\lb M-L^{t}\rb\rb,~L^{t+1}=\P_r\lb L^t-p^{-1}\Po\lb \lb L^t+S^t\rb-M\rb\rb,
\end{aligned}
\end{equation}
where 
$$
\mathcal{HT}_{\eta}(z) = \begin{cases}
0 & |z|\leq \eta \\
z & \text{otherwise}
\end{cases}
$$
is the hard-thresholding function with a parameter $\eta^{t}>0$ computed from $L^{t}$. 


The discontinuous nature of hard-thresholding function makes it difficult to extend the leave-one-out analysis in~\cite{Ding2020}  to R-RMC. Therefore, we will consider continuous thresholding functions in this paper. {\em Since there are different options, for example soft-thresholding and  SCAD \textup{\cite{Fan2001}}, which one should we consider? The phase transition tests presented in Figure~\ref{fig:PT1} suggest that both of them work desirably. Therefore, instead of focusing on a particular thresholding function, we consider a general class of thresholding functions. }Specifically, letting $\mathcal{T}_{\lambda}$ be the thresholding function with parameter $\lambda>0$, we  assume the following three properties hold: 
\begin{enumerate}[label=\textbf{P.\arabic*}]
    \item \label{P1} $\forall\, |x|\leq \lambda$, $\mathcal{T}_{\lambda}(x)=0$;
    \item \label{P2} there exists a constant $K>0$ such that $|T_{\lambda}(x)-T_{\lambda}(y)|\leq K|x-y|,~\forall\, x,y$;
    \item \label{P3} there exists a constant $B>0$ such that $|T_{\lambda}(x)-x|\leq B\lambda,~\forall\, x$.
\end{enumerate}

As can be seen later, 1) the first property will ensure that the estimated outliers have the correct support; 2) the second property will help us transfer the desirable properties of the low-rank part to the outlier part, and 3) the last property will allow us to control outliers with large magnitudes. Moreover, the aforementioned soft-thresholding and SCAD functions satisfy these properties, and a justification of this claim is provided in Lemma~\ref{lem:general_threh}. 

\begin{figure}[!t]
\subfloat{\includegraphics[width=.5\linewidth]{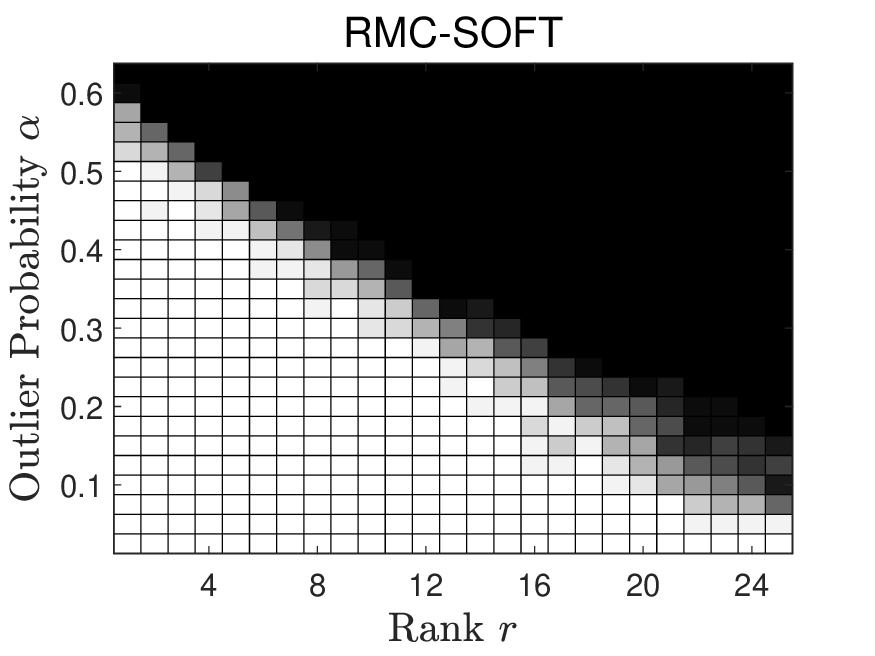}} \hfill
\subfloat{\includegraphics[width=.5\linewidth]{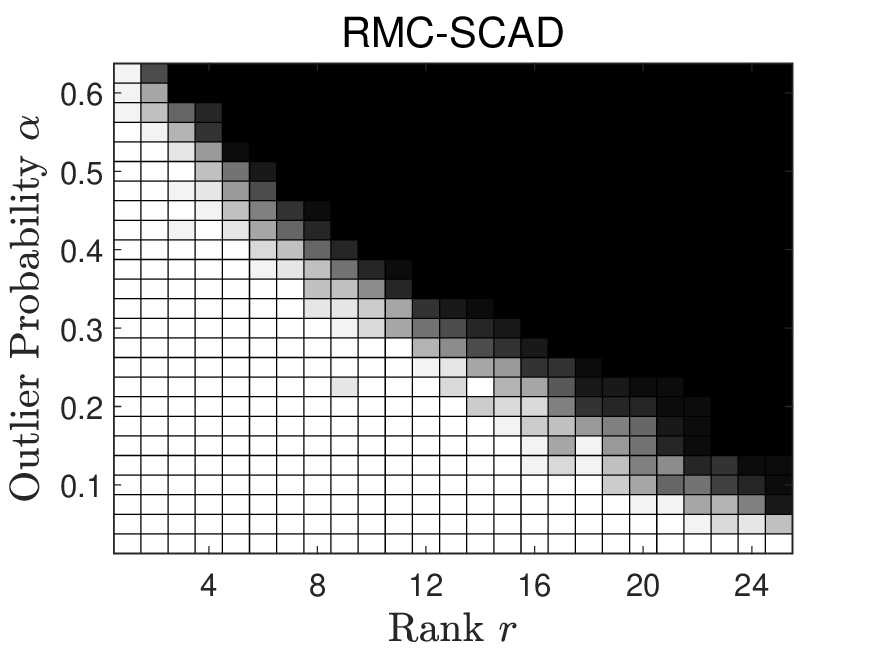}}
\caption{Empirical phase transition comparisons for \textbf{RMC-SOFT} and \textbf{RMC-SCAD}: rank $r$ vs. outlier probability $\alpha$. Tests are conducted for fixed $n=1000$, $p=0.3$, $r\in\{1,\cdots,25\}$, and $\alpha\in\{0.025,0.05,\cdots,0.625\}$. The white color corresponds to success with probability 1, and the black color corresponds to success with probability 0. } \label{fig:PT1}
\end{figure}

The nonconvex method considered in this paper is formally described in Algorithm~\ref{Alg1}.
The theoretical guarantee for it is stated below,  which is applicable to any thresholding function satisfying the properties \labelcref{P1,P2,P3}. 

\begin{theorem}\label{thm:noiseless}
Suppose that Alg.~\ref{Alg1} is performed with a thresholding function satisfying properties \textup{\labelcref{P1,P2,P3}}, and set $\frac{\mu r}{n}\sigma^{\star}_1\leq\beta\leq C_{\emph{init}}\cdot\frac{\mu r}{n}\sigma^{\star}_1$ for some constant $C_{\emph{init}}\geq 1$. Let $C_{\emph{thresh}}:=(K+B)\cdot C_{\emph{init}}$. If Assumptions 1-3 are satisfied with
$$
p\geq\frac{C_{\emph{sample}}}{\gamma^2}\cdot\frac{\kappa^4\mu^3r^3\log n}{n},\quad
\alpha\leq\frac{c_{\emph{outlier}}}{\kappa^2\mu^2r^2}\cdot\frac{\gamma}{C_{\emph{thresh}}}
$$
for some sufficiently large constant $C_{\emph{sample}}>0$, and some sufficiently small constant $c_{\emph{outlier}}>0$, then with high probability, the iterates of Alg.~\ref{Alg1} satisfy
$$
\ln L^{t}-L^{\star} \rn_{\infty}\leq \lb\frac{\mu r}{n}\sigma_{1}^{\star}\rb\gamma^t,
$$
and
$$
\text{Supp}\lb S^{t}\rb\subseteq\Omega_{S^{\star}},~\ln \Po\lb S^{t}-S^{\star}\rb\rn_{\infty}\leq C_{\emph{thresh}}\cdot\lb\frac{\mu r}{n}\sigma_{1}^{\star}\rb\gamma^t
$$
for iteration $0\leq t\leq T$, where $T=n^{O(1)}$. 
\end{theorem}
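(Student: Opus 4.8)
The plan is to run an induction on $t$ in which the $\ell_\infty$ bound on $L^{t}-L^{\star}$, the support inclusion and $\ell_\infty$ bound on $S^{t}$, and a collection of auxiliary spectral, incoherence, and leave-one-out estimates are propagated jointly. The starting point is to rewrite one step of Alg.~\ref{Alg1} as a perturbation of the ground truth: since $\Po(M)=\Po(L^{\star}+S^{\star})$,
$$
L^{t}-p^{-1}\Po\big((L^{t}+S^{t})-M\big)=L^{\star}+(\I-p^{-1}\Po)(L^{t}-L^{\star})-p^{-1}\Po(S^{t}-S^{\star}),
$$
so that $L^{t+1}=\P_r(L^{\star}+G^{t})$ with $G^{t}:=(\I-p^{-1}\Po)(L^{t}-L^{\star})-p^{-1}\Po(S^{t}-S^{\star})$. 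Everything then reduces to controlling $G^{t}$ in the spectral, $\|\cdot\|_{2,\infty}$, Frobenius, and (for the outlier piece) $\|\cdot\|_\infty$ senses, and feeding these into perturbation bounds for the best rank-$r$ approximation $\P_r$.

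Handling the thresholding step is the clean half, and is where properties \labelcref{P1,P2,P3} enter. With the threshold taken of order $\beta\gamma^{t}$ and conditioning on the inductive bound $\|L^{t}-L^{\star}\|_\infty\le\frac{\mu r}{n}\sigma^{\star}_1\gamma^{t}\le\beta\gamma^{t}$, property \labelcref{P1} forces $S^{t}_{ij}=0$ whenever $(i,j)\notin\Omega_{S^{\star}}$, since there $(\Po(M-L^{t}))_{ij}=(L^{\star}-L^{t})_{ij}$; hence $\mathrm{Supp}(S^{t})\subseteq\Omega_{S^{\star}}$. On $\Omega_{S^{\star}}$, combining \labelcref{P2} applied to $\mathcal{T}_{\beta\gamma^{t}}\big(S^{\star}_{ij}+(L^{\star}-L^{t})_{ij}\big)-\mathcal{T}_{\beta\gamma^{t}}(S^{\star}_{ij})$ with \labelcref{P3} applied to $\mathcal{T}_{\beta\gamma^{t}}(S^{\star}_{ij})-S^{\star}_{ij}$ gives $\|\Po(S^{t}-S^{\star})\|_\infty\le(K+B)\beta\gamma^{t}\le C_{\mathrm{thresh}}\frac{\mu r}{n}\sigma^{\star}_1\gamma^{t}$. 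Since $\Po(S^{t}-S^{\star})$ has at most $2\alpha pn$ nonzeros per row and column (Assumption~3 plus the support inclusion), the crude bound $\|A\|_2\le s\|A\|_\infty$ for matrices with at most $s$ nonzeros per row/column yields $\|p^{-1}\Po(S^{t}-S^{\star})\|_2\le 2\alpha C_{\mathrm{thresh}}\mu r\,\sigma^{\star}_1\gamma^{t}\lesssim c_{\mathrm{outlier}}\,\gamma^{t+1}\sigma^{\star}_r$ under the stated bound on $\alpha$, with analogous $\|\cdot\|_{2,\infty}$ and $\|\cdot\|_\infty$ estimates. In short, $\alpha$ is chosen exactly so that the outlier contribution to $G^{t}$ is dominated by the matrix-completion contribution in every norm we track, which is why the analysis tolerates arbitrary (even adversarial) sparse outliers rather than random ones as in \cite{Chen2021}.

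The core of the proof is the matrix-completion part of $G^{t}$, handled by leave-one-out. For each $l\in[n]$ introduce the auxiliary iterates $\{L^{t,(l)},S^{t,(l)}\}$ generated by the same algorithm but with the sampling operator on the $l$-th row and column replaced by its expectation, so that $L^{t,(l)}$ is independent of the observations in row/column $l$. The induction maintains, at each $t$: a geometrically contracting spectral bound on $L^{t}-L^{\star}$ (equivalently on $\|E^{t-1,\infty}\|_2$), incoherence of $L^{t}$, leave-one-out closeness $\max_l\|L^{t}-L^{t,(l)}\|_{\mathrm F}$, and the row-wise quantities $\|\Delta^{t,\infty}\|_{2,\infty}$ and $\|D^{t,\infty}\|_{\mathrm F}$ of Section~\ref{proof_outline}, from which the claimed $\|L^{t}-L^{\star}\|_\infty$ bound is read off. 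The step has four moves: (1) insert $G^{t-1}$ into a Weyl/Davis--Kahan-type bound for $\P_r$ together with a matrix-Bernstein estimate for $\|(\I-p^{-1}\Po)(L^{t-1}-L^{\star})\|_2$ and the negligible outlier spectral bound above, contracting $\|L^{t}-L^{\star}\|_2$ --- since $L^{\star}$ is not symmetric this needs separate $U^{\star}$- and $V^{\star}$-side bookkeeping and control of cross terms absent from the positive-definite analysis of \cite{Ding2020}; (2) subtract the recursions for $L^{t}$ and $L^{t,(l)}$ and bound the difference by the part of $G^{t-1}$ supported on row/column $l$ plus a $\P_r$-stability term, giving the new leave-one-out bound; (3) use independence of $L^{t,(l)}$ from row/column $l$ to apply a sharp concentration inequality (e.g.\ \cite[Lemma~2]{Chen2015}) to $e_l^{T}(L^{t}-L^{\star})$ through $L^{t,(l)}$ and its incoherence, and --- crucially --- instead of bounding $\|D^{t,\infty}\|_{\mathrm F}$ through $\|\Delta^{t,\infty}\|_{2,\infty}$, derive a coupled pair of inequalities in these two quantities and solve them jointly, which relaxes the distance-to-$L^{\star}$ requirement and delivers the improved $\kappa^4\mu^3r^3$ sample complexity; (4) assemble the row- and column-wise bounds into $\|L^{t}-L^{\star}\|_\infty\le\frac{\mu r}{n}\sigma^{\star}_1\gamma^{t}$ and re-establish the incoherence of $L^{t}$, closing the induction. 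The base case $t=0$ is checked directly from spectral concentration of $p^{-1}\Po$ and the sparsity of the outliers, and a union bound over the $T=n^{O(1)}$ iterations and the $n$ leave-one-out indices --- each exceptional event having probability $n^{-\Theta(1)}$ --- yields the statement for all $0\le t\le T$.

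The main obstacle is moves (2)--(3) in the presence of the outlier term. Unlike in pure matrix completion, $G^{t-1}$ carries the deterministic, sign- and support-arbitrary perturbation $p^{-1}\Po(S^{t-1}-S^{\star})$, which does not decouple cleanly across rows, so its effect on both the leave-one-out difference and the row-wise concentration must be controlled while paying only the sparsity factor $\alpha$ and never an extra $\log n$ or a randomness-of-signs hypothesis; this is where the new auxiliary estimates (cf.\ Lemmas~\ref{lem:bound1}~and~\ref{lem:P_Omega_AB}) are needed. A secondary difficulty is making the coupled $(\|\Delta^{t,\infty}\|_{2,\infty},\|D^{t,\infty}\|_{\mathrm F})$ system self-improve under the stated sampling rate while simultaneously tracking the extra asymmetry terms, which is what sharpens the dependence on $\kappa$, $\mu$, and $r$ over \cite{Ding2020}.
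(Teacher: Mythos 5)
Your proposal is correct and follows essentially the same route as the paper: rewriting the update as $L^{t+1}=\P_r(L^{\star}+E^{t})$, using properties \labelcref{P1,P2,P3} together with the $\alpha$-sparsity bound $\|A\|_2\le s\|A\|_\infty$ to control the outlier contribution, and closing a leave-one-out induction on the spectral error, $\|\Delta^{t,\infty}\|_{2,\infty}$, and $\|D^{t,\infty}\|_{\mathrm F}$ — including the paper's key device of solving a coupled pair of inequalities in the latter two quantities rather than chaining one through the other. The only cosmetic deviation is that you leave out the $l$-th row and column jointly ($n$ auxiliary sequences) where the paper uses $2n$ sequences, one per row and one per column, which changes nothing essential.
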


\begin{algorithm}[!t]
\caption{Nonconvex RMC with General Thresholding Functions} \label{Alg1}
\begin{algorithmic}[1]
\State \textbf{Initialization:} $L^{0}=0$, threshold parameter $\beta>0$, and decay rate $\gamma\in(0,1)$.
\For{$t=0,1,\cdots$}
    \State $\xi^{t} = \beta\cdot\gamma^{t}$
    \State $S^{t} = \T_{\xi^{t}}\lb\Po\lb M-L^{t}\rb\rb$
    \State $L^{t+1}=\P_r\lb L^t-p^{-1}\Po\lb L^t+S^t-M\rb\rb$
\EndFor
\end{algorithmic}
\end{algorithm}


\noindent\textbf{Remark~3.} 
Note that our result is also applicable for the noiseless case, i.e. the low rank matrix completion problem. In this case, we have improved the sample complexity of SVP from $O(\frac{\kappa^6\mu^4r^6\log n}{n})$ in \cite{Ding2020} to $O(\frac{\kappa^4\mu^3r^3\log n}{n})$.

\subsection{Numerical Experiments}\label{sec:numerics}

In this section, phase transition tests are conducted to compare Alg.~1 with the other two nonconvex methods  \textbf{RPCA-GD} \cite{Yi2016} and \textbf{R-RMC} \cite{Cherapanamjeri2017} mentioned in the introduction. We refer to Alg.~1 with soft-thresholding as \textbf{RMC-SOFT}, and Alg.~1 with SCAD shrinkage as \textbf{RMC-SCAD}.  We set $\beta=1.1\frac{\mu r}{n}\sigma_1^{\star}$ and $\gamma=0.9$ in Alg.~1, and the parameter $a$ for SCAD (see \eqref{eq:threshold_funs}) is set to 3. The codes of \textbf{RPCA-GD} are available online and we implement a modified version of \textbf{R-RMC} by removing the rank-increasing scheme since it affects little about the final results in our experiments. Also recall that the theoretical guarantee of \textbf{R-RMC} is based on a sample splitting scheme that is not used in practice. 

\begin{figure}[!t]
\subfloat{\includegraphics[width=.5\linewidth]{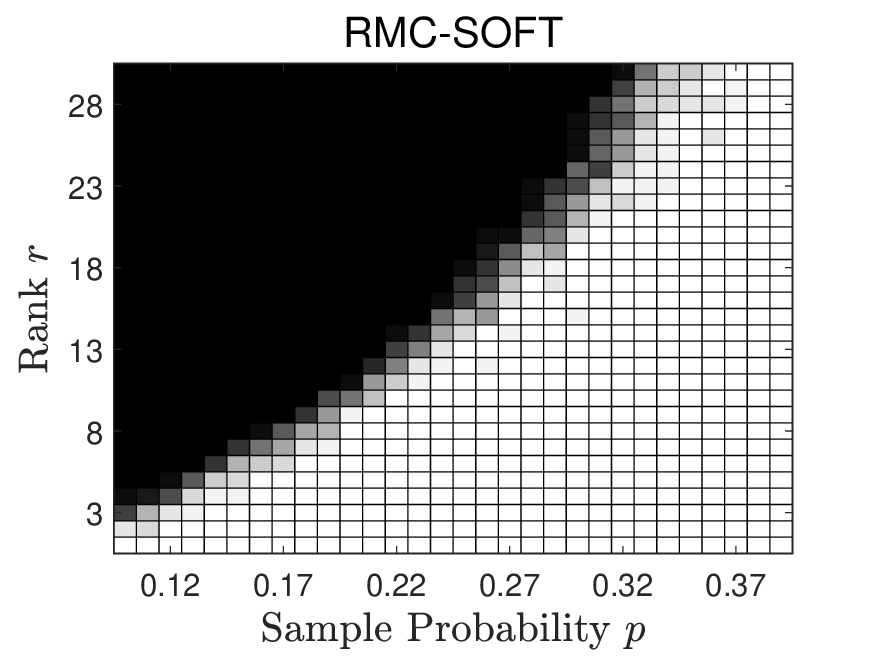}} \hfill
\subfloat{\includegraphics[width=.5\linewidth]{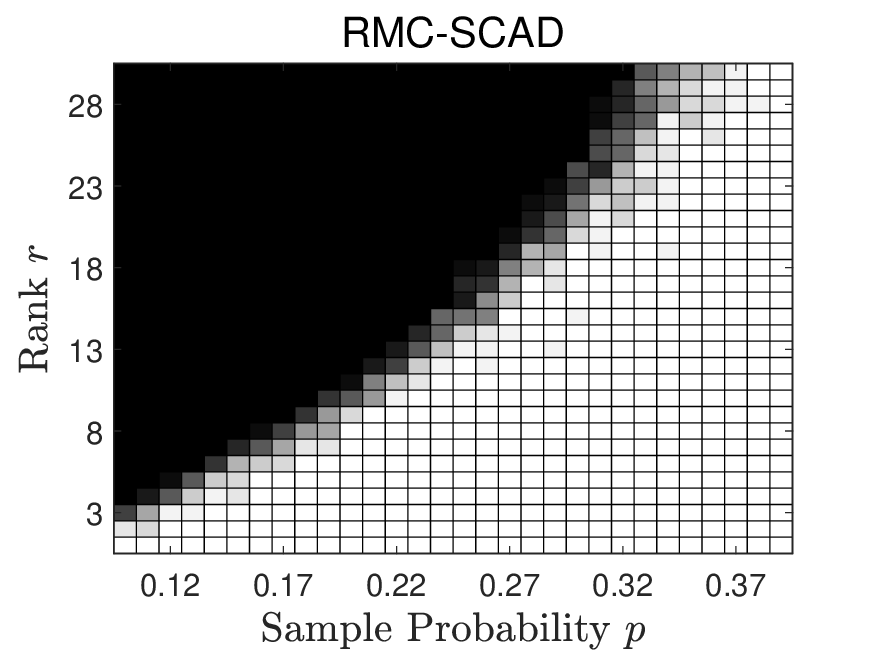}} \hfill
\subfloat{\includegraphics[width=.5\linewidth]{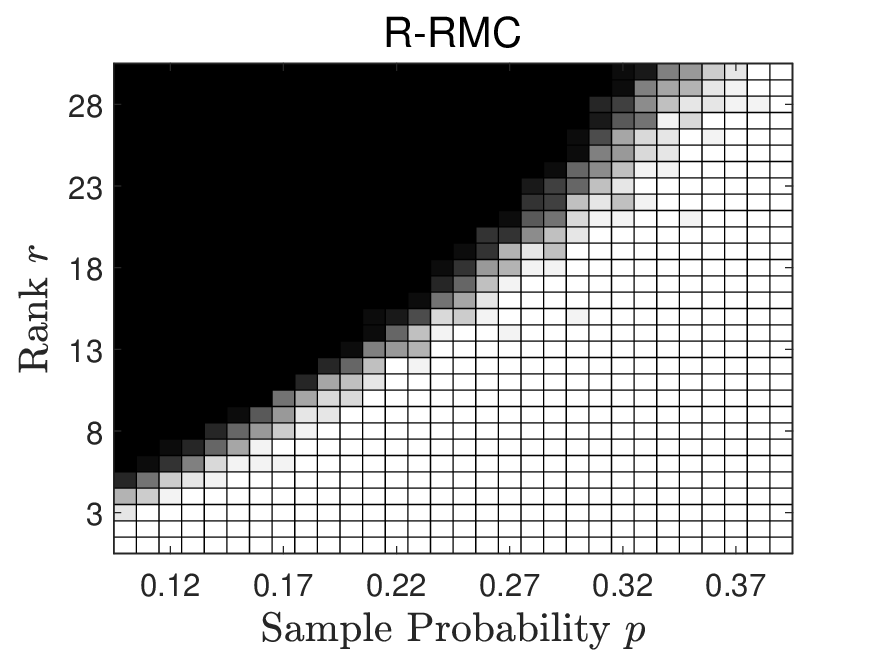}} \hfill
\subfloat{\includegraphics[width=.5\linewidth]{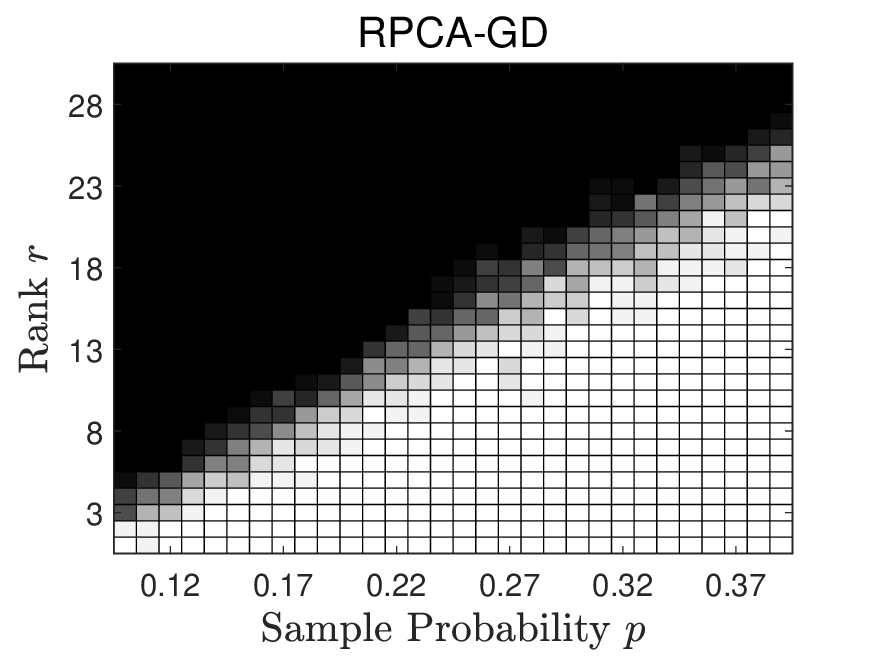}}
\caption{Empirical phase transition comparisons for \textbf{RMC-SOFT}, \textbf{RMC-SCAD}, \textbf{R-RMC}, and \textbf{RPCA-GD}: sample probability $p$ vs. rank $r$.} \label{fig:PT2}
\end{figure}

Tests are conducted for $n=1000,~\alpha=0.1$, $p\in\{0.1,0.11,\cdots,0.39\}$, and $r\in\{1,\cdots,30\}$. For each test instance, we  generate the ground truth matrix $L^\star$ via $L^{\star} = X^{\star}\lb Y^{\star}\rb^T$, where $X^{\star}\in\mathbb{R}^{n\times r}, Y^{\star}\in\mathbb{R}^{n\times r}$ are standard Gaussian matrices. Each location is then included in $\Omega$ independently with probability $p$. Finally, with probability $\alpha$, every $(i,j)\in\Omega$ is corrupted with an  outlier $S^{\star}_{ij}$ chosen uniformly from $\lsb-2\ln L^{\star}\rn_{\infty},2\ln L^{\star}\rn_{\infty}\rsb$. In all the experiments, we run an algorithm till convergence, and record its percentage of successful recoveries among 20 independent trails. Each algorithm is  considered to have successfully recovered the ground truth matrix if at convergence, the estimated low-rank matrix $L_{T}$ satisfies: ${\ln L_{T} - L^{\star}\rn_{\infty}}/{\ln L^{\star}\rn_{\infty}} \leq 10^{-3}.$ 
The phase transition results are presented in Fig.~\ref{fig:PT2}. It can be observed that  \textbf{RMC-SOFT}, \textbf{RMC-SCAD} and \textbf{R-RMC} overall exhibit similar performance, which is superior to \textbf{RPCA-GD}.

\section{Proof Outline}\label{proof_outline}

We first provide an outline of the proof, and defer the details to later sections. As already mentioned, we only consider the case $n_1=n_2=n$ for simplicity. It is straightforward to extend the analysis to the rectangular case. 


First the update of $L^t$ in  Alg.~\ref{Alg1}
can be written in the following  form:
\begin{align*}
L^{t+1} = \P_r\lb L^t-p^{-1}\Po\lb L^t+S^t-M\rb\rb
=\P_r \lb L^{\star}-p^{-1}\Po\lb S^{t}-S^{\star}\rb+\Ho\lb L^{t}-L^{\star}\rb\rb,
\end{align*}
where $\Ho:=\I-\frac1p\Po$. Thus,  the whole  algorithm can be written as: Start from $L^0=0$ and repeat
$$
\left\{\begin{array}{ccl}
S^{t} &= & \T_{\xi^{t}}\lb\Po\lb M-L^{t}\rb\rb, \\
L^{t+1} &= & \P_r\lb L^{\star}-p^{-1}\Po\lb S^{t}-S^{\star}\rb+\Ho\lb L^{t}-L^{\star}\rb\rb
\end{array}\right.~(t\geq 0).
$$
In order to establish its convergence, inspired by \cite{Ma2019,Ding2020}, we introduce the following auxiliary leave-one-out sequences for $i\in\{1,\dots,2n\}$: Starting from $L^{0,i}=0$,
$$
\left\{\begin{array}{ccl}
S^{t,i} &=& \T_{\xi^{t}}\lb\Po^{(-i)}\lb M-L^{t,i}\rb\rb, \\
L^{t+1,i}&=&\P_r\lb L^{\star}-p^{-1}\Po^{(-i)}\lb S^{t,i}-S^{\star}\rb+\Ho^{(-i)}\lb L^{t,i}-L^{\star}\rb\rb
\end{array}\right.~(t\geq 0),
$$
where 
$$
\lb\Po^{(-i)}(Z)\rb_{jk} = \left\{\begin{array}{cc}
\delta_{jk} Z_{jk} & j\neq i \\
0 & j = i
\end{array}\right.,~
\lb\Ho^{(-i)}(Z)\rb_{jk} = \left\{\begin{array}{cc}
(1-\frac1p\delta_{jk})Z_{jk} & j\neq i \\
0 & j = i
\end{array}\right.
$$
if $1\leq i\leq n$, and
$$
\lb\Po^{(-i)}(Z)\rb_{jk} = \left\{\begin{array}{cc}
\delta_{jk} Z_{jk} & k\neq i-n \\
0 & k = i-n
\end{array}\right.,~
\lb\Ho^{(-i)}(Z)\rb_{jk} = \left\{\begin{array}{cc}
(1-\frac1p\delta_{jk})Z_{jk} & k\neq i-n \\
0 & k = i-n
\end{array}\right.
$$
if $n+1\leq i\leq 2n$. 

Note that the original sequence of the proposed algorithm can then be written in a unified way, with the definition $\Po^{(-0)}:=\Po$, and $\Ho^{(-0)}:=\Ho$. We can see that by construction, $\{S^{t,i}\}_{t=0}^{\infty}$ and $\{L^{t,i}\}_{t=0}^{\infty}$ are independent with respect to the random variables $\delta_{ij}$ on the $i$-th row if $1\leq i\leq n$, and the $(i-n)$-th column if $n+1\leq i\leq 2n$. Such a decoupling of the statistical dependence plays a key role in proving Theorem~\ref{thm:noiseless} in an inductive way. More notations should be introduced in order to present the induction hypotheses.

For $i\in\{0,1,\cdots,2n\}$, letting $E_1^{t,i}:=p^{-1}\Po^{(-i)}\lb S^{t,i}-S^{\star}\rb$, and $E_2^{t,i}:=\Ho^{(-i)}\lb L^{t,i}-L^{\star}\rb$, the update for the low-rank part  can be written as:
$$
L^{t+1,i}=\P_r\lb L^{\star}-E_1^{t,i}+E_2^{t,i}\rb:=\P_r\lb L^{\star}+E^{t,i}\rb,
$$
which can be deemed as a perturbation of $L^{\star}$. Denote by  $U^{t+1,i}\Si^{t+1,i}\lb V^{t+1,i}\rb^T$ the SVD of $L^{t+1,i}$. The deviation between 
$$
F^{t+1,i} := \left[\begin{array}{c}
U^{t+1,i} \\
V^{t+1,i}
\end{array}\right]~\text{and}~
F^{\star} := \left[\begin{array}{c}
U^{\star} \\
V^{\star}
\end{array}\right]
$$
can be measured by
$$
\min_{R\in\mathcal{O}(r)}\ln F^{t+1,i}-F^{\star}R\rn_{\mathrm{F}},
$$
where $\mathcal{O}(r):=\{R\in\mathbb{R}^{r\times r}~|~R^TR=I_r\}$. This is known as the orthogonal Procrustes problem \cite{Ma2019}, and one optimal rotation matrix, denoted $G^{t+1,i}$, can be obtained by computing the SVD of $H^{t+1,i}:=\frac12\lb F^{\star}\rb^TF^{t+1,i}=A^{t+1,i}\widetilde{\Si}^{t+1,i}\lb B^{t+1,i}\rb^T$, and setting $G^{t+1,i}:=A^{t+1,i}\lb B^{t+1,i}\rb^T$. We further define
$\De^{t+1,i} := F^{t+1,i}-F^{\star}G^{t+1,i}$. 

When $\ln E^{t,i} \rn_2$ is small enough, $\|L^{t+1,i}-L^{\star}\|_{\infty}$ can be bounded by $\ln\De^{t+1,i}\rn_{2,\infty}$, as stated in the following lemma. This is an extension of \cite[Lemma 19]{Ding2020} to general matrices. For completeness, we include its proof in Appendix~\ref{sec:proof_L_infty}.

\begin{lemma}\label{lem:L_infinity}
Let $L^{\star} = U^{\star}\Sigma^{\star}(V^{\star})^T$ be the SVD, where $U^{\star}\in\mathbb{R}^{n\times r}, \Sigma^{\star}\in \mathbb{R}^{r \times r}, V^{\star}\in\mathbb{R}^{n\times r}$. Consider $L=\mathcal{P}_r\left(L^{\star}+E\right)$ for some perturbation matrix $E$. Let $L = U\Sigma V^T$ be its SVD. Define $F^{\star}=[(U^{\star})^T~(V^{\star})^T]^T$, and $F=[U^T~V^T]^T$. Suppose the rank-$r$ SVD of $\left(F^{\star}\right)^TF$ is $A\widetilde{\Si} B^T$. Set $G=AB^T$, and $\Delta=F-F^{\star}G$. If $\|E\|_{2} \leq\frac12\sigma_r^{\star}$, then
$$
\begin{aligned}
\left\|L-L^{\star}\right\|_{\infty} \leq & \|\Delta\|_{2, \infty}\lb\|F\|_{2, \infty}+\|F^{\star}\|_{2, \infty}\rb\|\Sigma\|_{2}+(3+4\kappa)\left\|F^{\star}\right\|_{2, \infty}^2\|E\|_{2}.
\end{aligned}
$$
\end{lemma}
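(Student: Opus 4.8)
The plan is to write $L-L^\star$ as an ``aligned'' principal term governed by $G\Sigma G^T-\Sigma^\star$ plus two cross terms linear in $\Delta$, bound the cross terms by an entrywise inequality, and reduce the principal term to a spectral estimate on $G\Sigma G^T-\Sigma^\star$.

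First I would substitute $U=U^\star G+\Delta_U$ and $V=V^\star G+\Delta_V$ (with $\Delta=[\Delta_U^T\ \Delta_V^T]^T$) into $L=U\Sigma V^T$; since $A,B\in\mathcal O(r)$, $G=AB^T$ satisfies $GG^T=G^TG=I_r$, and after collecting terms,
$$
L-L^\star=U^\star\big(G\Sigma G^T-\Sigma^\star\big)(V^\star)^T+U^\star G\,\Sigma\,\Delta_V^T+\Delta_U\,\Sigma\,V^T .
$$
Using $\|ABC\|_\infty\le\|A\|_{2,\infty}\|B\|_2\|C^T\|_{2,\infty}$ together with the fact that right multiplication by the orthogonal $G$ preserves row norms — so $\|U^\star G\|_{2,\infty}=\|U^\star\|_{2,\infty}\le\|F^\star\|_{2,\infty}$, while $\|\Delta_U\|_{2,\infty},\|\Delta_V\|_{2,\infty}\le\|\Delta\|_{2,\infty}$ and $\|V\|_{2,\infty}\le\|F\|_{2,\infty}$ — the two cross terms together are at most $\|\Delta\|_{2,\infty}\big(\|F\|_{2,\infty}+\|F^\star\|_{2,\infty}\big)\|\Sigma\|_2$, which is the first term of the claimed bound. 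For the principal term, $\|U^\star(G\Sigma G^T-\Sigma^\star)(V^\star)^T\|_\infty\le\|F^\star\|_{2,\infty}^2\,\|G\Sigma G^T-\Sigma^\star\|_2$, so it remains to show $\|G\Sigma G^T-\Sigma^\star\|_2\le(3+4\kappa)\|E\|_2$.

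For this spectral estimate I would use $\Sigma=U^TLV$, substitute $U^\star=UG^T-\Delta_UG^T$ and $V^\star=VG^T-\Delta_VG^T$, and regroup to obtain
$$
G\Sigma G^T-\Sigma^\star=(UG^T)^T(L-L^\star)(VG^T)+(UG^T)^TL^\star\,\Delta_VG^T+G\,\Delta_U^TL^\star\,V^\star ,
$$
so that, since $UG^T$ and $VG^T$ have orthonormal columns and $\|L^\star\|_2=\sigma_1^\star$,
$$
\|G\Sigma G^T-\Sigma^\star\|_2\le\|L-L^\star\|_2+\sigma_1^\star\big(\|\Delta_U\|_2+\|\Delta_V\|_2\big).
$$
Two ingredients close the gap. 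First, because $L=\mathcal P_r(L^\star+E)$ and $\sigma_{r+1}(L^\star+E)\le\sigma_{r+1}(L^\star)+\|E\|_2=\|E\|_2$ by Weyl, we get $\|L-(L^\star+E)\|_2\le\|E\|_2$, hence $\|L-L^\star\|_2\le2\|E\|_2$; moreover $\sigma_r(L^\star+E)\ge\sigma_r^\star-\|E\|_2\ge\tfrac12\sigma_r^\star>0$, so $L$ has rank exactly $r$ and all the SVD objects above are well defined. Second, a Wedin $\sin\Theta$ bound comparing the singular subspaces of $L^\star+E$ (equivalently of $L$) with those of $L^\star$, whose relevant eigen-gap is $\sigma_r^\star$, gives $\|\Delta_U\|_2,\|\Delta_V\|_2=O(\|E\|_2/\sigma_r^\star)$; multiplying by $\sigma_1^\star$ produces the factor $\kappa$, and a careful accounting of the Weyl and Wedin constants under the hypothesis $\|E\|_2\le\tfrac12\sigma_r^\star$ yields the stated constant $3+4\kappa$.

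The hard part will be this spectral estimate, specifically handling the asymmetry that is absent from the positive-definite setting of \cite[Lemma 19]{Ding2020}: one must check that the single Procrustes rotation $G$ coming from the SVD of $(F^\star)^TF$ — the one making $(F^\star)^TFG^T\succeq0$ — simultaneously aligns $U$ with $U^\star$ and $V$ with $V^\star$ (which is exactly the joint-subspace form of Wedin's theorem), and one must bound the genuinely new cross terms $(UG^T)^TL^\star\Delta_VG^T$ and $G\,\Delta_U^TL^\star V^\star$ while keeping the final constant linear in $\kappa$. Everything else is routine bookkeeping with the $\|ABC\|_\infty$ inequality, Weyl's inequality, and the orthogonality of $G$.
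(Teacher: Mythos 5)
Your decomposition of $L-L^{\star}$ is algebraically the same as the paper's: the paper writes $L-L^{\star}=R_1+R_2+R_3+R_4$ with $R_1=\Delta_U\Sigma V^T$, $R_2=(U^{\star}G)\Sigma\Delta_V^T$, and $R_3+R_4=U^{\star}\left(G\Sigma G^T-\Sigma^{\star}\right)(V^{\star})^T$, and bounds the two cross terms exactly as you do. The only real divergence is how the spectral factor $\left\|G\Sigma G^T-\Sigma^{\star}\right\|_2$ is handled. The paper splits it as $G(\Sigma-\Sigma^{\star})G^T+(G\Sigma^{\star}G^T-\Sigma^{\star})$, controls the first piece by Weyl's inequality ($\|\Sigma-\Sigma^{\star}\|_2\leq\|E\|_2$, contributing the ``$3$''... more precisely the ``$1$'' of $1+2+4\kappa$), and controls the commutator $\|G\Sigma^{\star}-\Sigma^{\star}G\|_2\leq(2+4\kappa)\|E\|_2$ by applying \cite[Lemma 1]{Ding2020} (Lemma~\ref{lem:perturb_gt}) to the symmetrized matrices $\widehat{L}$ and $\widehat{L^{\star}}$ — this symmetrization is also exactly how the ``joint alignment'' issue you flag is resolved, so that part is off-the-shelf (Lemma~\ref{lem:op} already bounds $\|\Delta\|_2\leq\frac{4}{\sigma_r^{\star}}\|E\|_2$, which dominates both $\|\Delta_U\|_2$ and $\|\Delta_V\|_2$ simultaneously). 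Your alternative route via $\Sigma=U^TLV$ is valid, but the accounting you defer does not reach the stated constant: you get $\|L-L^{\star}\|_2\leq 2\|E\|_2$ and $\sigma_1^{\star}(\|\Delta_U\|_2+\|\Delta_V\|_2)\leq 2\sigma_1^{\star}\|\Delta\|_2\leq 8\kappa\|E\|_2$, i.e.\ a final constant $2+8\kappa$, which exceeds $3+4\kappa$ for every $\kappa\geq 1$, and there is no obvious way to replace $\|\Delta_U\|_2+\|\Delta_V\|_2\leq 2\|\Delta\|_2$ by $\|\Delta\|_2$. This is a quantitative shortfall rather than a logical one — the lemma with constant $2+8\kappa$ would still suffice downstream, where only $3+4\kappa\leq 7\kappa$ is used — but as written your proof does not establish the constant claimed in the statement; to get $3+4\kappa$ you should adopt the paper's Weyl-plus-commutator split of the principal term.
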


Once we have the bound for $\|L^{t+1,i}-L^{\star}\|_{\infty}$, the entrywise error for the outlier estimation can also be controlled, as stated in the following lemma. The proof of this lemma is deferred to Appendix~\ref{sec:proof_thresh}.

\begin{lemma}\label{lem:thresh}
Suppose
$$
\frac{\mu r}{n}\sigma_1^{\star}\leq\beta\leq C_{\emph{init}}\cdot\frac{\mu r}{n}\sigma_1^{\star},
$$
$\xi^{t} = \beta\cdot\gamma^{t}$, and the thresholding function satisfies the properties \textup{\labelcref{P1,P2,P3}}. For $i\in\{0,1,\cdots,2n\}$, if
$$
\ln L^{t,i}-L^{\star} \rn_{\infty}\leq \lb\frac{\mu r}{n}\sigma_{1}^{\star}\rb\gamma^t,
$$
then
$$
\text{Supp}\lb S^{t,i}\rb\subseteq\Omega^{(-i)}\cap\Omega_{S^{\star}},~\ln\Po^{(-i)}\lb S^{t,i}-S^{\star}\rb\rn_{\infty}\leq C_{\emph{thresh}}\cdot\lb\frac{\mu r}{n}\sigma_{1}^{\star}\rb\gamma^t.
$$
Here, $\Omega^{(-i)}=\Omega$ if $i=0$, $\Omega^{(-i)}$ is $\Omega$ without the indices from the $i$-th row if $1\leq i\leq n$,  $\Omega^{(-i)}$ is $\Omega$ without the indices from the $(i-n)$-th column if $n+1\leq i\leq 2n$, and $C_{\emph{thresh}}:=(K+B)\cdot C_{\emph{init}}$. 
\end{lemma}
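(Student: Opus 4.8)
\textbf{Proof proposal for Lemma~\ref{lem:thresh}.} The plan is to analyze the thresholding step entrywise, separating the index set $\Omega^{(-i)}$ into the outlier support $\Omega_{S^{\star}}$ and its complement, and to use the three properties \labelcref{P1,P2,P3} to control each piece. Fix $(j,k)\in\Omega^{(-i)}$ and write $z_{jk} := \big(\Po^{(-i)}(M-L^{t,i})\big)_{jk} = S^{\star}_{jk} + \big(L^{\star}-L^{t,i}\big)_{jk}$, so that $S^{t,i}_{jk} = \T_{\xi^t}(z_{jk})$. By hypothesis $|(L^{\star}-L^{t,i})_{jk}| \leq \frac{\mu r}{n}\sigma_1^{\star}\gamma^t \leq \beta\gamma^t = \xi^t$, since $\beta \geq \frac{\mu r}{n}\sigma_1^{\star}$.

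First I would establish the support claim. If $(j,k)\notin\Omega_{S^{\star}}$, then $S^{\star}_{jk}=0$, so $|z_{jk}| = |(L^{\star}-L^{t,i})_{jk}| \leq \xi^t$, and property~\labelcref{P1} gives $\T_{\xi^t}(z_{jk}) = 0$. Hence $\text{Supp}(S^{t,i})\subseteq\Omega^{(-i)}\cap\Omega_{S^{\star}}$. In particular $\Po^{(-i)}(S^{t,i}-S^{\star})$ is supported on $\Omega^{(-i)}\cap\Omega_{S^{\star}}$ as well, so it suffices to bound $|S^{t,i}_{jk}-S^{\star}_{jk}|$ for $(j,k)$ in that set.

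Next, for $(j,k)\in\Omega^{(-i)}\cap\Omega_{S^{\star}}$, I would bound $|\T_{\xi^t}(z_{jk}) - S^{\star}_{jk}|$ by inserting $\T_{\xi^t}(S^{\star}_{jk})$ and $S^{\star}_{jk}$ as intermediate quantities and using the triangle inequality together with properties \labelcref{P2,P3}:
\begin{align*}
\big|\T_{\xi^t}(z_{jk}) - S^{\star}_{jk}\big|
&\leq \big|\T_{\xi^t}(z_{jk}) - \T_{\xi^t}(S^{\star}_{jk})\big| + \big|\T_{\xi^t}(S^{\star}_{jk}) - S^{\star}_{jk}\big| \\
&\leq K\,\big|z_{jk} - S^{\star}_{jk}\big| + B\xi^t
= K\,\big|(L^{\star}-L^{t,i})_{jk}\big| + B\xi^t,
\end{align*}
where the first term uses the $K$-Lipschitz property \labelcref{P2} and the second uses \labelcref{P3} with $\lambda = \xi^t$. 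Now bound $|(L^{\star}-L^{t,i})_{jk}| \leq \frac{\mu r}{n}\sigma_1^{\star}\gamma^t$ and $\xi^t = \beta\gamma^t \leq C_{\emph{init}}\cdot\frac{\mu r}{n}\sigma_1^{\star}\gamma^t$, so the right-hand side is at most $(K+B\,C_{\emph{init}})\,\frac{\mu r}{n}\sigma_1^{\star}\gamma^t \leq (K+B)\,C_{\emph{init}}\,\frac{\mu r}{n}\sigma_1^{\star}\gamma^t = C_{\emph{thresh}}\cdot\frac{\mu r}{n}\sigma_1^{\star}\gamma^t$, using $C_{\emph{init}}\geq 1$. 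Taking the maximum over all $(j,k)$ yields the claimed $\ell_\infty$ bound.

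I do not anticipate a genuine obstacle here; the lemma is essentially a direct consequence of the defining properties of the thresholding class, and the only care needed is bookkeeping: ensuring the threshold parameter $\xi^t$ is chosen large enough relative to the current $\ell_\infty$ error (so that \labelcref{P1} kills the off-support entries) yet still comparable to $\frac{\mu r}{n}\sigma_1^{\star}\gamma^t$ (so the constant $C_{\emph{thresh}}$ comes out right), which is exactly what the sandwich condition $\frac{\mu r}{n}\sigma_1^{\star}\leq\beta\leq C_{\emph{init}}\cdot\frac{\mu r}{n}\sigma_1^{\star}$ guarantees. The mild subtlety worth stating explicitly is that this argument applies uniformly for every $i\in\{0,1,\dots,2n\}$ with no change, since the leave-one-out projection $\Po^{(-i)}$ only restricts the index set and never alters values on the entries it keeps.
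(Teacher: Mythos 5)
Your proof is correct and follows essentially the same approach as the paper's: you establish the support containment via property~\labelcref{P1} on $\Omega^{(-i)}\setminus\Omega_{S^{\star}}$, and then bound the on-support error by inserting $\T_{\xi^t}(S^{\star}_{jk})$ and applying the triangle inequality with properties~\labelcref{P2} and~\labelcref{P3}. The only cosmetic difference is in the final bookkeeping: the paper first bounds $|(L^{\star}-L^{t,i})_{kl}|\leq\xi^t$ to obtain $(K+B)\xi^t$, whereas you keep the two terms separate and get $(K+B\,C_{\emph{init}})\frac{\mu r}{n}\sigma_1^{\star}\gamma^t$ before relaxing to $(K+B)\,C_{\emph{init}}\frac{\mu r}{n}\sigma_1^{\star}\gamma^t$; both give the same constant.
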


Bounding $\ln\De^{t+1,i}\rn_{2,\infty}$ is the crux of our proofs, and we achieve this by establishing proximity between different sequences. For $i,m\in\{0,1,\cdots,2n\}$, the deviation between $F^{t+1,i}$ and $F^{t+1,m}$ can be measured by
$$
\min_{R\in\mathcal{O}(r)}\ln F^{t+1,i}-F^{t+1,m}R\rn_{\mathrm{F}}.
$$
Let $G^{t+1,i,m}$ be the optimal rotation matrix, and define $D^{t+1,i,m} := F^{t+1,i}-F^{t+1,m}G^{t+1,i,m}$. We will show that the following induction hypotheses hold for all the sequences.

\begin{theorem}\label{thm:induction}
Under the assumptions of Theorem~\ref{thm:noiseless}, for iteration $1\leq t\leq T$, where $T=n^{O(1)}$, 
\begin{subequations}
\begin{align}
    \max_{0\leq i\leq 2n}\ln E^{t-1,i} \rn_2:=\ln E^{t-1,\infty}\rn_2 &\leq \frac{1}{C_0}\cdot\lb\frac{1}{\kappa\mu r}\sigma_{r}^{\star}\rb\gamma^t\label{eq:op}\\
    \max_{0\leq i\leq 2n}\ln\De^{t,i}\rn_{2,\infty}:=\ln \De^{t,\infty}\rn_{2,\infty}&\leq\frac{5C_{10}}{C_0}\cdot\lb\frac{1}{\mu r}\sqrt{\frac{\mu r}{n}}\rb\gamma^t\label{eq:l_2_infty}\\
    \max_{0\leq i,m\leq 2n}\ln D^{t,i,m}\rn_{\mathrm{F}}:=\ln D^{t,\infty}\rn_{\mathrm{F}}&\leq\frac{8C_{10}}{C_0}\cdot\lb\frac{1}{\kappa \mu r}\sqrt{\frac{\mu r}{n}}\rb\gamma^t\label{eq:proximity}
\end{align}
\end{subequations}
hold true with high probability, for some positive constants $C_{10}$ and $C_0$ satisfying \eqref{eq:constants}.
\end{theorem}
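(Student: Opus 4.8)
The plan is to prove \eqref{eq:op}, \eqref{eq:l_2_infty} and \eqref{eq:proximity} \emph{simultaneously} for all $2n+1$ indices $i\in\{0,1,\dots,2n\}$ by induction on $t$, deferring a union bound over $i$ and over $t\le T=n^{O(1)}$ to the very end. For the base case ($t=1$), $L^{0,i}=0$ makes $E^{0,i}=-E_1^{0,i}+E_2^{0,i}$ with $S^{0,i}=\T_\beta(\Po^{(-i)}(M))$ explicit: the outlier part $E_1^{0,i}$ is controlled via Lemma~\ref{lem:thresh} and the $2\alpha p$-sparsity of $\Po(S^\star)$ (a matrix with at most $2\alpha p n$ nonzeros per row and column has spectral norm at most $2\alpha p n$ times its $\infty$-norm), and $\|E_2^{0,i}\|_2=\|\Ho^{(-i)}(L^\star)\|_2$ by incoherence-aware matrix concentration; Weyl and Davis--Kahan then give \eqref{eq:l_2_infty} and \eqref{eq:proximity} at $t=1$. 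The mechanism to be invoked repeatedly is this: whenever \eqref{eq:l_2_infty} holds at level $t$, Lemma~\ref{lem:L_infinity} combined with \eqref{eq:op} at level $t$ yields $\|L^{t,i}-L^\star\|_\infty\le(\mu r/n)\sigma_1^\star\gamma^t$, so Lemma~\ref{lem:thresh} applies and produces $\text{Supp}(S^{t,i})\subseteq\Omega^{(-i)}\cap\Omega_{S^\star}$ together with $\|\Po^{(-i)}(S^{t,i}-S^\star)\|_\infty\le C_{\text{thresh}}(\mu r/n)\sigma_1^\star\gamma^t$; this is how the outlier terms get folded into an otherwise matrix-completion-style analysis.

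For the inductive step, assume \eqref{eq:op}--\eqref{eq:proximity} at level $t$ and establish them at level $t+1$, in the order: spectral bound \eqref{eq:op}, then proximity \eqref{eq:proximity}, then $\ell_{2,\infty}$ bound \eqref{eq:l_2_infty}. For \eqref{eq:op} split $E^{t,i}=-E_1^{t,i}+E_2^{t,i}$. The outlier part obeys $\|E_1^{t,i}\|_2\le p^{-1}\cdot 2\alpha p n\cdot C_{\text{thresh}}(\mu r/n)\sigma_1^\star\gamma^t=2\alpha C_{\text{thresh}}\mu r\,\sigma_1^\star\gamma^t$, which the corruption bound $\alpha\le(c_{\text{outlier}}/\kappa^2\mu^2 r^2)(\gamma/C_{\text{thresh}})$ pushes below half of the budget $\sigma_r^\star\gamma^{t+1}/(C_0\kappa\mu r)$; this is precisely why $\alpha$ must scale like $1/(\kappa^2\mu^2 r^2)$. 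For the completion part $E_2^{t,i}=\Ho^{(-i)}(L^{t,i}-L^\star)$, use that $L^{t,i}$ is independent of the randomness on row/column $i$, express $L^{t,i}-L^\star$ through its factors and $\Delta^{t,i}$, and invoke incoherence-aware concentration (sharper than \cite[Lemma 21]{Ding2020}, in the spirit of \cite[Lemma 2]{Chen2015}; see the tools behind Lemmas~\ref{lem:bound1}~and~\ref{lem:P_Omega_AB}) to bound it by the leading term $\sqrt{n\log n/p}\,\|\Delta^{t,i}\|_{2,\infty}$ (up to scaling by $\Sigma^{t,i}$) plus lower-order terms in $\|L^{t,i}-L^\star\|_\infty$ and $\|L^{t,i}-L^\star\|_{\mathrm F}\lesssim\sqrt r\,\|E^{t-1,i}\|_2$; the sampling condition $p\gtrsim\gamma^{-2}\kappa^4\mu^3 r^3\log n/n$ keeps this within budget. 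The index $i=0$ is handled by first replacing $L^{t,0}$ with a leave-one-out iterate $L^{t,j}$ at the cost of $\|\Ho(L^{t,0}-L^{t,j})\|_2$, which \eqref{eq:proximity} controls.

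The heart of the argument is closing \eqref{eq:proximity} and \eqref{eq:l_2_infty} at level $t+1$ \emph{together}: rather than bounding $\|\Delta^{t+1,\infty}\|_{2,\infty}$ first and then deriving $\|D^{t+1,\infty}\|_{\mathrm F}$ from it (which costs extra powers of $\kappa,\mu,r$ in \cite{Ding2020}), I would derive two inequalities coupling the two quantities and solve the resulting $2\times 2$ recursion, which is what relaxes the distance-to-truth requirement and produces the improved sample complexity. For \eqref{eq:proximity}: for a fixed pair $i,m$, the operators $\Po^{(-i)}$ and $\Po^{(-m)}$ differ from $\Po$ only along one row/column each, so a one-step analysis of $\P_r$ on the nearby perturbations $L^\star+E^{t,i}$ and $L^\star+E^{t,m}$ (with \eqref{eq:op} placing both in the well-conditioned regime) gives $\|D^{t+1,i,m}\|_{\mathrm F}\le\rho\,\|D^{t,i,m}\|_{\mathrm F}+(\text{additive term})$ for some $\rho<1$, the additive term being the mismatch on the deleted row/column, bounded by $\|\Delta^{t,\infty}\|_{2,\infty}$ plus an outlier contribution on that line (Lemma~\ref{lem:thresh}, sparsity, and Property~\labelcref{P2} applied to $S^{t,i}-S^{t,m}$). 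For \eqref{eq:l_2_infty}: the $j$-th row of $\Delta^{t+1,i}$ is analyzed by comparing $L^{t+1,i}$ with the leave-one-out iterate that deleted exactly that row/column, which renders the relevant row of $E^{t,\cdot}$ independent of the factors and thus amenable to a per-row concentration bound; this gives $\|\Delta^{t+1,i}_{j,:}\|\lesssim\rho'\,\|\Delta^{t,\infty}\|_{2,\infty}+c\,\|D^{t+1,\infty}\|_{\mathrm F}+(\text{concentration/outlier terms})$, into which the Step-2 bound on $\|D^{t+1,\infty}\|_{\mathrm F}$ is substituted, and the constants $C_0,C_{10}$ are then chosen per \eqref{eq:constants} so that both bounds close with room to spare.

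I expect the main obstacle to be precisely these last two steps. First, the asymmetry $U^\star\neq V^\star$ introduces cross terms absent from the positive-definite analysis of \cite{Ding2020} — both in relating the several Procrustes rotations $G^{t+1,i}$, $G^{t+1,i,m}$ and in the row-wise perturbation identities for the stacked factor $F=[U^T~V^T]^T$ — which must each be bounded separately. Second, the outlier increments $S^{t,i}-S^\star$ and $S^{t,i}-S^{t,m}$ must be controlled on a single row/column \emph{without any randomness assumption on the outlier pattern}, forcing a careful joint use of Properties \labelcref{P1,P2,P3}: \labelcref{P1} pins down the support, \labelcref{P2} transfers row-wise control from $\Po^{(-i)}(M-L^{t,i})$ to $S^{t,i}$, and \labelcref{P3} absorbs large-magnitude outliers into the $B\lambda$ slack. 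Finally, the $2\times 2$ recursion must be arranged so that both contraction factors stay below $1$ at the improved sampling and corruption levels; a union bound over $i\in\{0,\dots,2n\}$ and $t\le T$ then completes the proof of Theorem~\ref{thm:induction}.
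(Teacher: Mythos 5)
Your proposal reproduces the paper's architecture: the same three coupled induction hypotheses, the same outlier mechanism (Lemma~\ref{lem:L_infinity} $\to$ Lemma~\ref{lem:thresh} $\to$ sparsity of $\Po(S^{\star})$, with \labelcref{P1,P2,P3} playing exactly the roles you assign them), the same Davis--Kahan-type control of $\|D^{t+1,0,m}\|_{\mathrm F}$ through $\|W^{t,m}F^{t+1,m}\|_{\mathrm F}$, and, crucially, the same key innovation of deriving two inequalities that couple $\|\Delta^{t+1,\infty}\|_{2,\infty}$ and $\|D^{t+1,\infty}\|_{\mathrm F}$ and solving the resulting $2\times 2$ system (the paper's \eqref{eq:De_induction} and \eqref{eq:D_induction}). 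You also correctly anticipate the two genuinely new difficulties: the asymmetry cross term (the paper's $T_3$) and the deterministic outlier control (the paper's Lemma~\ref{lem:P_Omega_AB} combined with \labelcref{P2} applied to $S^{t,m}-S^{t}$).

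There is, however, one step that would fail as justified. In Step~1 of the induction you propose to bound $\|E_2^{t,i}\|_2=\|\Ho^{(-i)}(L^{t,i}-L^{\star})\|_2$ by ``using that $L^{t,i}$ is independent of the randomness on row/column $i$'' together with fixed-matrix concentration in the spirit of \cite[Lemma 2]{Chen2015}. That independence is useless here: $\Ho^{(-i)}$ acts on \emph{all} rows and columns except the $i$-th, and $L^{t,i}$ is statistically dependent on every one of those sampling variables, so neither Lemma~\ref{lem:init} (which requires a fixed $Z$) nor a Bernstein argument conditional on $L^{t,i}$ is applicable; swapping $L^{t,0}$ for $L^{t,j}$ decouples only a single row or column, not the operator. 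The paper sidesteps this entirely with the \emph{uniform} bound of Lemma~\ref{lem:uniform}, valid simultaneously for all matrices of rank at most $2r$, applied to $L^{t,i}-L^{\star}$ and paying the factor $\sqrt{2rn\log n/p}\,\|L^{t,i}-L^{\star}\|_{\infty}$; this is where the entrywise bound from Lemma~\ref{lem:L_infinity} gets consumed and where part of the $r$-dependence of the sample complexity originates. Beyond this, your plan is a blueprint rather than a proof: the quantitative closing of the recursion hinges on bounds you only gesture at, in particular the decomposition \eqref{eq:decomposition} of $L^{t,m}-L^{t,i}$ into $D_U$, $S^{t,m,i}$, $D_V$ pieces, the Procrustes-commutator bound of Lemma~\ref{lem:perturb_S} for $\|S^{t,m,0}\|_{\mathrm F}$, and the deterministic inequalities of Lemmas~\ref{lem:bound1}~and~\ref{lem:bound2} that make the improved $\kappa,\mu,r$ dependence actually come out.
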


Once Theorem~\ref{thm:induction} is established, the proof of Theorem~\ref{thm:noiseless} can be proceeded as follows.

\begin{proof}[{\normalfont\textbf{Proof of Theorem~\ref{thm:noiseless}}}]
First consider the case $t=0$. Since 
$$
\ln L^{0}-L^{\star} \rn_{\infty} = \ln L^{\star} \rn_{\infty}\leq \frac{\mu r}{n} \sigma_1^{\star},
$$
by Lemma~\ref{lem:thresh},
$$
\text{Supp}\lb S^{0}\rb\subseteq\Omega_{S^{\star}},~\ln \Po\lb S^{0}-S^{\star}\rb\rn_{\infty}\leq C_{\text{thresh}}\cdot\frac{\mu r}{n}\sigma_{1}^{\star}.
$$
For the case $t\geq 1$, due to~\eqref{eq:op}, Lemma~\ref{lem:L_infinity} is applicable. Thus we have
\begin{align*}
\ln L^{t}-L^{\star} \rn_{\infty} \leq & \|\Delta^{t}\|_{2, \infty}\lb\|F^t\|_{2, \infty}+\|F^{\star}\|_{2, \infty}\rb\|\Sigma^{t}\|_{2}+(3+4\kappa)\left\|F^{\star}\right\|_{2, \infty}^2\|E^{t-1}\|_{2} \\
\leq & \frac{5C_{10}}{C_0}\lb\frac{1}{\mu r}\sqrt{\frac{\mu r}{n}}\rb\gamma^t\cdot\lb 2+\frac{5C_{10}}{C_0}\rb\sqrt{\frac{\mu r}{n}}\cdot\lb 1+\frac{1}{C_0}\rb\sigma_1^{\star}+\frac{\mu r}{n}\cdot\frac{7}{C_0}\lb\frac{1}{\mu r}\sigma_{r}^{\star}\rb\gamma^t\\
\leq & \frac{12C_{10}}{C_0}\cdot\frac{1}{\mu r}\lb\frac{\mu r}{n}\sigma_1^{\star}\rb\gamma^t\leq\lb\frac{\mu r}{n}\sigma_1^{\star}\rb\gamma^t,
\end{align*}
where the bounds \eqref{eq:op} and \eqref{eq:l_2_infty} are used in the second inequality, and the bounds of $C_{10}$ and $C_0$ (see \eqref{eq:constants}) are used in the last two inequalities. By Lemma~\ref{lem:thresh} again,
\begin{equation*}
\text{Supp}\lb S^{t}\rb\subseteq\Omega_{S^{\star}},~\ln \Po\lb S^{t}-S^{\star}\rb\rn_{\infty}\leq C_{\text{thresh}}\cdot\lb\frac{\mu r}{n}\sigma_{1}^{\star}\rb\gamma^{t},
\end{equation*}
and the proof is finished.\qedhere
\end{proof}

In the following two sections, we verify the induction hypotheses for $t=1$ and $t> 1$, respectively. Note that the latter case needs to be proved differently due to the more complicated statistical dependency in the iterations. The key differences between our proofs and the ones in \cite{Ding2020} are also discussed at the beginning of each section.

\section{Initialization and Base Case}\label{sec:base}

{\em In the base case, our proofs and the ones in \textup{\cite{Ding2020}} are mainly different in the following  two aspects: \begin{itemize} \item We use Lemma \ref{lem:init} instead of \textup{\cite[Lemma 21]{Ding2020}} to bound $\ln\Ho\lb L^{\star}\rb\rn_2$, and use Bernstein's inequality applied to vectors instead of scalars to bound $\ln e_m^T\Ho^{(-i)}\lb L^{\star}\rb V^{1,m} \rn_2$, yielding tighter bounds in terms of $\mu$ and $r$.
\item We need to bound the extra term $T_3$ appearing due to the asymmetry of $L^{\star}$, as well as those terms involving outliers.
\end{itemize}}
\noindent Next let us establish \cref{eq:op,eq:l_2_infty,eq:proximity} for $t=1$.
For $\forall i\in\{0,1,\cdots,2n\}$,
$$
\ln L^{0,i}-L^{\star} \rn_{\infty} = \ln L^{\star} \rn_{\infty}\leq \frac{\mu r}{n} \sigma_1^{\star}.
$$
It follows from Lemma~\ref{lem:thresh} that
$$
\text{Supp}\lb S^{0,i}\rb\subseteq\Omega^{(-i)}\cap\Omega_{S^{\star}},~\ln\Po^{(-i)}\lb S^{0,i}-S^{\star}\rb\rn_{\infty}\leq C_{\text{thresh}}\cdot\frac{\mu r}{n}\sigma_{1}^{\star}.
$$

\noindent\textbf{Step 1: The Bound For $\ln E^{0,\infty} \rn_2$.} First consider $\ln E_1^{0,i} \rn_2$. Recall from Assumption 3 that $\Po\lb S^{\star}\rb$ is $2\alpha p$-sparse. One has
\begin{equation}\label{eq:S_init_op}
\begin{aligned} 
\ln E_1^{0,i} \rn_2 = & \ln p^{-1}\Po^{(-i)}\lb S^{0,i}-S^{\star}\rb\rn_2\leq p^{-1}\cdot\lb2\alpha p n\rb \cdot C_{\text{thresh}}\frac{\mu r}{n}\sigma_{1}^{\star}\leq \frac{\gamma}{2C_0}\cdot\frac{1}{\kappa\mu r}\sigma_{r}^{\star},
\end{aligned}
\end{equation}
where the first inequality follows from applying Lemma~\ref{lem:S_op} to the $2\alpha p$-sparse matrix $\Po^{(-i)}\lb S^{0,i}-S^{\star}\rb$, and the second inequality holds if 
$$
\alpha\leq\frac{1}{4C_0}\frac{1}{\kappa^2\mu^2r^2}\cdot\frac{\gamma}{C_{\text{thresh}}}.
$$
For $\ln E_2^{0,i} \rn_2$, one has
\begin{equation}\label{eq:op_init}
\begin{aligned}
\ln E_2^{0,i} \rn_2 = \ln\Ho^{(-i)}\lb-L^{\star}\rb\rn_2 \leq & \ln\Ho\lb L^{\star}\rb\rn_2\\
\leq & c_4\lb \frac{\log n}{p}\ln L^{\star}\rn_{\infty}+\sqrt{\frac{\log n}{p}}\cdot\max\left\{\ln L^{\star}\rn_{2,\infty},\ln \lb L^{\star}\rb^T\rn_{2,\infty}\right\}\rb\\
\leq & c_4\lb \frac{\log n}{p}\frac{\mu r}{n}\sigma_1^{\star}+\sqrt{\frac{\log n}{p}}\sqrt{\frac{\mu r}{n}}\sigma_1^{\star}\rb\\
\leq & 2c_4\sqrt{\frac{\mu r\log n}{np}}\sigma_1^{\star}\leq \frac{\gamma}{2C_0}\cdot\frac{1}{\kappa\mu r}\sigma_{r}^{\star},
\end{aligned}
\end{equation}
where in the second inequality we use Lemma \ref{lem:init}  which holds with high probability for some universal constant $c_4>1$, and the last inequality holds if 
$$
p\geq\frac{16c_4^2C_0^2}{\gamma^2}\cdot\frac{\kappa^4\mu^3r^3\log n}{n}.
$$
Therefore,
\begin{equation}\label{eq:base_op}
\ln E^{0,i} \rn_2 \leq \ln E_1^{0,i} \rn_2+\ln E_2^{0,i} \rn_2\leq\frac{1}{C_0}\cdot\lb\frac{1}{\kappa\mu r}\sigma_{r}^{\star}\rb\gamma.
\end{equation}

\noindent\textbf{Step 2: The Bound For $\ln \De^{1,\infty} \rn_{2,\infty}$.} To this end, we need to consider the symmetrization of a general matrix. For a matrix $ M\in\mathbb{R}^{n\times n}$, its symmetrization is defined as 
$$
\widehat{M} := \left[\begin{array}{cc}
0   & M \\
M^T & 0
\end{array}\right].
$$
Based on this notion, it is easy to see that the eigen-decomposition of $\widehat{L^{\star}}$ is
$$
\widehat{L^{\star}}=\lb\frac{1}{\sqrt{2}}F^{\star}\rb\Si^{\star}\lb\frac{1}{\sqrt{2}}F^{\star}\rb^T+\lb\frac{1}{\sqrt{2}}\widetilde{F}^{\star}\rb\lb-\Si^{\star}\rb\lb\frac{1}{\sqrt{2}}\widetilde{F}^{\star}\rb^T,
$$
where $\widetilde{F}^{\star}:=[(-U^{\star})^T~(V^{\star})^T]^T$. The top-$r$ eigen-decomposition of $\widehat{L^{\star}}+\widehat{E^{0,i}}$ is $\frac12F^{1,i}\Si^{1,i}\lb F^{1,i}\rb^T$, and $\Si^{1,i}$ is invertible since $\ln\widehat{E^{0,i}}\rn_2=\ln E^{0,i}\rn_2$ is bounded by \eqref{eq:base_op}. Thus for $1\leq m\leq 2n$,
\begin{align*}
\De^{1,i}_{m,:} = &  e_m^T\lb F^{1,i}-F^{\star}G^{1,i}\rb \\
= & e_m^T\lb\lb\widehat{L^{\star}}+\widehat{E^{0,i}}\rb F^{1,i}\lb\Si^{1,i}\rb^{-1}-F^{\star}G^{1,i}\rb \\
= & e_m^TF^{\star}\Si^{\star}\lsb\frac12\lb F^{\star}\rb^T F^{1,i}\lb\Si^{1,i}\rb^{-1}-\lb\Si^{\star}\rb^{-1}G^{1,i}\rsb\\
& +\frac12e_m^T\widetilde{F}^{\star}\lb-\Si^{\star}\rb\lb \widetilde{F}^{\star}\rb^T F^{1,i}\lb\Si^{1,i}\rb^{-1} +e_m^T\widehat{E^{0,i}} F^{1,i}\lb\Si^{1,i}\rb^{-1} \\
= &\underbrace{e_m^TF^{\star}\Si^{\star}\lsb H^{1,i}\lb\Si^{\star}\rb^{-1}-\lb\Si^{\star}\rb^{-1}G^{1,i}\rsb}_{T_1}+\underbrace{e_m^TF^{\star}\Si^{\star}H^{1,i}\lsb\lb\Si^{1,i}\rb^{-1}-\lb\Si^{\star}\rb^{-1}\rsb}_{T_2}\\
&-\frac12\underbrace{e_m^T\widetilde{F}^{\star}\Si^{\star}\lb \widetilde{F}^{\star}\rb^T F^{1,i}\lb\Si^{1,i}\rb^{-1}}_{T_3}+\underbrace{e_m^T\widehat{E^{0,i}} F^{1,i}\lb\Si^{1,i}\rb^{-1}}_{T_4}.
\end{align*}

\noindent\textbf{$\bullet$ Bounding $T_1$.} Consider
\begin{align*}
R := & \Si^{\star}\lsb H^{1,i}\lb\Si^{\star}\rb^{-1}-\lb\Si^{\star}\rb^{-1}G^{1,i}\rsb = \lsb\Si^{\star}H^{1,i}-G^{1,i}\Si^{\star}\rsb\lb\Si^{\star}\rb^{-1}.
\end{align*}
Applying Lemma \ref{lem:perturb_gt} to $\widehat{L^{1,i}}$ and $\widehat{L^{\star}}$ yields
\begin{align*}
    \ln R \rn_2\leq\ln \Si^{\star}H^{1,i}-G^{1,i}\Si^{\star}\rn_2\cdot\ln\lb\Si^{\star}\rb^{-1}\rn_2\leq \frac{4}{C_0}\cdot\lb\frac{1}{\mu r}\rb\gamma.
\end{align*}
We then get
\begin{align*}
    \ln T_1 \rn_2\leq \frac{4}{C_0}\cdot\lb\frac{1}{\mu r}\sqrt{\frac{\mu r}{n}}\rb\gamma.
\end{align*}

\noindent\textbf{$\bullet$ Bounding $T_2$.} Note that
\begin{align*}
    \ln\lb\Si^{1,i}\rb^{-1}-\lb\Si^{\star}\rb^{-1}\rn_2=\max_{1\leq k\leq r} \left|\frac{1}{\sigma_k^{1,i}}-\frac{1}{\sigma_k^{\star}}\right|= \max_{1\leq k\leq r} \frac{\left|\sigma_k^{1,i}-\sigma_k^{\star}\right|}{\sigma_k^{1,i}\sigma_k^{\star}} \leq\frac{2}{C_0}\cdot\lb\frac{1}{\kappa\mu r}\rb\gamma\cdot\frac{1}{\sigma_r^{\star}}.
\end{align*}
Therefore,
\begin{align*}
    \ln T_2 \rn_2 \leq \frac{2}{C_0}\cdot\lb\frac{1}{\mu r}\sqrt{\frac{\mu r}{n}}\rb\gamma.
\end{align*}

\noindent\textbf{$\bullet$ Bounding $T_3$.} Note that
\begin{align*}
    \ln \lb\widetilde{F}^{\star}\rb^T F^{1,i} \rn_2 
    = & \ln \lb U^{\star}\rb^T U^{1,i}-\lb V^{\star}\rb^T V^{1,i} \rn_2 \\
     = & \ln \lb U^{\star}\rb^T U^{1,i}\lb G^{1,i}\rb^T-\lb V^{\star}\rb^T V^{1,i}\lb G^{1,i} \rb^T\rn_2 \\
    = & \ln \lb U^{\star}\rb^T \lsb U^{\star}+\lb U^{1,i}\lb G^{1,i}\rb^T-U^{\star}\rb\rsb-\lb V^{\star}\rb^T \lsb V^{\star}+\lb V^{1,i}\lb G^{1,i}\rb^T-V^{\star}\rb\rsb\rn_2 \\
    \leq &\ln\lb U^{\star}\rb^T \lb U^{1,i}-U^{\star}G^{1,i}\rb \rn_2+\ln\lb V^{\star}\rb^T \lb V^{1,i}-V^{\star}G^{1,i}\rb \rn_2\\
    \leq & 2\ln F^{1,i}-F^{\star}G^{1,i}\rn_2,
\end{align*}
where the last inequality holds since $\ln U^{1,i}-U^{\star}G^{1,i}\rn_2,\ln V^{1,i}-V^{\star}G^{1,i}\rn_2\leq\ln F^{1,i}-F^{\star}G^{1,i}\rn_2$. Then by Lemma \ref{lem:op},
\begin{align*}
\ln F^{1,i}-F^{\star}G^{1,i}\rn_2 \leq \frac{4}{\sigma_r^{\star}} \ln \widehat{L^{1,i}}-\widehat{L^{\star}}\rn_2.
\end{align*}
Moreover,
\begin{align*}
\ln \widehat{L^{1,i}}-\widehat{L^{\star}}\rn_2 
\leq & \ln \widehat{L^{1,i}} -\lb\widehat{L^{\star}}+\widehat{E^{0,i}}\rb\rn_2+\ln \lb\widehat{L^{\star}}+\widehat{E^{0,i}}\rb-\widehat{L^{\star}}\rn_2 \\
\leq &2\ln \lb\widehat{L^{\star}}+\widehat{E^{0,i}}\rb-\widehat{L^{\star}}\rn_2 = 2\ln E^{0,i}\rn_2, 
\end{align*}
where the second inequality follows from the fact that $\widehat{L^{1,i}}$ is the best rank-$2r$ approximation to $\widehat{L^{\star}}+\widehat{E^{0,i}}$. Therefore,
\begin{align*}
    \ln T_3 \rn_2 \leq \frac{20}{C_0}\cdot\lb\frac{1}{\mu r}\sqrt{\frac{\mu r}{n}}\rb\gamma.
\end{align*}

\noindent\textbf{$\bullet$ Bounding $T_4$.} The main task is to bound $\ln e_m^T\widehat{E^{0,i}}F^{1,i}\rn_2$. Due to the symmetrization scheme,
\begin{align*}
    \ln e_m^T\widehat{E^{0,i}}F^{1,i}\rn_2 = \ln e_m^TE^{0,i}V^{1,i}\rn_2 
\end{align*}
if $1\leq m\leq n$, and
\begin{align*}
    \ln e_m^T\widehat{E^{0,i}}F^{1,i}\rn_2 = \ln e_{(m-n)}^T\lb E^{0,i}\rb^T U^{1,i}\rn_2 
\end{align*}
if $n+1\leq m\leq 2n$. Due to our definition of the leave-one-out sequences, in both cases, 
\begin{align*}
    \ln e_m^T\widehat{E^{0,i}}F^{1,i}\rn_2 = 0
\end{align*}
when $i=m$. Thus, it suffices to consider the case $i\neq m$. Without loss of generality, we only consider the case  $0\leq i\leq n,~1\leq m\leq n$ and the proof can be done similarly for the other cases. For $\ln e_m^TE_1^{0,i}V^{1,i}\rn_2$, one has
\begin{align*}
    \ln e_m^TE_1^{0,i}V^{1,i}\rn_2 \leq &\ln e_m^TE_1^{0,i}V^{\star}G^{1,i}\rn_2+\ln e_m^TE_1^{0,i}\lb V^{1,i}-V^{\star}G^{1,i}\rb\rn_2 \\
    = & p^{-1}\lb\ln e_m^T\Po^{(-i)}\lb S^{0,i}-S^{\star}\rb V^{\star}\rn_2+\ln e_m^T\Po^{(-i)}\lb S^{0,i}-S^{\star}\rb \De_V^{1,i}\rn_2\rb\\
    \leq & p^{-1}\cdot(2\alpha pn)\cdot C_{\text{thresh}}\frac{\mu r}{n}\sigma_{1}^{\star}\cdot\lb\sqrt{\frac{\mu r}{n}}+\ln\De^{1,\infty}\rn_{2,\infty}\rb \\
    \leq & \frac{1}{2C_0}\frac{\gamma}{\kappa\mu r}\sigma_{r}^{\star}\lb\sqrt{\frac{\mu r}{n}}+\ln\De^{1,\infty}\rn_{2,\infty}\rb,
\end{align*}
where $\De_V^{1,i}:=V^{1,i}-V^{\star}G^{1,i}$, and the last inequality follows from the bound in \eqref{eq:S_init_op}. 

For $\ln e_m^TE_2^{0,i}V^{1,i}\rn_2$, one has
\begin{align*}
    e_m^TE_2^{0,i} V^{1,i} = e_m^TE_2^{0,i} V^{1,m}G^{1,i,m} +  e_m^TE_2^{0,i} \lb V^{1,i}-V^{1,m}G^{1,i,m} \rb.
\end{align*}
Such splitting allows us to utilize the property that $V^{1,m}$ is independent of the random variables on the $m$-th row. 
For $k=1,\cdots,n$, define 
\[
v_k = \lb 1-\delta_{mk}/p\rb L_{mk}^{\star} V_{k,:}^{1,m}.
\]
One has
\begin{align*}
\ln v_k\rn_2 \leq &\frac1p\ln L^{\star} \rn_{\infty} \ln V^{1,m} \rn_{2,\infty},\\
\left| \sum_{k=1}^n\mathbb{E} \lsb\ln v_k\rn_2^2\rsb \right| \leq & \sum_{k=1}^n \frac1p \lb L_{mk}^{\star}\rb^2\ln V_{k,:}^{1,m} \rn_2^2 \leq \frac1p\ln L^{\star} \rn_{2,\infty}^2\ln V^{1,m} \rn_{2,\infty}^2.
\end{align*}
By Lemma \ref{lem:bernstein},
\begin{equation}\label{eq:vector_bern}
\begin{aligned}
\ln e_m^TE_2^{0,i} V^{1,m}G^{1,i,m} \rn_2 = & \ln e_m^T\Ho^{(-i)}\lb L^{\star}\rb V^{1,m} \rn_2 = \ln \sum_{k=1}^n v_k\rn_2\\
    \leq & C_{10}\cdot\lb\sqrt{\frac{\log n}{p}}\ln L^{\star} \rn_{2,\infty}\ln V^{1,m} \rn_{2,\infty}+\frac{\log n}{p}\ln L^{\star} \rn_{\infty} \ln V^{1,m} \rn_{2,\infty}\rb \\
    \leq & C_{10}\cdot\lb\sqrt{\frac{\mu r\log n}{np}}\sigma_1^{\star}+\frac{\mu r\log n}{np}\sigma_1^{\star}\rb\lb \sqrt{\frac{\mu r}{n}}+\ln \De^{1,\infty} \rn_{2,\infty}\rb  \\
    \leq & \frac{C_{10}}{2c_4C_0}\frac{\gamma}{\kappa\mu r}\sigma_{r}^{\star}\lb\sqrt{\frac{\mu r}{n}} + \ln \De^{1,\infty} \rn_{2,\infty}\rb,
\end{aligned}
\end{equation}
where the last inequality follows from the bound in \eqref{eq:op_init}. Together with the bound
\begin{align*}
    \ln e_m^TE_2^{0,i} \lb V^{1,i}-V^{1,m}G^{1,i,m} \rb\rn_2 \leq \ln E_2^{0,i}\rn_2\ln D^{1,i,m} \rn_{\mathrm{F}}\leq \frac{\gamma}{2C_0}\sigma_r^{\star}\ln D^{1,i,m} \rn_{\mathrm{F}},
\end{align*}
we can get
\begin{align*}
    \ln T_4 \rn_2
    \leq \frac{\gamma}{C_0}\lsb\lb 1+\frac{C_{10}}{c_4}\rb\lb\frac{1}{\kappa\mu r}\sqrt{\frac{\mu r}{n}}+\ln\De^{1,\infty}\rn_{2,\infty}\rb+\ln D^{1,\infty} \rn_{\mathrm{F}}\rsb.
\end{align*}

Combining the bounds of $T_1$ to $T_4$ yields
\begin{equation}\label{eq:2_infty}
\begin{aligned}
    \ln \De^{1,\infty} \rn_{2,\infty}
    \leq \frac{17+\frac{C_{10}}{c_4}}{C_0}\lb\frac{1}{\mu r}\sqrt{\frac{\mu r}{n}}\rb\gamma
    +\frac{1+\frac{C_{10}}{c_4}}{C_0}\ln\De^{1,\infty}\rn_{2,\infty}+\frac{1}{C_0}\ln D^{1,\infty} \rn_{\mathrm{F}}.
\end{aligned} 
\end{equation}

\noindent\textbf{Step 3: The Bound For $\ln D^{1,\infty} \rn_{\mathrm{F}}$.} For $i\in\{0,1,\cdots,2n\},~m\in\{1,\cdots,2n\}$,
\begin{align*}
\left\|D^{1, i, m}\right\|_{\mathrm{F}} \leq & \left\|F^{1, i}-F^{1, m} G^{1,0,m}G^{1,i,0}\right\|_{\mathrm{F}} \\
\leq & \left\|F^{1, i}-F^{1,0} G^{1,i,0}\right\|_{\mathrm{F}}+\left\|\left(F^{1,0} - F^{1, m}G^{1,0,m}\right) G^{1,i,0}\right\|_{\mathrm{F}} = \ln D^{1,i,0} \rn_{\mathrm{F}}+\ln D^{1,0,m} \rn_{\mathrm{F}}.
\end{align*}
We only need to bound $\ln D^{1,0,m} \rn_{\mathrm{F}}$, since $\ln D^{1,m,0} \rn_{\mathrm{F}}=\ln D^{1,0,m} \rn_{\mathrm{F}}$. With a slight abuse of notation, let $A:=\widehat{L^{\star}}+\widehat{E^{0,m}}$ and $\widetilde{A}:=\widehat{L^{\star}}+\widehat{E^{0}}$. According to Weyl's inequality,
$$
|\lambda_r(A)-\sigma_r^{\star}|\leq\ln E^{0,m}\rn_2,~ 
|\lambda_{r+1}(A)|\leq\ln E^{0,m}\rn_2.
$$
Define $W^{0,m}:=\widehat{E^{0}}-\widehat{E^{0,m}}$. Then
$$
\delta:=\lambda_r(A)-\lambda_{r+1}(A)\geq \sigma_r^{\star}-2\ln E^{0,m}\rn_2 > \ln W^{0,m}\rn_2,
$$
where the last inequality is due to the bound of $\|E^{0,\infty}\|_2$. Applying Lemma \ref{lem:perturb_F}, we can get
\begin{equation}\label{eq:WF_init}
\begin{aligned}
\ln D^{1,0,m}\rn_{\mathrm{F}} \leq \frac{\sqrt{2}\ln W^{0,m} F^{1,m}\rn_{\mathrm{F}}}{\delta-\ln W^{0,m}\rn_2}\leq &\frac{2}{\sigma_r^{\star}}\ln W^{0,m} F^{1,m}\rn_{\mathrm{F}} \\
\leq &\frac{2}{\sigma_r^{\star}}\lb \ln\lb E^{0}-E^{0,m}\rb V^{1,m}\rn_{\mathrm{F}}+\ln\lb E^{0}-E^{0,m}\rb^T U^{1,m}\rn_{\mathrm{F}}\rb.
\end{aligned}
\end{equation}
Next we only derive the bound for $\ln\lb E^{0}-E^{0,m}\rb V^{1,m}\rn_{\mathrm{F}}$, and the bound for $\ln\lb E^{0}-E^{0,m}\rb^T U^{1,m}\rn_{\mathrm{F}}$ can be obtained similarly. Note that
\begin{align*}
&\ln\lb E^{0}-E^{0,m}\rb V^{1,m}\rn_{\mathrm{F}} \\
= &\ln \lsb p^{-1}\Po^{(-m)}\lb S^{0,m}-S^{\star} \rb-p^{-1}\Po\lb S^{0}-S^{\star}\rb+ \lb\Ho^{(-m)}-\Ho\rb\lb L^{\star}\rb \rsb V^{1,m}\rn_{\mathrm{F}}\\
\leq & p^{-1}\ln\lsb\Po^{(-m)}\lb S^{0,m}-S^{0}\rb+\lb\Po^{(-m)}-\Po\rb\lb S^{0}- S^{\star}\rb\rsb V^{1,m}\rn_{\mathrm{F}}
+\ln\lb\Ho-\Ho^{(-m)}\rb\lb L^{\star}\rb V^{1,m}\rn_{\mathrm{F}}\\
= &\underbrace{p^{-1}\ln\lb\Po-\Po^{(-m)}\rb\lb S^{0}- S^{\star}\rb V^{1,m}\rn_{\mathrm{F}}}_{B_1}+\underbrace{\ln\lb\Ho-\Ho^{(-m)}\rb\lb L^{\star}\rb V^{1,m}\rn_{\mathrm{F}}}_{B_2},
\end{align*}
where the second equality holds since $\forall i\in\{0,\cdots,2n\}$, 
$$
S^{0,i} = \T_{\xi^{0}}\lb\Po^{(-i)}\lb M\rb\rb,
$$
therefore $\Po^{(-m)}\lb S^{0,m}-S^{0}\rb$ is a zero matrix.

\noindent\textbf{$\bullet$ Bounding $B_1$.} If $m\leq n$,
\begin{align*}
    B_1 = p^{-1}\ln e_m^T\Po\lb S^0-S^{\star}\rb V^{1,m}\rn_2
    \leq & p^{-1}\cdot(2\alpha pn)\cdot C_{\text{thresh}}\frac{\mu r}{n}\sigma_{1}^{\star}\lb\sqrt{\frac{\mu r}{n}}+\ln \De^{1,\infty} \rn_{2,\infty}\rb \\
    \leq &\frac{1}{2C_0}\frac{\gamma}{\kappa\mu r}\sigma_{r}^{\star}\lb\sqrt{\frac{\mu r}{n}}+\ln \De^{1,\infty} \rn_{2,\infty}\rb,
\end{align*}
where the last inequality follows from the bound in \eqref{eq:S_init_op}. If $m> n$,
\begin{align*}
    B_1 = p^{-1}\ln \Po\lb S^0-S^{\star}\rb e_{(m-n)}e_{(m-n)}^T V^{1,m}\rn_{\mathrm{F}} 
    \leq  & p^{-1}\cdot(2\alpha pn)\cdot C_{\text{thresh}}\frac{\mu r}{n}\sigma_{1}^{\star}\lb\sqrt{\frac{\mu r}{n}}+\ln \De^{1,\infty} \rn_{2,\infty}\rb \\
    \leq & \frac{1}{2C_0}\frac{\gamma}{\kappa\mu r}\sigma_{r}^{\star}\lb\sqrt{\frac{\mu r}{n}}+\ln \De^{1,\infty} \rn_{2,\infty}\rb,
\end{align*}
where the first inequality follows from the fact that $\Po\lb S^0-S^{\star}\rb e_{(m-n)}e_{(m-n)}^T V^{1,m}$ has no more than $2apn$ nonzero rows.

\noindent\textbf{$\bullet$ Bounding $B_2$.} If $m\leq n$,
\begin{align*}
B_2 = \ln e_m^T\Ho\lb L^{\star}\rb V^{1,m}\rn_2\leq\frac{C_{10}}{2c_4C_0}\frac{\gamma}{\kappa\mu r}\sigma_{r}^{\star}\lb\sqrt{\frac{\mu r}{n}} + \ln \De^{1,\infty} \rn_{2,\infty}\rb,
\end{align*}
where the inequality comes from the bound we have already derived in \eqref{eq:vector_bern}. If $m>n$,
\begin{align*}
B_2 = \ln \Ho\lb L^{\star}\rb e_{(m-n)}e_{(m-n)}^T V^{1,m}\rn_{\mathrm{F}}\leq&\ln \Ho\lb L^{\star}\rb e_{(m-n)} \rn_2\ln e_{(m-n)}^T V^{1,m}\rn_2 \\
\leq&\ln \Ho\lb L^{\star}\rb\rn_2\lb\sqrt{\frac{\mu r}{n}} + \ln \De^{1,\infty} \rn_{2,\infty}\rb \\
\leq&\frac{1}{2C_0}\frac{\gamma}{\kappa\mu r}\sigma_{r}^{\star}\lb\sqrt{\frac{\mu r}{n}} + \ln \De^{1,\infty} \rn_{2,\infty}\rb,
\end{align*}
where the last inequality follows from \eqref{eq:op_init}. In both cases,
\begin{equation}\label{eq:WF_init_bound}
\begin{aligned}
\ln W^{0,m} F^{1,m}\rn_{\mathrm{F}}
\leq\frac{1+\frac{C_{10}}{c_4}}{C_0}\frac{\gamma}{\kappa\mu r}\sigma_r^{\star}\lb\sqrt{\frac{\mu r}{n}} + \ln \De^{1,\infty} \rn_{2,\infty}\rb,
\end{aligned}
\end{equation}
and consequently,
\begin{equation}\label{eq:Frobenius}
\begin{aligned}
\ln D^{1,\infty} \rn_{\mathrm{F}} \leq 2\cdot\max_m\ln D^{1,0,m} \rn_{\mathrm{F}}
\leq\frac{4+4\frac{C_{10}}{c_4}}{C_0}\frac{\gamma}{\kappa\mu r}\lb\sqrt{\frac{\mu r}{n}} + \ln \De^{1,\infty} \rn_{2,\infty}\rb.
\end{aligned}
\end{equation}
Combining \eqref{eq:Frobenius} with \eqref{eq:2_infty}, we finally get
\begin{align*}
    \ln \De^{1,\infty} \rn_{2,\infty}
    &\leq \frac{17+\frac{C_{10}}{c_4}}{C_0}\lb\frac{1}{\mu r}\sqrt{\frac{\mu r}{n}}\rb\gamma
    +\frac{1+\frac{C_{10}}{c_4}}{C_0}\ln\De^{1,\infty}\rn_{2,\infty}\\
    &+\frac{1+\frac{C_{10}}{c_4}}{C_0}\frac{\gamma}{\kappa\mu r}\lb\sqrt{\frac{\mu r}{n}} + \ln \De^{1,\infty} \rn_{2,\infty}\rb.
\end{align*}
Using the fact that $c_4>1$ from Lemma \ref{lem:init},
\begin{align*}
    \lb 1-\frac{2+2C_{10}}{C_0}\rb\ln \De^{1,\infty} \rn_{2,\infty}
    \leq &\frac{18+2C_{10}}{C_0}\cdot\lb\frac{1}{\mu r}\sqrt{\frac{\mu r}{n}}\rb\gamma,\\
    \ln \De^{1,\infty} \rn_{2,\infty} \leq & \frac{18+2C_{10}}{C_0-2-2C_{10}}\cdot\lb\frac{1}{\mu r}\sqrt{\frac{\mu r}{n}}\rb\gamma.
\end{align*} 
When $C_{10}\geq 9$, and $C_0\geq 5(2+2C_{10})$, we obtain the bound
$$
\ln \De^{1,\infty} \rn_{2,\infty} \leq \frac{5C_{10}}{C_0}\cdot\lb\frac{1}{\mu r}\sqrt{\frac{\mu r}{n}}\rb\gamma.
$$
Furthermore, 
$$ 
\ln D^{1,\infty} \rn_{\mathrm{F}}
\leq 4\frac{1+C_{10}}{C_0}\frac{\gamma}{\kappa\mu r}\sqrt{\frac{\mu r}{n}}\lb 1 + \frac{5C_{10}}{C_0}\rb\leq \frac{8C_{10}}{C_0}\cdot\lb\frac{1}{\kappa\mu r}\sqrt{\frac{\mu r}{n}}\rb\gamma.
$$

\section{Induction Steps}\label{sec:induction}

{\em In the induction steps, our proofs and the ones in \textup{\cite{Ding2020}} differ mainly in the following two aspects: 
\begin{itemize} 
\item Instead of first bounding $\ln \De^{t+1,\infty}\rn_{2,\infty}$ and then bounding $\ln D^{t+1,\infty}\rn_{\mathrm{F}}$ based on $\ln \De^{t+1,\infty}\rn_{2,\infty}$, we have established two inequalities \eqref{eq:De_induction} and \eqref{eq:D_induction}  which involve $\ln \De^{t+1,\infty}\rn_{2,\infty}$ and  $\ln D^{t+1,\infty}\rn_{\mathrm{F}}$  simultaneously. This allows us to relax the requirement on the distance to the ground truth. Through further refined analysis, in particular by using Lemma {\ref{lem:bound1}} to bound $\tilde{T}_{2a},~\tilde{T}_{2b}$, and $\tilde{B}_{1},~\tilde{B}_{2}$, and using Lemma \ref{lem:bound2} to derive the bounds for $\tilde{T}_{2c}$ and $\tilde{B}_{3}$, we have improved the sample complexity substantially.
\item The outlier term $\ln\Po^{(-m)}\lb S^{t,m}-S^{t}\rb V^{t+1,m}\rn_{\mathrm{F}}$ needs to be bounded carefully, with the help of the Lipschitz property of the thresholding function and Lemma \ref{lem:P_Omega_AB}.
\end{itemize}}
\noindent
Suppose that \Cref{eq:op,eq:l_2_infty,eq:proximity} hold for the $t$-th ($t\geq 1$) iteration. In the sequel, we will show that they also hold for the $(t+1)$-th iteration.

\noindent\textbf{Step 1: The Bound For $\ln E^{t,\infty} \rn_2$.} Using Lemma \ref{lem:L_infinity} with the bounds \eqref{eq:op} and \eqref{eq:l_2_infty},
\begin{align*}
\ln L^{t,i}-L^{\star} \rn_{\infty} \leq & \|\Delta^{t,i}\|_{2, \infty}\lb\|F^{t,i}\|_{2, \infty}+\|F^{\star}\|_{2, \infty}\rb\|\Sigma^{t,i}\|_{2}+(3+4\kappa)\left\|F^{\star}\right\|_{2, \infty}^2\|E^{t-1,i}\|_{2} \\
\leq & \frac{5C_{10}}{C_0}\lb\frac{1}{\mu r}\sqrt{\frac{\mu r}{n}}\rb\gamma^t\cdot\lb 2+\frac{5C_{10}}{C_0}\rb\sqrt{\frac{\mu r}{n}}\cdot\lb 1+\frac{1}{C_0}\rb\sigma_1^{\star}+\frac{\mu r}{n}\cdot\frac{7}{C_0}\lb\frac{1}{\mu r}\sigma_{r}^{\star}\rb\gamma^t\\
\leq & \frac{12C_{10}}{C_0}\cdot\frac{1}{\mu r}\lb\frac{\mu r}{n}\sigma_1^{\star}\rb\gamma^t\leq\lb\frac{\mu r}{n}\sigma_1^{\star}\rb\gamma^t.
\end{align*}
It follows from Lemma~\ref{lem:thresh} that
$$
\text{Supp}\lb S^{t,i}\rb\subseteq\Omega^{(-i)}\cap\Omega_{S^{\star}},~\ln\Po^{(-i)} \lb S^{t,i}-S^{\star}\rb\rn_{\infty}\leq C_{\text{thresh}}\cdot\lb\frac{\mu r}{n}\sigma_{1}^{\star}\rb\gamma^t.
$$
Furthermore, 
\begin{align*}
\ln E_1^{t,i} \rn_2 = \ln p^{-1}\Po^{(-i)}\lb S^{t,i}-S^{\star}\rb\rn_2\leq p^{-1}\cdot\lb2\alpha p n\rb \cdot C_{\text{thresh}}\lb\frac{\mu r}{n}\sigma_{1}^{\star}\rb\gamma^t\leq \frac{1}{2C_0}\cdot\lb\frac{1}{\kappa\mu r}\sigma_{r}^{\star}\rb\gamma^{t+1}.
\end{align*}
For $\ln E_2^{t,i} \rn_2$, one has
\begin{equation}\label{eq:induct_op}
\begin{aligned}
\ln E_2^{t,i} \rn_2 = \ln\Ho^{(-i)}\lb L^{t,i}-L^{\star}\rb\rn_2 \leq & \ln\Ho\lb L^{t,i}-L^{\star}\rb\rn_2\\
\leq & c_5\sqrt{\frac{2rn\log n}{p}}\ln L^{t,i}-L^{\star} \rn_{\infty} \\
\leq & c_5\sqrt{\frac{2rn\log n}{p}}\cdot\frac{12C_{10}}{C_0}\frac{\sigma_1^{\star}}{n}\gamma^t
\leq \frac{1}{2C_0}\cdot\lb\frac{1}{\kappa\mu r}\sigma_{r}^{\star}\rb\gamma^{t+1},
\end{aligned}
\end{equation}
where the second inequality follows from applying Lemma \ref{lem:uniform} to $L^{t,i}-L^{\star}$ which holds for some universal constant $c_5>1$, and the last inequality holds provided that 
$$
p\geq\frac{1152c_5^2C_{10}^2}{\gamma^2}\cdot\frac{\kappa^4\mu^2r^3\log n}{n}.
$$
Therefore,
\begin{align*}
\ln E^{t,i} \rn_2 \leq \frac{1}{C_0}\cdot\lb\frac{1}{\kappa\mu r}\sigma_{r}^{\star}\rb\gamma^{t+1}.
\end{align*}

\noindent\textbf{Step 2: The Bound For $\ln \De^{t+1,\infty} \rn_{2,\infty}$.} We can follow the same argument as in the base case, and split the estimation into four terms. 
\begin{align*}
& \De^{t+1,i}_{m,:} \\
= &  e_m^T\lb F^{t+1,i}-F^{\star}G^{t+1,i}\rb \\
= &\underbrace{e_m^TF^{\star}\Si^{\star}\lsb H^{t+1,i}\lb\Si^{\star}\rb^{-1}-\lb\Si^{\star}\rb^{-1}G^{t+1,i}\rsb}_{T_1}+\underbrace{e_m^TF^{\star}\Si^{\star}H^{t+1,i}\lsb\lb\Si^{t+1,i}\rb^{-1}-\lb\Si^{\star}\rb^{-1}\rsb}_{T_2}\\
&-\frac12\underbrace{e_m^T\widetilde{F}^{\star}\Si^{\star}\lb \widetilde{F}^{\star}\rb^T F^{t+1,i}\lb\Si^{t+1,i}\rb^{-1}}_{T_3}+\underbrace{e_m^T\widehat{E^{t,i}} F^{t+1,i}\lb\Si^{t+1,i}\rb^{-1}}_{T_4}.
\end{align*}

The bounds for the first three terms can be obtained similarly as in the base case, i.e.,
\begin{equation}\label{eq:bounds}
\begin{aligned}
\ln T_1 \rn_2+\ln T_2 \rn_2+\frac12\ln T_3 \rn_2\leq \frac{16}{C_0}\cdot\lb\frac{1}{\mu r}\sqrt{\frac{\mu r}{n}}\rb\gamma^{t+1}.
\end{aligned}
\end{equation}
To get the bound for $\ln T_4 \rn_2$, consider the case that $0\leq i\leq n,~1\leq m\leq n,~i\neq m$. Next we derive a bound for
$$
\ln e_m^TE^{t,i}V^{t+1,i} \rn_2.
$$
The proof for the other cases can be done similarly. While $e_m^TE_1^{t,i} V^{t+1,i}$ can be bounded in the same way as the base case, i.e.,
\begin{align*}
    \ln e_m^TE_1^{t,i} V^{t+1,i}\rn_2\leq \frac{\sigma_{r}^{\star}}{2C_0}\cdot\lb\frac{1}{\kappa\mu r}\gamma^{t+1}\rb\lb\sqrt{\frac{\mu r}{n}}+\ln\De^{t+1,\infty}\rn_{2,\infty}\rb,
\end{align*}
bounding $e_m^TE_2^{t,i} V^{t+1,i}$ is technically quite involved. First, one has
\begin{align*}
    e_m^T E_2^{t, i} V^{t+1, i} = & e_m^T E_2^{t, i} V^{t+1, m} G^{t+1,i,m} +e_m^T E_2^{t, i}\left(V^{t+1, i}-V^{t+1, m} G^{t+1,i,m}\right) \\ = & e_m^T \mathcal{H}_{\Omega}^{(-i)}\left(L^{t, i}-L^{\star}\right) V^{t+1, m} G^{t+1,i,m} +e_m^T E_2^{t, i}\left(V^{t+1, i}-V^{t+1, m} G^{t+1,i,m}\right) \\ 
    = & \underbrace{e_m^T \mathcal{H}_{\Omega}^{(-i)}\left(L^{t, m}-L^{\star}\right) V^{t+1, m} G^{t+1,i,m}}_{\tilde{T}_1} +\underbrace{e_m^T \mathcal{H}_{\Omega}^{(-i)}\left(L^{t, i}-L^{t, m}\right) V^{t+1, m} G^{t+1,i,m}}_{\tilde{T}_2} \\ 
    & +\underbrace{e_m^T E_2^{t, i}\left(V^{t+1, i}-V^{t+1, m} G^{t+1,i,m}\right)}_{\tilde{T}_3}.
\end{align*}

\noindent\textbf{$\bullet$ Bounding $\tilde{T}_1$.} Note that
\begin{align*}
    \ln e_m^T \mathcal{H}_{\Omega}^{(-i)}\left(L^{t, m}-L^{\star}\right) V^{t+1, m} G^{t+1,i,m} \rn_2 = \ln e_m^T \mathcal{H}_{\Omega}^{(-i)}\left(L^{t, m}-L^{\star}\right) V^{t+1, m}\rn_2.
\end{align*}
For $k=1,\cdots,n$, define
$$
v_k = \lb 1-\delta_{mk}/p\rb \lb L_{mk}^{t, m}-L_{mk}^{\star}\rb V_{k,:}^{t+1,m}.
$$
Similar to \eqref{eq:vector_bern}, we can apply Lemma \ref{lem:bernstein} to get
\begin{equation}\label{eq:I_3} 
\begin{aligned}
    & \ln e_m^T \mathcal{H}_{\Omega}^{(-i)}\left(L^{t, m}-L^{\star}\right) V^{t+1, m} \rn_2 = \ln \sum_{k=1}^n v_k\rn_2\\
    \leq & C_{10}\cdot\lb\sqrt{\frac{\log n}{p}}\ln L^{t, m}- L^{\star} \rn_{2,\infty}+\frac{\log n}{p}\ln L^{t, m}-L^{\star} \rn_{\infty}\rb\cdot\ln V^{t+1,m}\rn_{2,\infty} \\
    \leq & C_{10}\cdot\lb\sqrt{\frac{n\log n}{p}}+\frac{\log n}{p}\rb\ln L^{t, m}-L^{\star} \rn_{\infty}\cdot\lb\sqrt{\frac{\mu r}{n}}+\ln \De^{t+1,\infty} \rn_{2,\infty}\rb \\
    \leq & C_{10}\cdot\lb\sqrt{\frac{n\log n}{p}}+\frac{\log n}{p}\rb\cdot\frac{12C_{10}}{C_0}\frac{\sigma_1^{\star}}{n}\gamma^t\cdot\lb\sqrt{\frac{\mu r}{n}} + \ln \De^{t+1,\infty} \rn_{2,\infty}\rb\\
    \leq & \frac{C_{10}}{C_0}\cdot\lb\frac{\sigma_r^{\star}}{\mu r}\gamma^{t+1}\rb\lb\sqrt{\frac{\mu r}{n}} + \ln \De^{t+1,\infty} \rn_{2,\infty}\rb,
\end{aligned}
\end{equation}
where the last inequality holds provided that 
$$
p\geq\frac{576C_{10}^2}{\gamma^2}\cdot\frac{\kappa^2\mu^2r^2\log n}{n}.
$$

\noindent\textbf{$\bullet$ Bounding $\tilde{T}_2$.} 
Note that
\begin{equation}\label{eq:decomposition}
\begin{aligned}  
L^{t, m}-L^{t, i} 
= & U^{t, m} \Si^{t, m}\lb V^{t, m}\rb^T-U^{t, i} \Si^{t, i} \lb V^{t, i}\rb^T \\ 
= & U^{t, m} \Si^{t, m} \lb V^{t, m}\rb^T - U^{t, i} G^{t,m,i}\Si^{t, m} \lb V^{t, m}\rb^T   \\ 
&+U^{t, i} G^{t,m,i}\Si^{t, m} \lb V^{t, m}\rb^T-
U^{t, i} \Si^{t, i} G^{t,m,i}\lb V^{t, m}\rb^T
\\ 
& + U^{t, i} \Si^{t, i} G^{t,m,i}\lb V^{t, m}\rb^T-U^{t, i} \Si^{t, i} \lb V^{t, i}\rb^T\\ 
= & D_U^{t, m, i} \Si^{t, m} \lb V^{t, m}\rb^T+U^{t,i} S^{t, m,i}\lb V^{t, m}\rb^T - U^{t, i} \Si^{t, i} \lb D_V^{t, i, m}\rb^T,
\end{aligned}
\end{equation}
where
\begin{align*}
D_U^{t,m,i} &:= U^{t,m}-U^{t,i}G^{t,m,i},\\
S^{t,m,i} &:= G^{t,m,i}\Sigma^{t,m}-\Sigma^{t,i}G^{t,m,i},\\
D_V^{t,i,m} &:= V^{t,i}-V^{t,m}G^{t,i,m}= V^{t,i}-V^{t,m}\lb G^{t,m,i}\rb^T.
\end{align*}
Therefore,
\begin{align*}  
\ln\tilde{T_2}\rn_2\leq &\underbrace{\ln e_m^T\Ho^{(-i)}\lb D_U^{t, m, i} \Si^{t, m} \lb V^{t, m}\rb^T\rb V^{t+1,m}\rn_2}_{\tilde{T}_{2a}}+\underbrace{\ln e_m^T\Ho^{(-i)}\lb U^{t,i} S^{t,m,i}\lb V^{t,m}\rb^T\rb V^{t+1,m}\rn_2}_{\tilde{T}_{2b}} \\
&+ \underbrace{\ln e_m^T\Ho^{(-i)}\lb U^{t, i} \Si^{t, i} \lb D_V^{t, i, m}\rb^T\rb V^{t+1,m}\rn_2}_{\tilde{T}_{2c}}.
\end{align*}

Applying Lemma \ref{lem:bound1} with
$A:=D_U^{t, m, i} \Si^{t, m}$, $B:=V^{t,m}$, and $C:=V^{t+1,m}$, we can get
\begin{align*} 
    \ln \tilde{T}_{2a}\rn_2 \leq \ln D_U^{t, m, i} \Si^{t, m} \rn_{2,\infty}\cdot \ln R^{(m)}\rn_2,
\end{align*}
where
\begin{equation}\label{eq:bernstein}
R^{(m)} := \sum_{k=1}^n \lb 1- \delta_{mk}/p\rb \lb V^{t,m}_{k,:}\rb^T V^{t+1,m}_{k,:}:= \sum_{k=1}^n R_{k}^{(m)}.
\end{equation}
 Since
\begin{align*}
    \ln R_{k}^{(m)} \rn_2\leq & \frac1p \ln V^{t,m} \rn_{2,\infty}\ln V^{t+1,m} \rn_{2,\infty}, \\
    \ln \sum_{k=1}^n\mathbb{E} \lsb R_k^{(m)}\lb R_k^{(m)}\rb^T\rsb\rn_2 \leq & \frac1p\ln\sum_{k=1}^n  \lsb \ln V_{k,:}^{t+1,m}\rn_2^2\cdot\lb V_{k,:}^{t,m} \rb^T V_{k,:}^{t,m}\rsb \rn_2
    \leq \frac1p\ln V^{t+1,m} \rn_{2,\infty}^2,\\
    \ln \sum_{k=1}^n\mathbb{E} \lsb \lb R_k^{(m)}\rb^T R_k^{(m)}\rsb\rn_2 \leq & \frac1p\ln\sum_{k=1}^n \lsb\ln V_{k,:}^{t,m}\rn_2^2\cdot\lb V_{k,:}^{t+1,m} \rb^T V_{k,:}^{t+1,m} \rsb\rn_2
    \leq \frac1p\ln V^{t,m} \rn_{2,\infty}^2,
\end{align*}
applying Lemma \ref{lem:bernstein} and we can obtain
\begin{align*}
    \ln  R^{(m)} \rn_2 \leq & C_{10}\cdot\lb\sqrt{\frac{\log n}{p}}\cdot\max\left\{\ln V^{t+1,m} \rn_{2,\infty},\ln V^{t,m} \rn_{2,\infty}\right\}+\frac{\log n}{p}\ln V^{t,m} \rn_{2,\infty}\ln V^{t+1,m} \rn_{2,\infty}\rb\\
    \leq & C_{10}\cdot\sqrt{\frac{\log n}{p}}\lb 1+\frac{5C_{10}}{C_0}\rb\lb\sqrt{\frac{\mu r}{n}}+\ln \De^{t+1,\infty}\rn_{2,\infty}\rb\\
    &+C_{10}\cdot\frac{\log n}{p}\lb 1+\frac{5C_{10}}{C_0}\rb\sqrt{\frac{\mu r}{n}}\lb\sqrt{\frac{\mu r}{n}}+\ln \De^{t+1,\infty}\rn_{2,\infty}\rb \\
    = & C_{10}\lb 1+\frac{5C_{10}}{C_0}\rb\cdot\lb\sqrt{\frac{\log n}{p}}+\frac{\log n}{p}\sqrt{\frac{\mu r}{n}}\rb\cdot\lb\sqrt{\frac{\mu r}{n}}+\ln\De^{t+1,\infty}\rn_{2,\infty}\rb\\
    =& C_{10}\lb 1+\frac{5C_{10}}{C_0}\rb\lb\sqrt{\frac{\mu r\log n}{np}}+\frac{\mu r\log n}{np}\rb\cdot\sqrt{\frac{n}{\mu r}}\lb\sqrt{\frac{\mu r}{n}}+\ln\De^{t+1,\infty}\rn_{2,\infty}\rb \\
    \leq & \frac{\gamma}{9}\cdot\sqrt{\frac{n}{\mu r}}\lb\sqrt{\frac{\mu r}{n}}+\ln\De^{t+1,\infty}\rn_{2,\infty}\rb,
\end{align*}
where the last inequality holds if 
$$
p\geq\frac{400C_{10}^2}{\gamma^2}\cdot\frac{\mu r\log n}{n}.
$$
Therefore,
\begin{align*} 
    \ln \tilde{T}_{2a}\rn_2
    \leq &\frac{8C_{10}}{C_0}\lb\frac{1}{\kappa\mu r}\sqrt{\frac{\mu r}{n}}\rb\gamma^t\cdot\lb 1+\frac{1}{C_0}\rb\sigma_1^{\star}\cdot\frac{\gamma}{9}\sqrt{\frac{n}{\mu r}}\lb\sqrt{\frac{\mu r}{n}}+\ln\De^{t+1,\infty}\rn_{2,\infty}\rb \\
    \leq &\frac{C_{10}}{C_0}\cdot\lb\frac{\sigma_r^{\star}}{\mu r}\gamma^{t+1}\rb\lb\sqrt{\frac{\mu r}{n}}+\ln\De^{t+1,\infty}\rn_{2,\infty}\rb.
\end{align*}

Since $\tilde{T}_{2b}$ can be bounded similarly as $\tilde{T}_{2a}$ (by replacing $\ln D_U^{t, m, i} \Si^{t, m} \rn_{2,\infty}$ with $\ln U^{t,i} S^{t,m,i} \rn_{2,\infty}$),  we just need to bound $\ln S^{t,m,i} \rn_2$ next. Let $A:=\widehat{L^{\star}}+\widehat{E^{t-1,m}}$, $\widetilde{A}:=\widehat{L^{\star}}+\widehat{E^{t-1,i}}$. The application of Lemma \ref{lem:perturb_S} yields
$$
\ln S^{t,m,i} \rn_2=\ln \Sigma^{t,m}G^{t,i,m}-G^{t,i,m}\Sigma^{t,i}\rn_2\leq 4\kappa\cdot\ln E^{t-1,i}-E^{t-1,m}\rn_2 \leq\frac{8}{C_0}\cdot\lb\frac{1}{\mu r}\sigma_{r}^{\star}\rb\gamma^{t}.
$$
With the same requirement for $p$ in bounding $\tilde{T}_{2a}$, we can get
\begin{align*}
    \tilde{T}_{2b}\leq\frac{2}{C_0}\cdot\lb\frac{\sigma_r^{\star}}{\mu r}\gamma^{t+1}\rb\lb\sqrt{\frac{\mu r}{n}} + \ln \De^{t+1,\infty} \rn_{2,\infty}\rb.
\end{align*}

For $\tilde{T}_{2c}$,
\begin{align*}
\tilde{T}_{2c}^2 = & \ln e_m^T\Ho\lb U^{t,i}\Si^{t,i} \lb D_V^{t, i, m}\rb^T \rb V^{t+1,m}\rn_2^2 \\
= & \left\langle e_m^T\Ho\lb U^{t,i}\Si^{t,i} \lb D_V^{t, i, m}\rb^T \rb V^{t+1,m},e_m^T\Ho\lb U^{t,i}\Si^{t,i} \lb D_V^{t, i, m}\rb^T \rb V^{t+1,m}\right\rangle \\
= & \left\langle \Ho\lb U^{t,i}\Si^{t,i} \lb D_V^{t, i, m}\rb^T \rb ,e_me_m^T\Ho\lb U^{t,i}\Si^{t,i} \lb D_V^{t, i, m}\rb^T \rb V^{t+1,m}\lb V^{t+1,m}\rb^T\right\rangle \\
\leq & \ln\Ho\lb \bm{1}\bm{1}^T\rb\rn_2 \ln U^{t,i}\Si^{t,i} \rn_{2,\infty}\ln V^{t+1,m} \rn_{2,\infty}\ln D_V^{t, i, m} \rn_{\mathrm{F}}\cdot\tilde{T}_{2c},
\end{align*}
where the inequality follows from applying Lemma~\ref{lem:bound2} to
$$
A:=U^{t,i}\Si^{t,i},~C:=D_V^{t, i, m},~B:=e_me_m^T\Ho\lb U^{t,i}\Si^{t,i} \lb D_V^{t, i, m}\rb^T\rb V^{t+1,m},~D:=V^{t+1,m}.
$$ 
Therefore,
\begin{align*}
\tilde{T}_{2c} \leq & \ln\Ho\lb \bm{1}\bm{1}^T\rb\rn_2 \ln U^{t,i}\Si^{t,i} \rn_{2,\infty}\ln D_V^{t, i, m} \rn_{\mathrm{F}}\ln V^{t+1,m} \rn_{2,\infty}\\
\leq & c_5\sqrt{\frac{n\log n}{p}}\cdot\lb 1+\frac{5C_{10}}{C_0}\rb\sqrt{\frac{\mu r}{n}}\lb 1+\frac{1}{C_0}\rb\sigma_1^{\star}\cdot\frac{8C_{10}}{C_0}\lb\frac{1}{\kappa\mu r}\sqrt{\frac{\mu r}{n}}\rb\gamma^{t}\cdot\ln V^{t+1,m} \rn_{2,\infty} \\
\leq & \frac{1}{C_0}\cdot\lb\frac{\sigma_r^{\star}}{\mu r}\gamma^{t+1}\rb\lb\sqrt{\frac{\mu r}{n}} + \ln \De^{t+1,\infty} \rn_{2,\infty}\rb,
\end{align*}
where the second inequality follows from applying Lemma \ref{lem:uniform} to $\bm{1}\bm{1}^T$, and the last inequality holds provided that 
$$
p\geq\frac{100c_5^2C_{10}^2}{\gamma^2}\cdot\frac{\mu^2 r^2\log n}{n}.
$$

\noindent\textbf{$\bullet$ Bounding $\tilde{T}_3$.} \begin{align*}
    \ln\tilde{T}_3\rn_2 = & \ln e_m^T E_2^{t, i}\lb V^{t+1, i}-V^{t+1, m} G^{t+1,i,m}\rb\rn_2
    \leq \ln E_2^{t,i}\rn_2\ln D_V^{t+1,i,m}\rn_{\mathrm{F}}\leq \frac{1}{2C_0}\sigma_{r}^{\star}\ln D^{t+1,\infty}\rn_{\mathrm{F}}.
\end{align*}

Combining the bounds of $\tilde{T}_1$ to $\tilde{T}_3$, one has
\begin{align*}
    \ln e_m^TE_2^{t,i} V^{t+1,i} \rn_2
    \leq & \frac{3+2C_{10}}{C_0}\lb\frac{\sigma_r^{\star}}{\mu r}\sqrt{\frac{\mu r}{n}}\rb\gamma^{t+1} + \frac{3+2C_{10}}{C_0}\sigma_r^{\star}\ln \De^{t+1,\infty} \rn_{2,\infty}+\frac{1}{2C_0}\sigma_r^{\star}\ln D^{t+1,\infty}\rn_{\mathrm{F}},
\end{align*}
and consequently,
\begin{align*}
    \ln T_4 \rn_2
    \leq & \frac{4+3C_{10}}{C_0}\lb\frac{1}{\mu r}\sqrt{\frac{\mu r}{n}}\rb\gamma^{t+1} + \frac{4+3C_{10}}{C_0}\ln \De^{t+1,\infty} \rn_{2,\infty}+\frac{1}{C_0}\ln D^{t+1,\infty}\rn_{\mathrm{F}}.
\end{align*}

Together with \eqref{eq:bounds}, we can get
\begin{equation}\label{eq:De_induction}
    \lb 1-\frac{4+3C_{10}}{C_0}\rb\ln \De^{t+1,\infty} \rn_{2,\infty}
    \leq \frac{20+3C_{10}}{C_0}\lb\frac{1}{\mu r}\sqrt{\frac{\mu r}{n}}\rb\gamma^{t+1}+\frac{1}{C_0}\ln D^{t+1,\infty}\rn_{\mathrm{F}}.
\end{equation}

\noindent\textbf{Step 3: The Bound For $\ln D^{t+1,\infty} \rn_{\mathrm{F}}$.} Similar to the base case, we only need to bound $\ln D^{t+1,0,m}\rn_{\mathrm{F}}$. Let $A:=\widehat{L^{\star}}+\widehat{E^{t,m}}$, $\widetilde{A}:=\widehat{L^{\star}}+\widehat{E^{t}}$, and define $W^{t,m}:=\widehat{E^{t}}-\widehat{E^{t,m}}$. By Lemma \ref{lem:perturb_F}, 
\begin{equation}\label{eq:WF_induction}
\begin{aligned}
\ln D^{t+1,0,m}\rn_{\mathrm{F}} \leq &\frac{2}{\sigma_r^{\star}}\ln W^{t,m} F^{t+1,m}\rn_{\mathrm{F}} \\
\leq &\frac{2}{\sigma_r^{\star}}\lb \ln\lb E^{t}-E^{t,m}\rb V^{t+1,m}\rn_{\mathrm{F}}+\ln\lb E^{t}-E^{t,m}\rb^T U^{t+1,m}\rn_{\mathrm{F}}\rb.
\end{aligned}
\end{equation}
Next we  derive a bound for $\ln\lb E^{t}-E^{t,m}\rb V^{t+1,m}\rn_{\mathrm{F}}$, and the bound for $\ln\lb E^{t}-E^{t,m}\rb^T U^{t+1,m}\rn_{\mathrm{F}}$ can be obtained similarly. Note that
\begin{align*}
& \ln\lb E^{t}-E^{t,m}\rb V^{t+1,m}\rn_{\mathrm{F}} \\
\leq &\ln \lsb p^{-1}\Po^{(-m)}\lb S^{t,m}-S^{\star}\rb-p^{-1}\Po\lb S^{t}-S^{\star}\rb\rsb V^{t+1,m}\rn_{\mathrm{F}}\\
&+\ln \lsb\Ho\lb L^{t}-L^{\star}\rb-\Ho^{(-m)}\lb L^{t,m}-L^{\star}\rb\rsb V^{t+1,m}\rn_{\mathrm{F}}\\
\leq & \underbrace{p^{-1}\ln\Po^{(-m)}\lb S^{t,m}-S^{t}\rb V^{t+1,m}\rn_{\mathrm{F}}}_{B_1}+\underbrace{p^{-1}\ln \lb\Po^{(-m)}-\Po\rb\lb S^{t}-S^{\star}\rb V^{t+1,m}\rn_{\mathrm{F}}}_{B_2}\\
&+\underbrace{\ln\Ho\lb L^{t}-L^{t,m}\rb V^{t+1,m}\rn_{\mathrm{F}}}_{B_3}+\underbrace{\ln\lb\Ho-\Ho^{(-m)}\rb\lb L^{t,m}-L^{\star}\rb V^{t+1,m}\rn_{\mathrm{F}}}_{B_4}.
\end{align*}

\noindent\textbf{$\bullet$ Bounding $B_1$.} Let $\Omega_{S^{\star}}^{(-m)}$  be $\Omega_{S^{\star}}$ without the indices from the $m$-th row if $1\leq m\leq n$, and $\Omega_{S^{\star}}$ without the indices from the $(m-n)$-th column if $n+1\leq m\leq 2n$. For $(i,j)\in[n]\times[n]$, define $w_{ij}=1$ if $(i,j)\in\Omega_{S^{\star}}^{(-m)}$ and $w_{ij}=0$  otherwise. Then,
\begin{align*}
    \ln\Po^{(-m)}\lb S^{t,m}-S^{t}\rb V^{t+1,m}\rn_{\mathrm{F}}^2 
    = & \sum_{i=1}^n\sum_{j=1}^r \lsb\sum_{k=1}^n w_{ik}\lb S^{t,m}-S^{t}\rb_{ik} V^{t+1,m}_{kj}\rsb^2 \\
    \leq & \sum_{i=1}^n\sum_{j=1}^r \lsb\sum_{k=1}^n w_{ik}\lb S^{t,m}-S^{t}\rb_{ik}^2\rsb \lsb\sum_{k=1}^n w_{ik}\lb V^{t+1,m}_{kj}\rb^2\rsb.
\end{align*}
Note that
\begin{align*}
    \sum_{k=1}^n w_{ik}\lb V^{t+1,m}_{kj}\rb^2 = & \ln\mathcal{P}_{\Omega_{S^{\star}}^{(-m)}}\lb e_ie_j^T\cdot\lb V^{t+1,m}\rb^T\rb\rn_{\mathrm{F}}^2\\
    \leq & \ln\mathcal{P}_{\Omega_{S^{\star}}}\lb e_ie_j^T\cdot\lb V^{t+1,m}\rb^T\rb\rn_{\mathrm{F}}^2\leq 2\alpha pn\cdot\ln V^{t+1,m}\rn_{2,\infty}^2,
\end{align*}
where the last inequality is due to Lemma \ref{lem:P_Omega_AB}. Therefore,
\begin{align*}
    \ln\Po^{(-m)}\lb S^{t,m}-S^{t}\rb V^{t+1,m}\rn_{\mathrm{F}}^2
    \leq & \sum_{i=1}^n\lsb\sum_{k=1}^n w_{ik}\lb S^{t,m}-S^{t}\rb_{ik}^2\rsb\cdot\sum_{j=1}^r\lsb\sum_{k=1}^n w_{ik}\lb V^{t+1,m}_{kj}\rb^2\rsb \\
    \leq & \sum_{i=1}^n\lsb\sum_{k=1}^n w_{ik}\lb S^{t,m}-S^{t}\rb_{ik}^2\rsb\cdot\lb 2 \alpha pn r\ln V^{t+1,m}\rn_{2,\infty}^2\rb\\
    = & \ln\P_{\Omega_{S^{\star}}}^{(-m)}\lb S^{t,m}-S^{t}\rb\rn_{\mathrm{F}}^2\cdot\lb 2\alpha pn r\ln V^{t+1,m}\rn_{2,\infty}^2\rb.
\end{align*}

On the other hand, the term $\ln\P_{\Omega_{S^{\star}}}^{(-m)}\lb S^{t,m}-S^{t}\rb\rn_{\mathrm{F}}$ can be bounded as follows:
\begin{align*}
    &\ln\P_{\Omega_{S^{\star}}}^{(-m)}\lb S^{t,m}-S^{t}\rb\rn_{\mathrm{F}}\\
    = & \ln\P_{\Omega_{S^{\star}}}^{(-m)}\lb \T_{\xi^{t}}\lb M-L^{t,m}\rb-\T_{\xi^{t}}\lb M-L^{t}\rb\rb\rn_{\mathrm{F}} \\
    \leq & K\cdot\ln\P_{\Omega_{S^{\star}}}^{(-m)}\lb\lb M-L^{t,m}\rb-\lb M-L^{t}\rb\rb\rn_{\mathrm{F}} \\
    \leq & K\ln\P_{\Omega_{S^{\star}}}\lb L^{t,m}-L^{t}\rb\rn_{\mathrm{F}} \\
    \leq & K\ln \P_{\Omega_{S^{\star}}}\lb D_U^{t,m,0}\Si^{t,m}\lb V^{t,m}\rb^T\rb\rn_{\mathrm{F}}+K
    \ln \P_{\Omega_{S^{\star}}}\lb U^{t}S^{t,m,0}\lb V^{t,m}\rb^T\rb\rn_{\mathrm{F}}\\
    &+K\ln \P_{\Omega_{S^{\star}}}\lb U^{t}\Si^{t}\lb D_V^{t,0,m}\rb^T\rb\rn_{\mathrm{F}}\\
    \leq & K\sqrt{2\alpha pn}\cdot\lb \ln D_U^{t,m,0}\Si^{t,m}\rn_{\mathrm{F}}\ln V^{t,m}\rn_{2,\infty}+\ln U^{t}S^{t,m,0}\rn_{\mathrm{F}}\ln V^{t,m}\rn_{2,\infty}+\ln U^{t}\rn_{2,\infty}\ln D_V^{t,0,m}\Si^{t}\rn_{\mathrm{F}}\rb\\
    \leq & K\sqrt{2\alpha pn}\cdot\lb 2\lb 1+\frac{1}{C_0}\rb\sigma_1^{\star}\ln D^{t,m,0} \rn_{\mathrm{F}}+\ln S^{t,m,0} \rn_{\mathrm{F}}\rb\cdot\lb 1+\frac{5C_{10}}{C_0}\rb\sqrt{\frac{\mu r}{n}},
\end{align*}
where the first inequality follows from property \labelcref{P2} of the thresholding function (applied to each entry in $\Omega_{S^{\star}}^{(-m)}$), and the fourth inequality follows from Lemma~\ref{lem:P_Omega_AB} again.

Regarding $\ln S^{t,m,0} \rn_{\mathrm{F}}$, we remind the reader that the bound for $\|D^{t,\infty}\|_{\mathrm{F}}$ is indeed provided through the bound for
$\ln W^{t-1,m} F^{t,m}\rn_{\mathrm{F}}$, see \eqref{eq:WF_init} for the case when $t=1$ and  \eqref{eq:WF_induction} for the case when $t>1$. Therefore, in the $t$-th iteration, we have already proved that
$$
\ln W^{t-1,m} F^{t,m}\rn_{\mathrm{F}}
\leq\frac{2C_{10}}{C_0}\cdot
\lb\frac{\sigma_r^{\star}}{\kappa\mu r}\sqrt{\frac{\mu r}{n}}\rb\gamma^t.
$$
For the base case, such a bound can be established by substituting the bound of $\ln \De^{1,\infty}\rn_{2,\infty}$ into \eqref{eq:WF_init_bound}. For the induction steps, such a bound can be established by substituting the bound of $\ln \De^{t,\infty}\rn_{2,\infty}$ and $\ln D^{t,\infty}\rn_{\mathrm{F}}$ into \eqref{eq:WF_induct_bound} (the numbering is changed from $t+1$ to $t$). Applying Lemma \ref{lem:perturb_S} with $A:=\widehat{L^{\star}}+\widehat{E^{t-1,m}}$ and $\widetilde{A}:=\widehat{L^{\star}}+\widehat{E^{t-1}}$ yields
\begin{equation}\label{eq:S_F}
\ln S^{t,m,0} \rn_{\mathrm{F}} = \ln \Sigma^{t,m}G^{t,0,m}-G^{t,0,m}\Sigma^{t}\rn_{\mathrm{F}}\leq 4\kappa\cdot\ln W^{t-1,m} F^{t,m}\rn_{\mathrm{F}}\leq \frac{8C_{10}}{C_0}\cdot\lb\frac{\sigma_1^{\star}}{\kappa\mu r}\sqrt{\frac{\mu r}{n}}\rb\gamma^{t}.
\end{equation}

Combining the above pieces together, one has
\begin{align*}
    B_1 \leq & p^{-1}\cdot\ln\P_{\Omega_{S^{\star}}}^{(-m)}\lb S^{t,m}-S^{t}\rb\rn_{\mathrm{F}}\cdot\sqrt{2\alpha pnr}\ln V^{t+1,m}\rn_{2,\infty}\\
    \leq & 4K\cdot\lb \lb 2+\frac{2}{C_0}\rb\sigma_1^{\star}\ln D^{t,m,0} \rn_{\mathrm{F}}+\ln S^{t,m,0} \rn_{\mathrm{F}}\rb\cdot\alpha r\sqrt{\mu n}\lb\sqrt{\frac{\mu r}{n}}+\ln\De^{t+1,\infty}\rn_{2,\infty}\rb \\
    \leq & 100K\cdot\frac{C_{10}}{C_0}\lb\frac{\sigma_1^{\star}}{\kappa\mu r}\sqrt{\frac{\mu r}{n}}\rb\gamma^{t}\cdot\alpha r\sqrt{\mu n}\lb\sqrt{\frac{\mu r}{n}}+\ln\De^{t+1,\infty}\rn_{2,\infty}\rb \\
    \leq &\frac{C_{10}}{20C_0}\cdot\lb\frac{\sigma_r^{\star}}{\kappa\mu r}\rb\gamma^{t+1}\lb \sqrt{\frac{\mu r}{n}}+\ln\De^{t+1,\infty}\rn_{2,\infty}\rb
\end{align*}
as long as
$$
\alpha\leq\frac{1}{\kappa\mu r^{3/2}}\cdot\frac{\gamma}{2000K}.
$$

\noindent\textbf{$\bullet$ Bounding $B_2$.} Similar to the base case, we can get
\begin{align*}
    B_2 = p^{-1}\ln e_m^T\Po\lb S^t-S^{\star}\rb V^{t+1,m}\rn_2
\end{align*}
if $m\leq n$, and
\begin{align*}
    B_2 = p^{-1}\ln \Po\lb S^t-S^{\star}\rb e_{(m-n)}e_{(m-n)}^T V^{t+1,m}\rn_{\mathrm{F}} 
\end{align*}
if $m>n$. In both cases,
\begin{align*}
    B_2\leq & p^{-1}\cdot(2\alpha pn)\cdot C_{\text{thresh}}\lb\frac{\mu r}{n}\sigma_{1}^{\star}\rb\gamma^t\cdot\lb\sqrt{\frac{\mu r}{n}}+\ln \De^{t+1,\infty} \rn_{2,\infty}\rb \\
    \leq &\frac{1}{2C_0}\cdot\lb\frac{\sigma_{r}^{\star}}{\kappa\mu r}\rb\gamma^{t+1}\lb\sqrt{\frac{\mu r}{n}}+\ln \De^{t+1,\infty} \rn_{2,\infty}\rb.
\end{align*}

\noindent\textbf{$\bullet$ Bounding $B_3$.} It is easy to see that
\begin{align*}
    B_3 \leq & \underbrace{\ln\Ho\lb D_U^{t, m, 0} \Si^{t, m} \lb V^{t, m}\rb^T\rb V^{t+1,m}\rn_{\mathrm{F}}}_{\tilde{B}_1}+\underbrace{\ln\Ho\lb U^{t} S^{t, m,0}\lb V^{t, m}\rb^T\rb V^{t+1,m}\rn_{\mathrm{F}}}_{\tilde{B}_2} \\
    & + \underbrace{\ln\Ho\lb U^{t}\Si^{t} \lb D_V^{t, 0, m}\rb^T \rb V^{t+1,m}\rn_{\mathrm{F}}}_{\tilde{B}_3}.
\end{align*}

For $\tilde{B}_{1}$, the application of Lemma~\ref{lem:bound1} yields that
\begin{align*}
\tilde{B}_1
\leq & \ln D_U^{t, m, 0} \Si^{t, m} \rn_{\mathrm{F}}\cdot\max_{1\leq j\leq n} \ln R^{(j)} \rn_2,
\end{align*}
where
$$
R^{(j)} := \sum_{k=1}^n \lb 1- \delta_{jk}/p\rb \lb V^{t,m}_{k,:}\rb^T V^{t+1,m}_{k,:}.
$$ 
For each $j\in\{1,\cdots,n\}$, denoting $
H^{(j)} := \diag\lb 1-\delta_{j1}/p,\cdots,1-\delta_{jn}/p\rb\in\mathbb{R}^{n\times n}
$,
\begin{align*}
\ln R^{(j)} \rn_2 = & \ln\lb V^{t, m}\rb^T H^{(j)}V^{t+1,m}\rn_2 \\
= & \ln\lb D_V^{t,m,j}+V^{t, j}G^{t,m,j}\rb^T H^{(j)}\lb D_V^{t+1,m,j}+V^{t+1,j}G^{t+1,m,j}\rb\rn_2 \\
\leq & \ln\lb D_V^{t,m,j}\rb^T H^{(j)} D_V^{t+1,m,j}\rn_2+\ln\lb V^{t, j}\rb^T H^{(j)} D_V^{t+1,m,j}\rn_2\\
&+\ln\lb D_V^{t,m,j}\rb^T H^{(j)}V^{t+1,j}\rn_2+\ln\lb V^{t, j}\rb^T H^{(j)} V^{t+1,j}\rn_2.
\end{align*}
Conditioning on the event that
$
\sum_{k=1}^n \delta_{jk} \leq 2pn,
$
\begin{align*}
\ln\lb D_V^{t,m,j}\rb^T H^{(j)} D_V^{t+1,m,j}\rn_2 = &\ln\sum_{k=1}^n \lb 1- \delta_{jk}/p\rb \lb D_V^{t,m,j}\rb_{k,:}^T  \lb D_V^{t+1,m,j}\rb_{k,:}\rn_2 \\
\leq & 3n \ln D_V^{t,\infty}\rn_{2,\infty}\ln D_V^{t+1,\infty}\rn_{2,\infty} \\
\leq & 3n\cdot\frac{8C_{10}}{C_0}\lb\frac{1}{\kappa\mu r}\sqrt{\frac{\mu r}{n}}\rb\gamma^t\ln D^{t+1,\infty}\rn_{\mathrm{F}}\leq \sqrt{\mu rn}\ln D^{t+1,\infty}\rn_{\mathrm{F}},
\end{align*}
\begin{align*}
\ln\lb V^{t, j}\rb^T H^{(j)} D_V^{t+1,m,j}\rn_2 
\leq & 3n \ln V^{t,j}\rn_{2,\infty}\ln D_V^{t+1,\infty}\rn_{2,\infty} \\
\leq & 3n\cdot\lb 1+\frac{5C_{10}}{C_0}\rb\sqrt{\frac{\mu r}{n}}\ln D^{t+1,\infty}\rn_{\mathrm{F}}\leq 4\sqrt{\mu rn}\ln D^{t+1,\infty}\rn_{\mathrm{F}},
\end{align*}
\begin{align*}
\ln\lb D_V^{t,m,j}\rb^T H^{(j)}V^{t+1,j}\rn_2 
\leq & 3n \ln D_V^{t,\infty}\rn_{2,\infty}\ln V^{t+1,j}\rn_{2,\infty}\\
\leq & 3n\cdot\frac{8C_{10}}{C_0}\lb\frac{1}{\kappa\mu r}\sqrt{\frac{\mu r}{n}}\rb\gamma^t\lb \sqrt{\frac{\mu r}{n}}+\ln \De^{t+1,\infty}\rn_{2,\infty}\rb \\
\leq & \frac{\gamma}{360\kappa}\sqrt{\frac{n}{\mu r}}\lb\sqrt{\frac{\mu r}{n}}+\ln\De^{t+1,\infty}\rn_{2,\infty}\rb
\end{align*}
provided that $C_0\geq 8640C_{10}$. 

Since $V^{t, j}$ and $V^{t+1,j}$ are independent with respect to the random variables on the $j$-th row,
\begin{align*}
\ln\lb V^{t, j}\rb^T H^{(j)} V^{t+1,j}\rn_2 = & \ln\sum_{k=1}^n \lb 1- \delta_{jk}/p\rb \lb V^{t,j}_{k,:}\rb^T  V^{t+1,j}_{k,:}\rn_2\\
\leq &\frac{\gamma}{360\kappa}\sqrt{\frac{n}{\mu r}}\lb\sqrt{\frac{\mu r}{n}}+\ln\De^{t+1,\infty}\rn_{2,\infty}\rb,
\end{align*}
where the last bound follows from the same argument for \eqref{eq:bernstein}, provided that
$$
p \geq \frac{640000C_{10}^2}{\gamma^2}\cdot\frac{\kappa^2\mu r\log n}{n}.
$$

Therefore,
\begin{align*}
    \tilde{B}_1\leq & \ln D_U^{t, m, 0} \Si^{t, m} \rn_{\mathrm{F}}\lsb \frac{\gamma}{180\kappa}\sqrt{\frac{n}{\mu r}}\lb\sqrt{\frac{\mu r}{n}}+\ln\De^{t+1,\infty}\rn_{2,\infty}\rb+5\sqrt{\mu rn}\ln D^{t+1,\infty}\rn_{\mathrm{F}}\rsb \\
    \leq & \frac{8C_{10}}{C_0}\lb\frac{1}{\kappa\mu r}\sqrt{\frac{\mu r}{n}}\rb\gamma^t\cdot\lb 1+\frac{1}{C_0}\rb\sigma_1^{\star}\lsb \frac{\gamma}{180\kappa}\sqrt{\frac{n}{\mu r}}\lb\sqrt{\frac{\mu r}{n}}+\ln\De^{t+1,\infty}\rn_{2,\infty}\rb+5\sqrt{\mu rn}\ln D^{t+1,\infty}\rn_{\mathrm{F}}\rsb \\
    \leq & \frac{C_{10}}{20C_0}\lb\frac{\sigma_r^{\star}}{\kappa\mu r}\sqrt{\frac{\mu r}{n}}\rb\gamma^{t+1}+\frac{C_{10}}{20C_0}\frac{\sigma_r^{\star}}{\kappa}\ln\De^{t+1,\infty}\rn_{2,\infty}+\frac{\sigma_r^{\star}}{32}\ln D^{t+1,\infty}\rn_{\mathrm{F}}.
\end{align*}

Due to the bound of $\ln S^{t,m,0}\rn_{\mathrm{F}}$ in \eqref{eq:S_F}, $\tilde{B}_{2}$ has the same bound as $\tilde{B}_{1}$. For $\tilde{B}_{3}$,  
\begin{align*}
\tilde{B}_3^2 = & \ln\Ho\lb U^{t}\Si^{t} \lb D_V^{t, 0, m}\rb^T \rb V^{t+1,m}\rn_{\mathrm{F}}^2 \\
= & \left\langle \Ho\lb U^{t}\Si^{t} \lb D_V^{t, 0, m}\rb^T \rb V^{t+1,m},\Ho\lb U^{t}\Si^{t} \lb D_V^{t, 0, m}\rb^T \rb V^{t+1,m}\right\rangle \\
= & \left\langle \Ho\lb U^{t}\Si^{t} \lb D_V^{t, 0, m}\rb^T \rb ,\Ho\lb U^{t}\Si^{t} \lb D_V^{t, 0, m}\rb^T \rb V^{t+1,m}\lb V^{t+1,m}\rb^T\right\rangle \\
\leq & \ln\Ho\lb \bm{1}\bm{1}^T\rb\rn_2 \ln U^{t}\Si^{t} \rn_{2,\infty}\ln V^{t+1,m} \rn_{2,\infty}\ln D_V^{t, 0, m} \rn_{\mathrm{F}}\cdot\tilde{B}_3,
\end{align*}
where the inequality follows from  Lemma~\ref{lem:bound2}. Therefore,
\begin{align*}
\tilde{B}_3 \leq & \ln\Ho\lb \bm{1}\bm{1}^T\rb\rn_2 \ln U^{t}\Si^{t} \rn_{2,\infty}\ln D_V^{t, 0, m} \rn_{\mathrm{F}}\ln V^{t+1,m} \rn_{2,\infty}\\
\leq & c_5\sqrt{\frac{n\log n}{p}}\cdot\lb 1+\frac{5C_{10}}{C_0}\rb\sqrt{\frac{\mu r}{n}}\lb 1+\frac{1}{C_0}\rb\sigma_1^{\star}\cdot\frac{8C_{10}}{C_0}\lb\frac{1}{\kappa\mu r}\sqrt{\frac{\mu r}{n}}\rb\gamma^{t}\lb \sqrt{\frac{\mu r}{n}}+\ln \De^{t+1,\infty}\rn_{2,\infty}\rb \\
\leq & \frac{1}{2C_0}\cdot\lb\frac{\sigma_r^{\star}}{\kappa\mu r}\rb\gamma^{t+1}\lb \sqrt{\frac{\mu r}{n}}+\ln \De^{t+1,\infty}\rn_{2,\infty}\rb,
\end{align*}
provided that
$$
p\geq\frac{400 c_5^2C_{10}^2}{\gamma^2}\cdot\frac{\kappa^2\mu^2 r^2\log n}{n}.
$$

\noindent\textbf{$\bullet$ Bounding $B_4$.} When $m\leq n$, 
\begin{align*}
    B_4 = \ln e_m^T\Ho\lb L^{t,m}-L^{\star}\rb V^{t+1,m}\rn_{2}
    \leq \frac{C_{10}}{20C_0}\cdot\lb\frac{\sigma_r^{\star}}{\kappa\mu r}\rb\gamma^{t+1}\lb\sqrt{\frac{\mu r}{n}} + \ln \De^{t+1,\infty} \rn_{2,\infty}\rb,
\end{align*}
where one can see that the inequality follows from \eqref{eq:I_3}, provided that  
$$
p\geq\frac{230400C_{10}^2}{\gamma^2}\cdot\frac{\kappa^4\mu^2r^2\log n}{n}.
$$
When $m> n$,
\begin{align*}
    B_4 = & \ln \Ho\lb L^{t,m}-L^{\star}\rb e_{(m-n)}e_{(m-n)}^T  V^{t+1,m}\rn_{\mathrm{F}} \\
    \leq & \ln  \Ho\lb L^{t,m}-L^{\star}\rb \rn_2\ln V^{t+1,m}\rn_{2,\infty} \leq \frac{1}{2C_0}\cdot\lb\frac{\sigma_{r}^{\star}}{\kappa\mu r}\rb\gamma^{t+1}\lb\sqrt{\frac{\mu r}{n}} + \ln \De^{t+1,\infty} \rn_{2,\infty}\rb,
\end{align*}
where the bound of $\ln\Ho\lb L^{t,m}-L^{\star}\rb \rn_2$ from \eqref{eq:induct_op} is used to get the last inequality.

Putting the bounds of $B_1$ to $B_4$ together, one has
\begin{align*}
\ln \lb E^t-E^{t,m}\rb V^{t+1,m}\rn_{\mathrm{F}} \leq &\lb 4\times\frac{C_{10}}{20C_0}+2\times\frac{1}{2C_0}\rb\cdot\lb\frac{\sigma_{r}^{\star}}{\kappa\mu r}\sqrt{\frac{\mu r}{n}}\rb\gamma^{t+1} \\
&+ \lb 4\times\frac{C_{10}}{20C_0}+2\times\frac{1}{2C_0}\rb\frac{\sigma_r^{\star}}{\kappa}\ln \De^{t+1,\infty} \rn_{2,\infty}+\frac{\sigma_r^{\star}}{16}\ln D^{t+1,\infty}\rn_{\mathrm{F}},
\end{align*}
and consequently,
\begin{equation}\label{eq:WF_induct_bound}
\begin{aligned}
\ln W^{t,m} F^{t+1,m}\rn_{\mathrm{F}}\leq \frac{C_{10}}{2C_0}\cdot\lb\frac{\sigma_{r}^{\star}}{\kappa\mu r}\sqrt{\frac{\mu r}{n}}\rb\gamma^{t+1}+\frac{\sigma_r^{\star}}{10\kappa}\ln \De^{t+1,\infty} \rn_{2,\infty}+\frac{\sigma_r^{\star}}{8}\ln D^{t+1,\infty}\rn_{\mathrm{F}}.
\end{aligned}
\end{equation}
Thus we can finally get
\begin{align*}
\ln D^{t+1,\infty}\rn_{\mathrm{F}} \leq 2\cdot\max_m\ln D^{t+1,0,m} \rn_{\mathrm{F}}\leq\frac{2C_{10}}{C_0}\cdot\lb\frac{1}{\kappa\mu r}\sqrt{\frac{\mu r}{n}}\rb\gamma^{t+1}+\frac{2}{5\kappa}\ln \De^{t+1,\infty} \rn_{2,\infty}+\frac{1}{2}\ln D^{t+1,\infty}\rn_{\mathrm{F}},
\end{align*}
and as a result,
\begin{equation}\label{eq:D_induction}
\ln D^{t+1,\infty}\rn_{\mathrm{F}} \leq \frac{4C_{10}}{C_0}\cdot\lb\frac{1}{\kappa\mu r}\sqrt{\frac{\mu r}{n}}\rb\gamma^{t+1}+\frac{4}{5\kappa}\ln \De^{t+1,\infty} \rn_{2,\infty}.
\end{equation}

Combining \eqref{eq:D_induction} with \eqref{eq:De_induction},
\begin{equation*}
    \lb 1-\frac{4+3C_{10}}{C_0}\rb\ln \De^{t+1,\infty} \rn_{2,\infty}
    \leq \frac{21+3C_{10}}{C_0}\lb\frac{1}{\mu r}\sqrt{\frac{\mu r}{n}}\rb\gamma^{t+1}+\frac{1}{C_0}\ln \De^{t+1,\infty} \rn_{2,\infty},
\end{equation*}
\begin{equation*}
    \lb 1-\frac{5+3C_{10}}{C_0}\rb\ln \De^{t+1,\infty} \rn_{2,\infty}
    \leq \frac{21+3C_{10}}{C_0}\lb\frac{1}{\mu r}\sqrt{\frac{\mu r}{n}}\rb\gamma^{t+1}.
\end{equation*}
When $C_0\geq 5(5+3C_{10})$ and $C_{10}\geq 21$,
\begin{equation*}
    \ln \De^{t+1,\infty} \rn_{2,\infty}
    \leq \frac{5C_{10}}{C_0}\lb\frac{1}{\mu r}\sqrt{\frac{\mu r}{n}}\rb\gamma^{t+1},
\end{equation*}
substituting this bound into \eqref{eq:D_induction} and we get
$$
\ln D^{t+1,\infty}\rn_{\mathrm{F}} \leq \frac{8C_{10}}{C_0}\cdot\lb\frac{1}{\kappa\mu r}\sqrt{\frac{\mu r}{n}}\rb\gamma^{t+1}.
$$

Going through the proofs, one can see  that the induction hypotheses holds provided that
\begin{equation}\label{eq:constants}
C_{10}\geq 21,~C_0\geq 8640C_{10},
\end{equation}
and $p$ and $\alpha$ satisfy
$$
p\geq\frac{\max\left\{16c_4^2C_0^2\cdot\kappa^4\mu^3r^3,~1152c_5^2C_{10}^2 \cdot\kappa^4\mu^2r^3,~230400C_{10}^2\cdot\kappa^4\mu^2 r^2,~640000C_{10}^2\cdot\kappa^2\mu r\right\}}{\gamma^2}\cdot\frac{\log n}{n},
$$
$$
\alpha \leq \min\left\{\frac{1}{4C_0}\frac{1}{\kappa^2\mu^2r^2}\cdot\frac{\gamma}{C_{\text{thresh}}},~\frac{1}{\kappa \mu r^{3/2}}\cdot\frac{\gamma}{2000K}\right\}.
$$

\section{Conclusion} \label{sec:conclusion}

In this paper, we study the robust matrix completion problem and consider a nonconvex alternating projection method. Theoretical recovery guarantee has been established for the  method with general thresholding functions (for the sparse outlier part) based on the leave-one-out analysis, thus achieving the first projection and sample splitting free result for nonconvex robust matrix completion. For future work, we would like to consider a variant of the algorithm where the projection onto the low rank part is further accelerated via Riemannian optimization \cite{accaltprj2019,Wei2020}. In addition, it is also interesting to extend the analysis to the robust completion problem of Hankel matrices \cite{Cai2021} as well as to the robust tensor completion problem \cite{Wang2023}. 


\bibliographystyle{siam}
\bibliography{ref}

\begin{thebibliography}{10}

\bibitem{Abbe2020}
{\sc E.~Abbe, J.~Fan, K.~Wang, and Y.~Zhong}, {\em Entrywise eigenvector analysis of random matrices with low expected rank}, The Annals of Statistics, 48 (2020), pp.~1452--1474.

\bibitem{Cai2021}
{\sc H.~Cai, J.-F. Cai, T.~Wang, and G.~Yin}, {\em Accelerated structured alternating projections for robust spectrally sparse signal recovery}, IEEE Transactions on Signal Processing, 69 (2021), pp.~809--821.

\bibitem{accaltprj2019}
{\sc H.~Cai, J.-F. Cai, and K.~Wei}, {\em Accelerated alternating projections for robust principal component analysis}, Journal of Machine Learning Research, 20 (2019), pp.~1--33.

\bibitem{Candes2009}
{\sc E.~J. Cand{\`e}s and B.~Recht}, {\em Exact matrix completion via convex optimization}, Foundations of Computational Mathematics, 9 (2009), pp.~717--772.

\bibitem{Candes2011}
{\sc E.~J. Candès, X.~Li, Y.~Ma, and J.~Wright}, {\em Robust principal component analysis?}, Journal of the ACM, 58 (2011), pp.~1--37.

\bibitem{Chen2019}
{\sc J.~Chen and X.~Li}, {\em Model-free nonconvex matrix completion: Local minima analysis and applications in memory-efficient kernel {PCA}}, Journal of Machine Learning Research, 20 (2019), pp.~1--39.

\bibitem{Chen2015}
{\sc Y.~Chen}, {\em Incoherence-optimal matrix completion}, IEEE Transactions on Information Theory, 61 (2015), pp.~2909--2923.

\bibitem{Chen2014}
{\sc Y.~Chen and Y.~Chi}, {\em Robust spectral compressed sensing via structured matrix completion}, IEEE Transactions on Information Theory, 60 (2014), pp.~6576--6601.

\bibitem{Chen2021}
{\sc Y.~Chen, J.~Fan, C.~Ma, and Y.~Yan}, {\em Bridging convex and nonconvex optimization in robust {PCA}: Noise, outliers and missing data}, The Annals of Statistics, 49 (2021), pp.~2948--2971.

\bibitem{Cherapanamjeri2017}
{\sc Y.~Cherapanamjeri, K.~Gupta, and P.~Jain}, {\em Nearly optimal robust matrix completion}, in International Conference on Machine Learning, PMLR, 2017, pp.~797--805.

\bibitem{Ding2020}
{\sc L.~Ding and Y.~Chen}, {\em Leave-one-out approach for matrix completion: Primal and dual analysis}, {IEEE} Transactions on Information Theory, 66 (2020), pp.~7274--7301.

\bibitem{El2018}
{\sc N.~El~Karoui}, {\em On the impact of predictor geometry on the performance on high-dimensional ridge-regularized generalized robust regression estimators}, Probability Theory and Related Fields, 170 (2018), pp.~95--175.

\bibitem{El2013}
{\sc N.~El~Karoui, D.~Bean, P.~J. Bickel, C.~Lim, and B.~Yu}, {\em On robust regression with high-dimensional predictors}, Proceedings of the National Academy of Sciences, 110 (2013), pp.~14557--14562.

\bibitem{Fan2001}
{\sc J.~Fan and R.~Li}, {\em Variable selection via nonconcave penalized likelihood and its oracle properties}, Journal of the American Statistical Association, 96 (2001), pp.~1348--1360.

\bibitem{gross2011recovering}
{\sc D.~Gross}, {\em Recovering low-rank matrices from few coefficients in any basis}, IEEE Transactions on Information Theory, 57 (2011), pp.~1548--1566.

\bibitem{Hardt2014}
{\sc M.~Hardt and M.~Wootters}, {\em Fast matrix completion without the condition number}, in Proceedings of the 27th Conference on Learning Theory, PMLR, 2014, pp.~638--678.

\bibitem{Jain2010}
{\sc P.~Jain, R.~Meka, and I.~Dhillon}, {\em Guaranteed rank minimization via singular value projection}, in Advances in Neural Information Processing Systems, 2010, pp.~937--945.

\bibitem{Jain2015}
{\sc P.~Jain and P.~Netrapalli}, {\em Fast exact matrix completion with finite samples}, in Proceedings of the 28th Conference on Learning Theory, PMLR, 2015, pp.~1007--1034.

\bibitem{Adel2018}
{\sc A.~Javanmard and A.~Montanari}, {\em {Debiasing the lasso: Optimal sample size for Gaussian designs}}, The Annals of Statistics, 46 (2018), pp.~2593--2622.

\bibitem{Li1998}
{\sc R.-C. Li}, {\em {Relative Perturbation Theory: II. Eigenspace and Singular Subspace Variations}}, SIAM Journal on Matrix Analysis and Applications, 20 (1998), pp.~471--492.

\bibitem{Ma2019}
{\sc C.~Ma, K.~Wang, Y.~Chi, and Y.~Chen}, {\em Implicit regularization in nonconvex statistical estimation: Gradient descent converges linearly for phase retrieval, matrix completion, and blind deconvolution}, Foundations of Computational Mathematics, 20 (2019), pp.~451--632.

\bibitem{mishra2012riemannian}
{\sc B.~Mishra, K.~A. Apuroop, and R.~Sepulchre}, {\em A {R}iemannian geometry for low-rank matrix completion}, arXiv preprint arXiv:1211.1550,  (2012).

\bibitem{tanner2013normalized}
{\sc J.~Tanner and K.~Wei}, {\em Normalized iterative hard thresholding for matrix completion}, SIAM Journal on Scientific Computing, 35 (2013), pp.~S104--S125.

\bibitem{Tropp2011}
{\sc J.~A. Tropp}, {\em User-friendly tail bounds for sums of random matrices}, Foundations of Computational Mathematics, 12 (2011), pp.~389--434.

\bibitem{procrustesflow2016}
{\sc S.~Tu, R.~Boczar, M.~Simchowitz, M.~Soltanolkotabi, and B.~Recht}, {\em Low-rank solutions of linear matrix equations via procrustes flow}, in International Conference on Machine Learning, PMLR, 2016, pp.~964--973.

\bibitem{vandereycken2013low}
{\sc B.~Vandereycken}, {\em Low-rank matrix completion by {R}iemannian optimization}, SIAM Journal on Optimization, 23 (2013), pp.~1214--1236.

\bibitem{Wang2023}
{\sc H.~Wang, J.~Chen, and K.~Wei}, {\em Implicit regularization and entrywise convergence of {R}iemannian optimization for low tucker-rank tensor completion}, Journal of Machine Learning Research, 24 (2023), pp.~1--84.

\bibitem{Wei2020}
{\sc K.~Wei, J.-F. Cai, T.~F.~Chan, and S.~Leung}, {\em Guarantees of {R}iemannian optimization for low rank matrix completion}, Inverse Problems and Imaging, 14 (2020), pp.~233--265.

\bibitem{Yi2016}
{\sc X.~Yi, D.~Park, Y.~Chen, and C.~Caramanis}, {\em Fast algorithms for robust {PCA} via gradient descent}, in Proceedings of the 30th International Conference on Neural Information Processing Systems, 2016, pp.~4159--4167.

\bibitem{zheng-pg}
{\sc Q.~Zheng and J.~Lafferty}, {\em Convergence analysis for rectangular matrix completion using {B}urer-{M}onteiro factorization and gradient descent}, arXiv preprint arXiv:1605.07051,  (2016).

\bibitem{Zhong2018}
{\sc Y.~Zhong and N.~Boumal}, {\em Near-optimal bounds for phase synchronization}, SIAM Journal on Optimization, 28 (2018), pp.~989--1016.

\end{thebibliography}

\appendix

\section{Useful Lemmas and Supplementary Proofs}

In this section we collect some useful lemmas which are used throughout this paper. Proofs are mostly provided for the new ones and when the result is different from its reference.

\subsection{Useful Lemmas}\label{subsec:useful}

\begin{lemma}\label{lem:S_op}
If $S\in\mathbb{R}^{n\times n}$ is $\alpha$-sparse, i.e., $S$ has no more than $\alpha n$ nonzero entries per row and column, then $\ln S\rn_2\leq \alpha n\cdot\ln S\rn_{\infty}$.
\end{lemma}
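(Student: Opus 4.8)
The plan is to start from the variational characterization $\|S\|_2 = \max\{\, y^\top S x : \|x\|_2 = \|y\|_2 = 1 \,\}$ and exploit the sparsity pattern together with the AM--GM inequality. Fix unit vectors $x,y\in\mathbb{R}^n$ and let $\Gamma\subseteq[n]\times[n]$ denote the support of $S$. Then I would write
\[
y^\top S x = \sum_{(i,j)\in\Gamma} y_i S_{ij} x_j \le \|S\|_{\infty} \sum_{(i,j)\in\Gamma} |y_i|\,|x_j| \le \frac{\|S\|_{\infty}}{2}\sum_{(i,j)\in\Gamma}\bigl(y_i^2 + x_j^2\bigr).
\]

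Next comes the key counting step: by hypothesis each row of $S$ has at most $\alpha n$ nonzero entries, so each term $y_i^2$ appears at most $\alpha n$ times in the last sum; similarly each column has at most $\alpha n$ nonzeros, so each $x_j^2$ appears at most $\alpha n$ times. Hence the double sum is bounded by $\alpha n\bigl(\sum_i y_i^2 + \sum_j x_j^2\bigr) = 2\alpha n$, which gives $y^\top S x \le \alpha n\,\|S\|_{\infty}$. Taking the supremum over unit $x,y$ then yields $\|S\|_2 \le \alpha n\,\|S\|_{\infty}$.

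An equally short alternative, should a more norm-theoretic phrasing be preferred, is to invoke the interpolation bound $\|S\|_2 \le \sqrt{\|S\|_{1\to 1}\,\|S\|_{\infty\to\infty}}$ (which follows from $\|S\|_2^2 = \rho(S^\top S) \le \|S^\top S\|_{\infty\to\infty} \le \|S^\top\|_{\infty\to\infty}\|S\|_{\infty\to\infty}$ together with $\|S^\top\|_{\infty\to\infty} = \|S\|_{1\to 1}$), and then note that the row-sparsity gives $\|S\|_{\infty\to\infty} = \max_i \sum_j |S_{ij}| \le \alpha n\,\|S\|_{\infty}$ and the column-sparsity gives $\|S\|_{1\to 1} = \max_j \sum_i |S_{ij}| \le \alpha n\,\|S\|_{\infty}$. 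There is no substantive obstacle in this lemma; the only point requiring care is the bookkeeping of multiplicities in the double sum, namely that row-sparsity is exactly what controls the repetitions of $y_i^2$ while column-sparsity controls the repetitions of $x_j^2$.
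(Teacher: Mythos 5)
Your proposal is correct, and in fact it contains two valid proofs. Your primary argument (the variational characterization of $\|S\|_2$ combined with AM--GM and the multiplicity count) is a genuinely different, more elementary route than the paper's: the paper simply invokes the interpolation inequality $\|S\|_2 \le \sqrt{\|S\|_1\,\|S^T\|_1}$ (operator norms induced by $\ell_1$) together with the observation that both factors are at most $\alpha n\,\|S\|_\infty$ by the row/column sparsity --- which is precisely your ``equally short alternative.'' The counting in your first argument is right: row-sparsity caps the multiplicity of each $y_i^2$ and column-sparsity caps that of each $x_j^2$, so the double sum is at most $2\alpha n$. What the direct argument buys you is self-containedness (no appeal to submultiplicativity of operator norms or to $\rho(S^TS)\le\|S^TS\|_{\infty\to\infty}$); what the paper's route buys is brevity and a cleaner separation of the two ingredients (a general norm inequality plus a sparsity bound on induced norms). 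Either is acceptable; there is no gap in your reasoning.
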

\begin{proof}
Denote by $\ln \cdot \rn_1$  the operator norm induced by the $l_1$ norm. The result follows from
$$
\ln S \rn_2\leq \sqrt{\ln S\rn_1\ln S^T\rn_1},
$$
and the fact that
$$
\max\left\{\ln S\rn_1, \ln S^T\rn_1\right\}\leq \alpha n\cdot\ln S\rn_{\infty}.
$$
\end{proof}

\begin{lemma}[{\cite[Lemma 2]{Chen2015}}]\label{lem:init}
Suppose $Z\in\mathbb{R}^{n\times n}$ is a fixed matrix. There exists a universal constant $c_4>1$, such that  
$$
\ln \Ho(Z)\rn_2\leq c_4\lb \frac{\log n}{p}\ln Z\rn_{\infty}+\sqrt{\frac{\log n}{p}}\cdot\max\left\{\ln Z\rn_{2,\infty},\ln Z^T\rn_{2,\infty}\right\}\rb
$$
holds with high probability.
\end{lemma}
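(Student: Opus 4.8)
The plan is to view $\Ho(Z)$ as a sum of independent, mean-zero random matrices and invoke the (rectangular) matrix Bernstein inequality. Let $\delta_{ij}$ be the $\mathrm{Bernoulli}(p)$ indicator of $(i,j)\in\Omega$, so that $(\Ho(Z))_{ij}=(1-\delta_{ij}/p)Z_{ij}$; since $\mathbb{E}[1-\delta_{ij}/p]=0$ we may write $\Ho(Z)=\sum_{i,j}X_{ij}$ with $X_{ij}:=(1-\delta_{ij}/p)Z_{ij}\,e_ie_j^T$, and the $X_{ij}$ are independent with $\mathbb{E}[X_{ij}]=0$. First I would record the uniform bound $\|X_{ij}\|_2=|1-\delta_{ij}/p|\,|Z_{ij}|\le\tfrac1p\|Z\|_\infty=:R$ (using $|1-\delta_{ij}/p|\le1/p$). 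Next I would compute the two matrix variance statistics: from $e_ie_j^T(e_ie_j^T)^T=e_ie_i^T$ and $\mathbb{E}[(1-\delta_{ij}/p)^2]=(1-p)/p$ one gets
$$
\sum_{i,j}\mathbb{E}\bigl[X_{ij}X_{ij}^T\bigr]=\frac{1-p}{p}\diag\bigl(\|Z_{1,:}\|_2^2,\dots,\|Z_{n,:}\|_2^2\bigr),
$$
so $\bigl\|\sum_{i,j}\mathbb{E}[X_{ij}X_{ij}^T]\bigr\|_2\le\tfrac1p\|Z\|_{2,\infty}^2$ and, symmetrically, $\bigl\|\sum_{i,j}\mathbb{E}[X_{ij}^TX_{ij}]\bigr\|_2\le\tfrac1p\|Z^T\|_{2,\infty}^2$; hence the Bernstein variance parameter is $\sigma^2:=\tfrac1p\max\{\|Z\|_{2,\infty}^2,\|Z^T\|_{2,\infty}^2\}$.

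The rectangular matrix Bernstein inequality then gives $\mathbb{P}\{\|\Ho(Z)\|_2\ge t\}\le 2n\exp\bigl(\tfrac{-t^2/2}{\sigma^2+Rt/3}\bigr)$ for all $t>0$. I would take $t=c_4\bigl(\sqrt{\sigma^2\log n}+R\log n\bigr)$ with $c_4>1$ a universal constant: when $\sqrt{\sigma^2\log n}\ge R\log n$ the denominator is controlled by $\sigma^2$ and the exponent is at most a numerical constant times $-c_4\log n$, and in the complementary regime the denominator is controlled by $Rt$ and the exponent is again at most a numerical constant times $-c_4\log n$. Therefore, taking $c_4$ large enough makes the failure probability at most $n^{-C}$ for any prescribed constant $C$, i.e., the estimate holds with high probability. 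Finally, substituting the values of $\sigma^2$ and $R$ shows that this choice of $t$ is exactly the claimed upper bound
$$
c_4\Bigl(\tfrac{\log n}{p}\|Z\|_\infty+\sqrt{\tfrac{\log n}{p}}\,\max\{\|Z\|_{2,\infty},\|Z^T\|_{2,\infty}\}\Bigr).
$$

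I do not expect a genuine obstacle here; the only two points requiring care are (i) handling the non-symmetric/rectangular structure correctly, i.e., controlling \emph{both} $\sum_{i,j}\mathbb{E}[X_{ij}X_{ij}^T]$ and $\sum_{i,j}\mathbb{E}[X_{ij}^TX_{ij}]$ in Bernstein's bound — equivalently one may symmetrize $Z$ (as is done with $\widehat{L^{\star}}$ elsewhere in the paper) and apply the symmetric version of matrix Bernstein — and (ii) tracking the constants so that $c_4$ can be taken strictly larger than $1$ while still absorbing the $2n$ union-bound prefactor and yielding an arbitrarily large negative power of $n$ in the failure probability.
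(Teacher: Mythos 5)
Your proposal is correct and is essentially the proof of the cited result \cite[Lemma 2]{Chen2015}, which the paper imports without reproving: decompose $\Ho(Z)=\sum_{i,j}(1-\delta_{ij}/p)Z_{ij}e_ie_j^T$ into independent mean-zero summands, bound $R$ and the two variance statistics exactly as you do, and apply the rectangular matrix Bernstein inequality with $t=c_4(\sqrt{\sigma^2\log n}+R\log n)$. The computations of $R=\tfrac1p\|Z\|_\infty$ and $\sigma^2=\tfrac1p\max\{\|Z\|_{2,\infty}^2,\|Z^T\|_{2,\infty}^2\}$ are right, and your two cautionary points (controlling both variance terms for the non-symmetric case, and choosing $c_4$ large enough to absorb the dimensional prefactor) are precisely the only places requiring care.
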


\begin{lemma}[{\cite[Lemma 22]{Ding2020}}]\label{lem:uniform}
There exists a universal constant $c_5>1$ such that if $p \geq \frac{\log n}{n}$, then with high probability,
\begin{align*}
\left\|\mathcal{H}_{\Omega}(A)\right\|_{2} \leq c_5 \sqrt{\frac{k n \log n}{p}}\|A\|_{\infty}
\end{align*}
holds uniformly for any $A \in \mathbb{R}^{n \times n}$ satisfying $\text{rank}(A)\leq k$.
\end{lemma}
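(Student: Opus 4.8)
The approach I would take is a covering-plus-concentration argument, following \cite[Lemma~22]{Ding2020}. By homogeneity I may assume $\ln A\rn_{\infty}=1$, so it suffices to bound $\sup\lcb\ln\Ho(A)\rn_2:\rank(A)\leq k,\ \ln A\rn_{\infty}\leq1\rcb$ on a high-probability event. Writing $\Ho(A)=\sum_{i,j}\lb 1-\delta_{ij}/p\rb A_{ij}\,e_ie_j^T$ and passing to the Hermitian dilation, the matrix Bernstein inequality applies to any \emph{fixed} matrix $A_0$: each summand has spectral norm at most $p^{-1}$ and the matrix variance is at most $p^{-1}\max\lcb\max_i\ln (A_0)_{i,:}\rn_2^2,\ \max_j\ln (A_0)_{:,j}\rn_2^2\rcb\leq n/p$ when $\ln A_0\rn_{\infty}\leq1$, so $\ln\Ho(A_0)\rn_2\lesssim\sqrt{n\log n/p}$ with probability $1-n^{-O(1)}$. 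To pass to nearby matrices one uses the deterministic estimate $\ln\Ho(A-A_0)\rn_2\leq\ln A-A_0\rn_{\mathrm{F}}+p^{-1}\ln\Po(A-A_0)\rn_{\mathrm{F}}\leq(1+p^{-1})\ln A-A_0\rn_{\mathrm{F}}$, which is negligible once $A_0$ ranges over an $\epsilon$-net with $\epsilon$ inverse-polynomial in $n$.

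The obstacle — and the reason this is more than routine — is that a naive net is far too large. The rank-$\leq k$ matrices form a manifold of dimension $\asymp kn$, so an $\epsilon$-net of $\lcb A:\rank(A)\leq k,\ \ln A\rn_{\infty}\leq1\rcb$ has log-cardinality $\asymp kn\log n$, whereas the Bernstein estimate for a single $A_0$ only controls the failure probability at level $n^{-O(1)}$; the variance proxy can genuinely be as large as $n/p$ (take $A_0=e_1\bm{1}^T$), so a union bound over such a net yields only the useless estimate $\ln\Ho(A)\rn_2\lesssim\lb kn\log n/p\rb\ln A\rn_{\infty}$, polynomially weaker than claimed. The remedy is to dualize: since $\Ho$ is self-adjoint, $\ln\Ho(A)\rn_2=\sup\langle\Ho(uv^T),A\rangle$ over unit vectors $u,v$, so it suffices to control, uniformly in $u,v$, the quantity $\sup\lcb\langle\Ho(uv^T),A\rangle:\rank(A)\leq k,\ \ln A\rn_{\infty}\leq1\rcb$. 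By $\ell_1/\ell_\infty$ duality and von Neumann's trace inequality this last supremum is at most $\min\lcb\ln\Ho(uv^T)\rn_{\ell_1},\ n\,\sigma_{1:k}\lb\Ho(uv^T)\rb\rcb$, where $\ln\cdot\rn_{\ell_1}$ is the entrywise $\ell_1$-norm, $\sigma_{1:k}(\cdot)$ the sum of the $k$ largest singular values, and the factor $n$ enters through $\ln A\rn_2\leq\ln A\rn_{\mathrm{F}}\leq n\ln A\rn_{\infty}$; the first term dominates when $A$ is spiky and the second when $A$ is flat. The sphere carries nets of log-size only $\asymp n\log n$, a factor of $k$ smaller.

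Even so, the variance proxy of the dilation of $\Ho(uv^T)$ is $\asymp p^{-1}\max\lcb\ln u\rn_\infty^2,\ln v\rn_\infty^2\rcb$, which degrades when $u$ or $v$ is spiky, so the final ingredient is to split the coordinates of $u$ and $v$ by magnitude. On the low-magnitude parts $uv^T$ is incoherent, the variance proxy is small, and matrix Bernstein controls $\sigma_{1:k}\lb\Ho(uv^T)\rb$; on the high-magnitude parts $u,v$ are supported on few coordinates, so $\ln\Ho(uv^T)\rn_{\ell_1}$ can be bounded using a Chernoff estimate on the number of observed entries in those rows and columns. Summing over the dyadic magnitude levels and combining the two bounds at each level yields $\sup_{u,v}\min\lcb\ln\Ho(uv^T)\rn_{\ell_1},\ n\,\sigma_{1:k}\lb\Ho(uv^T)\rb\rcb\lesssim\sqrt{kn\log n/p}$, which is the assertion. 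The main technical burden is precisely this magnitude peeling and the bookkeeping of keeping the Bernstein variance proxy and the relevant covering numbers under control at every scale; the rest is the standard dilation / Bernstein / union-bound machinery. Full details are in \cite{Ding2020} (cf.\ also \cite{Chen2015}).
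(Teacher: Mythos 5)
The paper does not actually prove this lemma---it is imported verbatim from \cite[Lemma 22]{Ding2020}---so your sketch has to stand on its own, and as written it does not. Your diagnosis of why a naive net over the rank-$\leq k$ ball fails is correct, but the dual route you substitute for it breaks in two concrete places. First, the intermediate bound $\sup_A\langle\Ho(uv^T),A\rangle\leq\min\{\|\Ho(uv^T)\|_{\ell_1},\,n\,\sigma_{1:k}(\Ho(uv^T))\}$ is valid but is \emph{not} $\lesssim\sqrt{kn\log n/p}$ uniformly in $u,v$: already at the perfectly flat pair $u=v=\bm{1}/\sqrt{n}$ one has $\|\Ho(uv^T)\|_{\ell_1}\approx 2(1-p)n$, while $n\,\sigma_{1:k}(\Ho(uv^T))=\sigma_{1:k}(H)$ with $H:=\Ho(\bm{1}\bm{1}^T)$, and $\sigma_{1:k}(H)\geq\frac{k}{n}\sum_{l}\sigma_l(H)\geq\frac{k}{n}\cdot\|H\|_{\mathrm{F}}^2/\|H\|_2\gtrsim k\sqrt{n/(p\log n)}$ because $\|H\|_{\mathrm{F}}^2\approx n^2(1-p)/p$ while $\|H\|_2=O(\sqrt{n\log n/p})$ (most singular values of $H$ are of order $\sqrt{n/p}$). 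For constant $p$ and, say, $\log^2 n\ll k\ll n/\log n$, \emph{both} terms of the min exceed the target $\sqrt{kn\log n/p}$ by a factor growing like $\sqrt{k}/\log n$, so no dyadic peeling of $u,v$ can deliver the conclusion you assert---the flat level is itself the counterexample. Second, even at levels where the target is attainable, certifying $\sigma_{1:k}(\Ho(uv^T))$ by applying matrix Bernstein to each $(u,v)$ in a net of the sphere and union bounding reproduces exactly the obstruction you flagged for the primal net: the net has $e^{\Theta(n)}$ points, so each application must hold with probability $1-e^{-\Theta(n)}$, at which confidence Bernstein only gives a deviation $\gtrsim n\|u\|_\infty\|v\|_\infty/p$, i.e.\ $1/p$ for flat $u,v$, far above the required $\sqrt{\log n/(np)}$.

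The missing ingredient is deterministic rather than probabilistic: $\Ho(xy^T)=\diag(x)\,H\,\diag(y)$, so for \emph{any} factorization $A=XY^T$ one has $\ln\Ho(A)\rn_2\leq\ln H\rn_2\ln X\rn_{2,\infty}\ln Y\rn_{2,\infty}$ (this is precisely the mechanism behind Lemma~\ref{lem:bound2} in the paper). Combined with the classical fact that every rank-$k$ matrix admits a factorization with $\ln X\rn_{2,\infty}\ln Y\rn_{2,\infty}\leq\sqrt{k}\,\ln A\rn_{\infty}$ (a John's-theorem argument; in factorization-norm language, $\gamma_2(A)\leq\sqrt{\rank(A)}\,\ln A\rn_{\infty}$), the uniformity over $A$ comes for free, and the only concentration needed is the single bound $\ln H\rn_2\leq C\sqrt{n\log n/p}$, obtainable from one application of matrix Bernstein or from Lemma~\ref{lem:init} applied to $\bm{1}\bm{1}^T$ using $p\geq\log n/n$. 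That chain gives $\ln\Ho(A)\rn_2\leq C\sqrt{kn\log n/p}\,\ln A\rn_{\infty}$ directly; the covering-plus-peeling machinery you describe is neither needed nor, as sketched, capable of producing it.
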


Suppose that $L=L^{\star}+E\in\mathbb{R}^{n\times n}$. Recall that the columns of $\frac{1}{\sqrt{2}}F^{\star}$ are the top-$r$ ($r<n$) orthonormal eigenvectors of $\widehat{L^{\star}}$. Denote by $\frac{1}{\sqrt{2}}F$ the top-$r$ orthonormal eigenvectors of $\widehat{L}$. Let the SVD of the matrix $H:=\frac12(F^{\star})^T F$ be $A\widetilde{\Sigma} B^T$, and define $G:=A B^T$. The following two lemmas can be readily derived from \cite[Lemma 45]{Ma2019} and \cite[Lemma 1]{Ding2020}, respectively.
\begin{lemma}[{\cite[Lemma 45]{Ma2019}}]\label{lem:op} If $\|E\|_{2}\leq\frac{1}{2}\sigma_r^{\star}$, then 
$$
\left\|F -F^{\star}G\right\|_2 \leq \frac{4}{\sigma_{r }^{\star}}\left\|E\right\|_2.
$$
\end{lemma}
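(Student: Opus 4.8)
The plan is to symmetrize and then apply a standard eigenvector perturbation bound (this is essentially how \cite[Lemma 45]{Ma2019} is proved, and the symmetrization reduces our general matrix $L$ to the symmetric setting). Write $\widehat{L}=\widehat{L^{\star}}+\widehat{E}$; since the eigenvalues of $\widehat{E}$ are $\pm\sigma_i(E)$ (together with zeros), $\|\widehat{E}\|_2=\|E\|_2$. By construction the columns of $\frac{1}{\sqrt 2}F^{\star}$ and $\frac{1}{\sqrt 2}F$ are orthonormal bases of the top-$r$ eigenspaces of $\widehat{L^{\star}}$ and $\widehat{L}$, respectively, and because $r<n$ the $(r+1)$-st eigenvalue of $\widehat{L^{\star}}$ equals $0$ while its $r$-th eigenvalue equals $\sigma_r^{\star}$. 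First I would use Weyl's inequality together with $\|E\|_2\le\frac12\sigma_r^{\star}$: the top-$r$ eigenvalues of $\widehat{L}$ are each at least $\sigma_r^{\star}-\|E\|_2\ge\frac12\sigma_r^{\star}$, whereas every remaining eigenvalue of $\widehat{L^{\star}}$ is at most $0$. Hence the top-$r$ eigenvalues of $\widehat{L}$ are separated from the complementary part of the spectrum of $\widehat{L^{\star}}$ by $\delta:=\sigma_r^{\star}-\|E\|_2\ge\frac12\sigma_r^{\star}>0$.

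Next I would invoke the Davis--Kahan $\sin\Theta$ theorem in its generalized (separated-spectra) form for the pair $\widehat{L^{\star}}$, $\widehat{L}=\widehat{L^{\star}}+\widehat{E}$, which yields $\|\sin\Theta\|_2\le\|\widehat{E}\|_2/\delta\le 2\|E\|_2/\sigma_r^{\star}$, where $\Theta$ collects the principal angles between the two top-$r$ eigenspaces, equivalently $\|\sin\Theta\|_2$ is the spectral-norm distance between the corresponding eigenprojections.

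It then remains to pass from $\|\sin\Theta\|_2$ to $\|F-F^{\star}G\|_2$. Setting $Q:=\frac{1}{\sqrt 2}F$ and $Q^{\star}:=\frac{1}{\sqrt 2}F^{\star}$, we have $(Q^{\star})^TQ=H=A\widetilde{\Si}B^T$, whose singular values are the cosines of the principal angles, and $G=AB^T$. A direct expansion gives $(Q-Q^{\star}G)^T(Q-Q^{\star}G)=2I-G^TH-H^TG=2B(I-\widetilde{\Si})B^T$, so $\|Q-Q^{\star}G\|_2^2=2\|I-\widetilde{\Si}\|_2=2\max_i(1-\cos\theta_i)\le 2\max_i\sin^2\theta_i=2\|\sin\Theta\|_2^2$. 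Rescaling, $\|F-F^{\star}G\|_2=\sqrt 2\,\|Q-Q^{\star}G\|_2\le 2\|\sin\Theta\|_2\le 4\|E\|_2/\sigma_r^{\star}$, which is the claimed bound.

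None of the steps is a genuine obstacle — the statement is classical — but the point requiring care is the use of the \emph{generalized} Davis--Kahan bound rather than the naive one: since $\|E\|_2$ may be as large as $\frac12\sigma_r^{\star}$, a bound built from the gap $\lambda_r(\widehat{L})-\lambda_{r+1}(\widehat{L})\ge\sigma_r^{\star}-2\|E\|_2$ could degenerate to $0$, so one must instead separate the top-$r$ eigenvalues of the \emph{perturbed} matrix from the \emph{unperturbed} tail spectrum, which keeps $\delta\ge\frac12\sigma_r^{\star}$. A minor secondary point is that the computation above bounds $\|F-F^{\star}G\|_2$ for the specific $G$ obtained from the SVD of $H$ that is fixed throughout the paper, so there is no need to separately argue that this $G$ attains the Procrustes minimum in spectral norm.
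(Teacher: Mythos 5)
Your proof is correct. The paper itself offers no argument for this lemma: it simply cites \cite[Lemma 45]{Ma2019} and asserts that the general (non-symmetric) case ``can be readily derived'' from it via the symmetrization $\widehat{L}=\widehat{L^{\star}}+\widehat{E}$. What you have written is exactly that derivation, carried out in full: $\|\widehat{E}\|_2=\|E\|_2$, Weyl plus the separated-spectra form of Davis--Kahan with $\delta=\sigma_r^{\star}-\|E\|_2\geq\frac12\sigma_r^{\star}$ (equivalently, the version quoted before Lemma~\ref{lem:perturb_F} with denominator $\delta-\|W\|_2=\sigma_r^{\star}-\|E\|_2$ gives the same constant), and the identity $(Q-Q^{\star}G)^T(Q-Q^{\star}G)=2B(I-\widetilde{\Sigma})B^T$ combined with $1-\cos\theta\leq\sin^2\theta$ to convert $\|\sin\Theta\|_2$ into $\|F-F^{\star}G\|_2$. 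Your closing caveats are also well placed: using the gap between the perturbed top eigenvalues and the unperturbed tail (rather than the gap of $\widehat{L}$ alone) is what keeps the denominator bounded below when $\|E\|_2$ is as large as $\frac12\sigma_r^{\star}$, and the bound is established directly for the specific $G=AB^T$ fixed throughout the paper, which is all the lemma claims.
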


\begin{lemma}[{\cite[Lemma 1]{Ding2020}}]\label{lem:perturb_gt} If $\|E\|_{2}\leq\frac{1}{2}\sigma_r^{\star}$, we have the following bounds
$$
\begin{aligned}
\left\|\Si^{\star} G-G \Si^{\star}\right\|_{2} & \leq\left(2+\frac{2 \sigma_1^{\star}}{\sigma_r^{\star}-\|E\|_{2}}\right)\|E\|_{2}, \\
\left\|\Si^{\star} H-G \Si^{\star}\right\|_{2} & \leq\left(2+\frac{\sigma_1^{\star}}{\sigma_r^{\star}-\|E\|_{2}}\right)\|E\|_{2}.
\end{aligned}
$$
\end{lemma}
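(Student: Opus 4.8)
The plan is to reduce the two bounds to three standard ingredients: a dilation identity linking $\Si^\star$, the diagonal matrix $\La$ of the $r$ largest eigenvalues of $\widehat L$, the matrix $H$, and $\widehat E := \widehat L - \widehat{L^\star}$; Weyl's inequality; and a Davis--Kahan $\sin\Theta$ bound. Throughout I would abbreviate $X^\star := \tfrac1{\sqrt2}F^\star$ and $X := \tfrac1{\sqrt2}F$, so that $X^\star, X$ have orthonormal columns, $\widehat{L^\star}X^\star = X^\star\Si^\star$, $\widehat L X = X\La$, and $H = (X^\star)^T X$; recall also that $\widehat L = \widehat{L^\star} + \widehat E$ and $\|\widehat E\|_2 = \|E\|_2$ (the symmetrization preserves the spectral norm, a fact already used in the base-case analysis).

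First I would record the identity
$$
\Si^\star H - H\La \;=\; (X^\star)^T\widehat{L^\star}X - (X^\star)^T\widehat L X \;=\; -(X^\star)^T\widehat E X,
$$
which follows from the symmetry of $\widehat{L^\star}$ together with the two eigen-relations, so that $\|\Si^\star H - H\La\|_2 \le \|E\|_2$. Under $\|E\|_2 \le \tfrac12\sigma_r^\star$, Weyl's inequality gives $\|\La - \Si^\star\|_2 = \max_{1\le i\le r}|\lambda_i(\widehat L) - \sigma_i^\star| \le \|E\|_2$; moreover $\lambda_r(\widehat{L^\star}) = \sigma_r^\star$ while every eigenvalue of $\widehat L$ outside its top-$r$ block is at most $\lambda_{r+1}(\widehat{L^\star}) + \|E\|_2 = \|E\|_2$, so the relevant separation is at least $\sigma_r^\star - \|E\|_2 > 0$. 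A Davis--Kahan $\sin\Theta$ bound with this one-sided gap then yields $\|\sin\Theta\|_2 \le \epsilon := \|E\|_2/(\sigma_r^\star - \|E\|_2)$, which is at most $1$ because $\|E\|_2 \le \tfrac12\sigma_r^\star$. Since $H = A\widetilde\Sigma B^T$ and $G = AB^T$, the singular values of $H$ are the cosines of the principal angles, hence $\|H - G\|_2 = \|I - \widetilde\Sigma\|_2 = 1 - \cos\theta_{\max} \le \sin^2\theta_{\max} \le \epsilon^2 \le \epsilon$, and also $\|H\|_2 \le 1$.

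The two estimates then drop out of triangle inequalities. For the second bound I would split
$$
\Si^\star H - G\Si^\star \;=\; (\Si^\star H - H\La) + H(\La - \Si^\star) + (H - G)\Si^\star,
$$
and bound the three summands by $\|E\|_2$, by $\|H\|_2\,\|\La - \Si^\star\|_2 \le \|E\|_2$, and by $\|H - G\|_2\,\sigma_1^\star \le \epsilon\,\sigma_1^\star = \sigma_1^\star\|E\|_2/(\sigma_r^\star - \|E\|_2)$, which sums to $\bigl(2 + \tfrac{\sigma_1^\star}{\sigma_r^\star - \|E\|_2}\bigr)\|E\|_2$. For the first bound I would write $\Si^\star G - G\Si^\star = \Si^\star(G - H) + (\Si^\star H - G\Si^\star)$ and add $\sigma_1^\star\|G - H\|_2 \le \sigma_1^\star\|E\|_2/(\sigma_r^\star - \|E\|_2)$ to the estimate just obtained, giving $\bigl(2 + \tfrac{2\sigma_1^\star}{\sigma_r^\star - \|E\|_2}\bigr)\|E\|_2$. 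I do not anticipate a real obstacle; the one point that needs care is the Davis--Kahan step, because the spectral gap of $\widehat{L^\star}$ at the eigenvalue $0$ is useless here — one must instead use the separation between $\sigma_r^\star$ and the largest eigenvalue of $\widehat L$ outside its top-$r$ block — together with the observation that $\epsilon \le 1$, which is what lets $\epsilon^2$ be relaxed to $\epsilon$ so that the constants match the stated bounds.
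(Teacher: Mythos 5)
Your proof is correct and self-contained; note that the paper itself does not prove this lemma, stating it as a citation to \cite[Lemma~1]{Ding2020}, so there is no in-paper proof to compare against.

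Your argument is sound on all the points that could go wrong. The dilation identity $\Si^\star H - H\La = -(X^\star)^T \widehat E X$ uses the symmetry of $\widehat{L^\star}$ correctly, so $\|\Si^\star H - H\La\|_2 \le \|\widehat E\|_2 = \|E\|_2$ (the symmetrization preserves the spectral norm because the eigenvalues of $\widehat E$ are $\pm\sigma_i(E)$). Weyl gives $\|\La - \Si^\star\|_2 \le \|E\|_2$. Your Davis--Kahan step matches the variant stated in the paper (eigen-gap $\delta = \lambda_r(\widehat{L^\star}) - \lambda_{r+1}(\widehat{L^\star}) = \sigma_r^\star - 0$, denominator $\delta - \|W\|_2$), so $\|\sin\Theta\|_2 \le \|E\|_2/(\sigma_r^\star - \|E\|_2) =: \epsilon$, and $\epsilon \le 1$ is exactly the content of $\|E\|_2 \le \tfrac12\sigma_r^\star$. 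The inequality $1 - \cos\theta \le \sin^2\theta$ holds for $\theta \in [0,\pi/2]$ (it reduces to $\cos^2(\theta/2) \ge 1/2$), so $\|H-G\|_2 = \|I-\widetilde\Sigma\|_2 = 1-\cos\theta_{\max} \le \epsilon^2 \le \epsilon$; and $\|H\|_2 \le 1$ since the singular values of $H$ are cosines. The two telescoping decompositions then give precisely the stated constants: $2\|E\|_2$ from the two ``direct'' terms, and one or two copies of $\sigma_1^\star\epsilon$ depending on whether the $\Si^\star(G-H)$ piece is present. This is a clean, elementary reproof of a result the paper imports without proof.
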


Suppose that $A$ and $\widetilde{A}=A+W$ are symmetric matrices, and $\lambda_1(A)\geq \cdots\geq \lambda_r(A)>0$. Denote by  $F\La F^T$ and $\widetilde{F}\widetilde{\La}(\widetilde{F})^T$ the top-$r$ eigen-decompositions of $A$ and $\widetilde{A}$ , respectively. Let the SVD of the matrix $H:=F^T\widetilde{F}$ be $A\widetilde{\Sigma} B^T$, and define $G:=A B^T$. Define $\delta:=\lambda_r(A)-\lambda_{r+1}(A)$. As long as $\ln W\rn_2<\delta$, by the Davis-Khan theorem (see, e.g., \cite{Li1998}),
$$
\ln\sin(F,\widetilde{F})\rn \leq \frac{\ln WF\rn}{\delta-\ln W\rn_2}
$$
holds for any unitarily invariant norm. Then we have the following two lemmas, where Lemma~\ref{lem:perturb_S} is a slight modification of \cite[Lemma~2]{Ding2020}.

\begin{lemma}[{\cite[Lemma 14]{Ding2020}}]\label{lem:perturb_F}
If $\|W\|_2<\delta$, then
$$
\ln \widetilde{F}-FG\rn_{\mathrm{F}} \leq \frac{\sqrt{2}\|W F\|_{\mathrm{F}}}{\delta-\|W\|_{2}}.
$$
\end{lemma}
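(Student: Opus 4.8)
The plan is to reduce the bound to the Davis--Kahan $\sin\Theta$ inequality recorded immediately above the statement, by relating the Procrustes-aligned distance $\ln\widetilde{F}-FG\rn_{\mathrm{F}}$ to the principal angles between $\mathrm{range}(F)$ and $\mathrm{range}(\widetilde{F})$ through an elementary trace identity.

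First I would expand the squared objective. Since $F^TF = \widetilde{F}^T\widetilde{F} = I_r$ and $G^TG = I_r$,
\begin{align*}
\ln\widetilde{F}-FG\rn_{\mathrm{F}}^2 = \trace\lb\widetilde{F}^T\widetilde{F}\rb + \trace\lb G^TF^TFG\rb - 2\,\trace\lb G^TF^T\widetilde{F}\rb = 2r - 2\,\trace\lb G^TH\rb .
\end{align*}
Substituting $H = A\widetilde{\Sigma}B^T$ and $G = AB^T$ and using $A^TA = I_r$ gives $G^TH = B\widetilde{\Sigma}B^T$, hence $\trace(G^TH) = \trace(\widetilde{\Sigma}) = \sum_{k=1}^r \sigma_k(H)$. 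Writing $\sigma_k(H) = \cos\theta_k \in [0,1]$ for the principal angles $\theta_k$ (note $\|H\|_2 \le 1$ since $F,\widetilde{F}$ have orthonormal columns), we obtain
\begin{align*}
\ln\widetilde{F}-FG\rn_{\mathrm{F}}^2 = 2\sum_{k=1}^r \lb 1-\cos\theta_k\rb .
\end{align*}

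Next I would apply the pointwise inequality $1-\cos\theta_k = \tfrac{1-\cos^2\theta_k}{1+\cos\theta_k} \le \sin^2\theta_k$, valid since $\cos\theta_k \ge 0$, which yields
\begin{align*}
\ln\widetilde{F}-FG\rn_{\mathrm{F}}^2 \le 2\sum_{k=1}^r \sin^2\theta_k = 2\,\ln\sin(F,\widetilde{F})\rn_{\mathrm{F}}^2 ,
\end{align*}
because the squared Frobenius $\sin\Theta$ distance is exactly $\sum_k \sin^2\theta_k$ (equivalently $\ln(I-FF^T)\widetilde{F}\rn_{\mathrm{F}}^2$). Taking square roots and then invoking the Davis--Kahan bound recorded above — applicable because $\ln W\rn_2 < \delta$ — gives
\begin{align*}
\ln\widetilde{F}-FG\rn_{\mathrm{F}} \le \sqrt{2}\,\ln\sin(F,\widetilde{F})\rn_{\mathrm{F}} \le \frac{\sqrt{2}\,\ln WF\rn_{\mathrm{F}}}{\delta - \ln W\rn_2},
\end{align*}
which is the claim. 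The one point that needs care is the bookkeeping of normalization: one must confirm that the $\sin(F,\widetilde{F})$ appearing in the quoted Davis--Kahan statement is normalized so that its squared Frobenius norm is precisely $\sum_k \sin^2\theta_k$, so that the constant emerging is $\sqrt{2}$ rather than $1$ or $2$; everything else is routine linear algebra, so I do not anticipate a genuine obstacle here.
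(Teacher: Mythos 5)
Your argument is correct: the identity $\ln\widetilde{F}-FG\rn_{\mathrm{F}}^2=2\sum_k(1-\cos\theta_k)$, the pointwise bound $1-\cos\theta_k\le\sin^2\theta_k$, and the Davis--Kahan inequality in the Frobenius norm together give exactly the stated bound with the constant $\sqrt{2}$, and your normalization check is the right one (the quoted $\sin\Theta$ bound is for the diagonal matrix of sines, so $\ln\sin(F,\widetilde F)\rn_{\mathrm{F}}^2=\sum_k\sin^2\theta_k$). The paper itself defers the proof to \cite[Lemma 14]{Ding2020}, but your route is the standard one and mirrors the step used in the paper's proof of Lemma~\ref{lem:perturb_S}, where $\ln G-H\rn$ is likewise controlled by $\ln\sin\Theta\rn$ through the SVD of $H$.
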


\begin{lemma}\label{lem:perturb_S}
If $\|W\|_{2}<\delta$, then
$$
\left\|\La G-G\widetilde{\La}\right\| \leq\left(\frac{2 \lambda_1(A)+\|W\|_2}{\delta-\|W\|_2}+1\right)\ln W F\rn,
$$
where the norm can be either the Frobenius norm or the 2-norm.
\end{lemma}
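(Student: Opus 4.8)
The plan is to compare $\La G - G\widetilde\La$ with the analogous quantity $\La H - H\widetilde\La$, which I can evaluate in closed form. Since $A$ and $\widetilde A$ are symmetric with the columns of $F$ and $\widetilde F$ their top-$r$ eigenvectors, $F^TA = \La F^T$ and $\widetilde A\widetilde F = \widetilde F\widetilde\La$, so
$$
\La H - H\widetilde\La = (\La F^T)\widetilde F - F^T(\widetilde F\widetilde\La) = F^TA\widetilde F - F^T\widetilde A\widetilde F = -F^TW\widetilde F .
$$
Because $F$ and $\widetilde F$ have orthonormal columns and $W=W^T$, this gives $\|\La H - H\widetilde\La\| \le \|F^TW\| = \|WF\|$, simultaneously for the spectral norm and the Frobenius norm.

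Next I would write $G = H + (G-H)$ and expand
$$
\La G - G\widetilde\La = (\La H - H\widetilde\La) + \La(G-H) - (G-H)\widetilde\La ,
$$
so that $\|\La G - G\widetilde\La\| \le \|WF\| + \big(\|\La\|_2 + \|\widetilde\La\|_2\big)\|G-H\|$ by the triangle inequality and submultiplicativity. Here $\|\La\|_2 = \lambda_1(A)$, and Weyl's inequality together with $\lambda_1(A)\ge\cdots\ge\lambda_r(A)>0$ gives $\|\widetilde\La\|_2 = \max_{1\le i\le r}|\lambda_i(\widetilde A)| \le \lambda_1(A) + \|W\|_2$. Thus the coefficient multiplying $\|G-H\|$ is at most $2\lambda_1(A) + \|W\|_2$, and it only remains to show $\|G-H\| \le \|WF\|/(\delta - \|W\|_2)$.

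For that, observe that $G = AB^T$ and $H = A\widetilde\Sigma B^T$ share the same orthonormal factors, so $G-H$ has the same singular values as $I - \widetilde\Sigma$ and $\|G-H\| = \|I-\widetilde\Sigma\|$. The diagonal entries of $\widetilde\Sigma$ are the cosines $\cos\theta_i$ of the principal angles between $\mathrm{range}(F)$ and $\mathrm{range}(\widetilde F)$, and since $\theta_i\in[0,\tfrac{\pi}{2}]$ one has $0\le 1-\cos\theta_i\le\sin\theta_i$. Taking the maximum over $i$ (for the spectral norm) or squaring and summing (for the Frobenius norm) gives $\|I-\widetilde\Sigma\| \le \|\sin(F,\widetilde F)\|$ in either norm, and combining this with the Davis--Kahan bound $\|\sin(F,\widetilde F)\| \le \|WF\|/(\delta-\|W\|_2)$ recorded just before the lemma yields $\|G-H\| \le \|WF\|/(\delta-\|W\|_2)$. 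Substituting back,
$$
\|\La G - G\widetilde\La\| \le \Big(1 + \frac{2\lambda_1(A)+\|W\|_2}{\delta-\|W\|_2}\Big)\|WF\| ,
$$
as claimed.

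The argument is mostly bookkeeping once the identity $\La H - H\widetilde\La = -F^TW\widetilde F$ is in place; the step that requires the most care is the passage from $H$ to $G$, namely the bound $\|I-\widetilde\Sigma\| \le \|\sin(F,\widetilde F)\|$ holding for both norms at once. This is precisely what keeps the overhead linear (the additive constant $1$) rather than quadratic in the subspace angle, and it is where this statement refines \cite[Lemma~2]{Ding2020}; everything else is the triangle inequality, Weyl's inequality, and the already-stated Davis--Kahan estimate.
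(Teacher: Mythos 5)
Your argument is correct and follows essentially the same route as the paper's proof: the identity $\La H - H\widetilde{\La} = -F^T W \widetilde{F}$, the triangle-inequality detour through $H$ with the coefficient $\|\La\|_2 + \|\widetilde{\La}\|_2 \le 2\lambda_1(A) + \|W\|_2$, and the bound $\|G-H\| = \|I-\widetilde{\Sigma}\| \le \|\sin\Theta\| \le \|WF\|/(\delta - \|W\|_2)$ via $1-\cos\theta \le \sin\theta$ and Davis--Kahan. No gaps.
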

\begin{proof}
    From $\widetilde{A}\widetilde{F}=\widetilde{F}\widetilde{\La}$, we get
$\lb A+W\rb\widetilde{F} = \widetilde{F}\widetilde{\La}$. Left multiplying both sides by $F^T$ yields that
$$
\La F^T\widetilde{F} + F^TW\widetilde{F} = F^T\widetilde{F}\widetilde{\La} \Longrightarrow
\La H - H\widetilde{\La} = -F^TW\widetilde{F}.
$$
Next we only prove the result for the Frobenius norm, while the  the proof for $2$-norm is similar.
Note that \begin{align*}
    \left\|\La G-G\widetilde{\La}\right\|_{\mathrm{F}}
    \leq & \left\|\La G-\La H\right\|_{\mathrm{F}}+\left\|\La H-H\widetilde{\La}\right\|_{\mathrm{F}} + \left\|H\widetilde{\La}-G\widetilde{\La}\right\|_{\mathrm{F}} \\
    \leq & \lb \ln \La \rn_2+\ln \widetilde{\La}\rn_2 \rb \ln G-H\rn_{\mathrm{F}} + \ln F^TW\widetilde{F}\rn_{\mathrm{F}} \\
    \leq & \lb 2\lambda_1(A)+\ln W \rn_2\rb \ln G-H\rn_{\mathrm{F}} + \ln \widetilde{F}^TW^TF \rn_{\mathrm{F}} \\
    \leq & \lb 2\lambda_1(A)+\ln W \rn_2\rb \ln G-H\rn_{\mathrm{F}} + \ln W F\rn_{\mathrm{F}}, 
\end{align*}
where the last inequality follows from the fact that $W$ is symmetric. Denote the principal angles between the column space of $F$ and  $\widetilde{F}$ as $0\leq \theta_1\leq \cdots\leq \theta_r\leq \frac{\pi}{2}$, and define $\Theta = \diag(\theta_1,\cdots,\theta_r)$. Using the inequality $1 = \sqrt{\cos^2\theta + \sin^2\theta}\leq |\cos\theta|+|\sin\theta|$, we can get
\begin{align*}
\ln G-H\rn_{\mathrm{F}} = & \ln A\lb I - \widetilde{\Si}\rb B^T\rn_{\mathrm{F}} \\
= & \ln I - \widetilde{\Si} \rn_{\mathrm{F}}
= \ln I - \cos \Theta \rn_{\mathrm{F}} \leq \ln \sin \Theta\rn_{\mathrm{F}}\leq \frac{\ln WF\rn_{\mathrm{F}}}{\delta-\ln W\rn_2},
\end{align*}
where the last inequality follows from the Davis-Khan theorem.
\end{proof}

Our analysis relies heavily on the Bernstein inequality. We state a user-friendly version below, which is an immediate consequence of \cite[Theorem 1.6]{Tropp2011}.

\begin{lemma}[{\cite[Lemma 10]{Chen2014}}]\label{lem:bernstein}
Consider $m$ independent random matrices $M_l$ $(1 \leq l \leq m)$ of dimension $d_1 \times d_2$, each satisfying $\mathbb{E}\left[M_l\right]=0$ and $\left\|M_l\right\|_2 \leq B$. Define
$$
\sigma^2:=\max \left\{\left\|\sum_{l=1}^m \mathbb{E}\left[M_l M_l^T\right]\right\|_2,\left\|\sum_{l=1}^m \mathbb{E}\left[M_l^TM_l\right]\right\|_2\right\} .
$$
Then there exists a universal constant $c_{10}>0$ such that for any integer $a \geq 2$,
$$
\left\|\sum_{l=1}^m M_l\right\|_2 \leq c_{10}\left(\sqrt{a \sigma^2 \log \left(d_1+d_2\right)}+a B \log \left(d_1+d_2\right)\right)
$$
with probability at least $1-(d_1+d_2)^{-a}$.
\end{lemma}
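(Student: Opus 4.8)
The plan is to obtain this ``user-friendly'' form as a direct corollary of Tropp's matrix Bernstein tail inequality \cite[Theorem 1.6]{Tropp2011}. First I would pass from the rectangular matrices $M_l$ to their Hermitian dilations $\widehat{M_l} := \left[\begin{smallmatrix}0 & M_l\\ M_l^T & 0\end{smallmatrix}\right]$ (the rectangular analogue of the symmetrization $\widehat{\,\cdot\,}$ used elsewhere in the paper), which are $(d_1+d_2)\times(d_1+d_2)$, independent, self-adjoint, zero-mean, satisfy $\|\widehat{M_l}\|_2=\|M_l\|_2\le B$, and obey $\bigl\|\sum_{l=1}^m\mathbb{E}\bigl[\widehat{M_l}^2\bigr]\bigr\|_2=\sigma^2$ because $\widehat{M_l}^2=\diag(M_lM_l^T,\,M_l^TM_l)$. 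Since $\lambda_{\max}\bigl(\sum_l\widehat{M_l}\bigr)=\bigl\|\sum_l M_l\bigr\|_2$, \cite[Theorem 1.6]{Tropp2011} gives, for every $\tau\ge 0$,
$$
\mathbb{P}\Bigl[\bigl\|\textstyle\sum_{l=1}^m M_l\bigr\|_2\ge\tau\Bigr]\;\le\;(d_1+d_2)\exp\!\left(\frac{-\tau^2/2}{\sigma^2+B\tau/3}\right).
$$

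Next I would choose $\tau=c_{10}\bigl(\sqrt{a\sigma^2\log(d_1+d_2)}+aB\log(d_1+d_2)\bigr)$ and show that, for a large enough absolute constant $c_{10}$, the right-hand side above is at most $(d_1+d_2)^{-a}$. Writing $L:=\log(d_1+d_2)$, this reduces to the scalar inequality $\tfrac{\tau^2/2}{\sigma^2+B\tau/3}\ge(a+1)L$; and since the hypothesis $a\ge 2$ gives $a+1\le\tfrac32 a$, it suffices to establish the quadratic inequality $\tau^2\ge 3aL\,\sigma^2+aBL\,\tau$ in $\tau$. I would verify this by splitting into the variance-dominated regime $\sigma^2\ge aB^2L$ (in which the $\sqrt{a\sigma^2 L}$ summand of $\tau$ by itself already controls both terms on the right, the cross term being absorbed via Young's inequality) and the boundedness-dominated regime $\sigma^2<aB^2L$ (in which the $aBL$ summand does); in each regime a fixed numerical value of $c_{10}$, independent of $a,B,\sigma,d_1,d_2$, works.

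There is essentially no probabilistic content beyond the single invocation of \cite[Theorem 1.6]{Tropp2011}; the only point requiring any care --- and hence the ``main obstacle,'' such as it is --- is the bookkeeping that lets one universal $c_{10}$ serve both regimes at once, in particular splitting the cross term $aBL\,\tau$ (whose two pieces come from the two summands defining $\tau$) so that each piece is absorbed either into $\tfrac12\tau^2$-type terms or into the dominant summand of $\tau^2$. Everything downstream of Tropp's bound is elementary algebra on a quadratic. Alternatively, one may simply note that the statement is a verbatim restatement of \cite[Lemma 10]{Chen2014} and cite that source directly.
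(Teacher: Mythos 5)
Your derivation is correct, and it follows exactly the route the paper itself indicates: the paper states this lemma without proof, citing \cite[Lemma 10]{Chen2014} and remarking that it is ``an immediate consequence of \cite[Theorem 1.6]{Tropp2011}'', which is precisely the dilation-plus-threshold-choice argument you carry out. (In fact the final algebra can be done without splitting into regimes, since $aBL\le\tau/c_{10}$ and $a\sigma^2L\le\tau^2/c_{10}^2$ immediately give $3aL\sigma^2+aBL\tau\le(3/c_{10}^2+1/c_{10})\tau^2\le\tau^2$ for $c_{10}\ge 3$, but your version is equally valid.)
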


\begin{lemma}\label{lem:bound1}
Suppose $A,B,C\in\mathbb{R}^{n\times r}$. For $1\leq m\leq n$,  define
$$
H^{(m)} = \diag\lb 1-\delta_{m1}/p,\cdots,1-\delta_{mn}/p\rb\in\mathbb{R}^{n\times n},
$$
and $R^{(m)} = B^TH^{(m)}C$. Then we have the following deterministic bound
$$
\ln e_m^T\Ho\lb AB^T\rb C\rn_2 \leq \ln e_m^T A\rn_{2}\cdot \ln R^{(m)}\rn_2.
$$
Furthermore,
$$
\ln \Ho\lb AB^T\rb C\rn_{\mathrm{F}} \leq \ln A\rn_{\mathrm{F}}\cdot \max_m\ln R^{(m)}\rn_{2}.
$$
\end{lemma}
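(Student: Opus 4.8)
The plan is to reduce the whole lemma to one elementary identity: for any matrix $Z\in\mathbb{R}^{n\times n}$, the $m$-th row of $\Ho(Z)$ equals $e_m^T Z H^{(m)}$. Indeed, $\Ho=\I-\frac1p\Po$ acts entrywise via $\lb\Ho(Z)\rb_{jk}=\lb 1-\frac1p\delta_{jk}\rb Z_{jk}$ (with $\delta_{jk}$ the indicator of $(j,k)\in\Omega$), so the $k$-th entry of $e_m^T\Ho(Z)$ is $\lb 1-\delta_{mk}/p\rb Z_{mk}$, which is exactly the $k$-th entry of the row vector $e_m^T Z$ scaled by the $k$-th diagonal entry of $H^{(m)}$. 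Taking $Z=AB^T$ then gives
$$
e_m^T\Ho\lb AB^T\rb C = e_m^T A B^T H^{(m)} C = \lb e_m^T A\rb R^{(m)},
$$
a $1\times r$ row vector, where I have used the definition $R^{(m)}=B^T H^{(m)} C$.

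For the first (deterministic, row-wise) bound I would simply apply submultiplicativity of the spectral norm to this row-vector/matrix product, $\ln\lb e_m^T A\rb R^{(m)}\rn_2\leq\ln e_m^T A\rn_2\,\ln R^{(m)}\rn_2$, which is the claimed inequality. For the Frobenius bound I would write the squared Frobenius norm as a sum of squared row norms, apply the row-wise bound to each summand, and pull out the maximum:
$$
\ln\Ho\lb AB^T\rb C\rn_{\mathrm{F}}^2 = \sum_{m=1}^n\ln e_m^T\Ho\lb AB^T\rb C\rn_2^2 \leq \lb\max_{1\leq m\leq n}\ln R^{(m)}\rn_2^2\rb\sum_{m=1}^n\ln e_m^T A\rn_2^2 = \lb\max_{1\leq m\leq n}\ln R^{(m)}\rn_2\rb^2\ln A\rn_{\mathrm{F}}^2,
$$
and take square roots.

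I do not anticipate any real obstacle here; the lemma is two lines once the identity $e_m^T\Ho(Z)=e_m^T Z H^{(m)}$ is in place. The only thing requiring a little care is the bookkeeping: verifying that $R^{(m)}=B^T H^{(m)} C$, as defined in the statement, is precisely the factor left over after extracting $e_m^T A$ on the left, so that the row-vector/matrix split aligns exactly with the quantities appearing in the bound.
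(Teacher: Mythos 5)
Your proposal is correct and follows essentially the same route as the paper's proof: extract the row via $e_m^T\Ho(AB^T)C = e_m^T A\cdot B^TH^{(m)}C$, bound it by submultiplicativity, and then sum squared row norms and pull out $\max_m\|R^{(m)}\|_2$ for the Frobenius bound.
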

\begin{proof}
    A direct calculation yields that 
    \begin{align*}
\ln e_m^T\Ho\lb AB^T\rb C\rn_2 = & \ln e_m^T \lb AB^T\rb H^{(m)}C\rn_{2} \\
= & \ln e_m^T A \cdot B^TH^{(m)}C\rn_{2} 
\leq \ln e_m^T A\rn_2\cdot \ln R^{(m)}\rn_2.
\end{align*}
Moreover,
\begin{align*}
\ln\Ho\lb AB^T\rb C\rn_{\mathrm{F}}^2 = & \sum_{m}\ln e_m^T\Ho\lb AB^T\rb C\rn_{2}^2 \\
\leq & \sum_{m}\ln e_m^T A\rn_2^2\cdot \ln R^{(m)}\rn_2^2 \\
\leq & \lb \max_m\ln R^{(m)}\rn_{2}^2\rb\cdot\sum_{m}\ln e_m^T A\rn_2^2
= \ln A\rn_{\mathrm{F}}^2\cdot\max_m\ln R^{(m)}\rn_{2}^2,
\end{align*}
which completes the proof.
\end{proof}

\begin{lemma}[{\cite[Lemma 8]{Chen2019}}]\label{lem:bound2} 
Suppose $A,B,C,D\in\mathbb{R}^{n\times r}$. Then we have the deterministic bound
$$
\left|\left\langle \Ho\lb AC^T\rb,BD^T\right\rangle\right| \leq \ln \Ho\lb \bm{1}\bm{1}^T\rb\rn_2 \lb \ln A\rn_{2,\infty}\ln B\rn_{\mathrm{F}}\rb\lb\ln C\rn_{\mathrm{F}}\ln D\rn_{2,\infty} \rb.
$$
\end{lemma}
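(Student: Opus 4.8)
The plan is to reduce the estimate to a quadratic-form bound for the coefficient matrix of $\Ho$. Set $\Theta := \Ho\lb\bm{1}\bm{1}^T\rb$, the $n\times n$ matrix with entries $\theta_{jk} = 1-\delta_{jk}/p$; then $\Ho$ acts on any matrix $Z$ by the entrywise (Hadamard) product with $\Theta$, so that $\left\langle \Ho\lb AC^T\rb, BD^T\right\rangle = \sum_{j,k}\theta_{jk}\lb AC^T\rb_{jk}\lb BD^T\rb_{jk}$. Writing $\lb AC^T\rb_{jk} = \sum_s A_{js}C_{ks}$ and $\lb BD^T\rb_{jk} = \sum_t B_{jt}D_{kt}$ and interchanging the order of summation, this equals $\sum_{s,t}\sum_{j,k}\theta_{jk}\lb A_{js}B_{jt}\rb\lb C_{ks}D_{kt}\rb$.

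For each fixed pair of column indices $(s,t)$, introduce $u^{(s,t)}\in\mathbb{R}^n$ with $u^{(s,t)}_j = A_{js}B_{jt}$ and $v^{(s,t)}\in\mathbb{R}^n$ with $v^{(s,t)}_k = C_{ks}D_{kt}$. The inner double sum is then the bilinear form $\lb u^{(s,t)}\rb^T\Theta\, v^{(s,t)}$, which is at most $\ln\Theta\rn_2\ln u^{(s,t)}\rn_2\ln v^{(s,t)}\rn_2$. Summing over $(s,t)$ and applying Cauchy--Schwarz in the $(s,t)$ index gives
$$
\left|\left\langle \Ho\lb AC^T\rb, BD^T\right\rangle\right| \le \ln\Theta\rn_2\lb\sum_{s,t}\ln u^{(s,t)}\rn_2^2\rb^{1/2}\lb\sum_{s,t}\ln v^{(s,t)}\rn_2^2\rb^{1/2}.
$$

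It remains to evaluate the two sums of squares. By Fubini, $\sum_{s,t}\ln u^{(s,t)}\rn_2^2 = \sum_j\lb\sum_s A_{js}^2\rb\lb\sum_t B_{jt}^2\rb = \sum_j\ln A_{j,:}\rn_2^2\ln B_{j,:}\rn_2^2 \le \ln A\rn_{2,\infty}^2\sum_j\ln B_{j,:}\rn_2^2 = \ln A\rn_{2,\infty}^2\ln B\rn_{\mathrm{F}}^2$, and likewise $\sum_{s,t}\ln v^{(s,t)}\rn_2^2 = \sum_k\ln C_{k,:}\rn_2^2\ln D_{k,:}\rn_2^2 \le \ln D\rn_{2,\infty}^2\ln C\rn_{\mathrm{F}}^2$, where I pull out $\ln D\rn_{2,\infty}$ rather than $\ln C\rn_{2,\infty}$ so as to match the target. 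Substituting these bounds and recalling $\Theta = \Ho\lb\bm{1}\bm{1}^T\rb$ finishes the argument. The proof is fully deterministic and elementary; the only genuine idea is the expansion over pairs of column indices followed by Cauchy--Schwarz over that pair index, which is what converts an entrywise-weighted sum into a clean spectral-norm bound, and there is no real obstacle beyond bookkeeping.
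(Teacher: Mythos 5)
Your proof is correct, and since the paper simply cites \cite[Lemma~8]{Chen2019} without reproducing the argument, there is no in-paper proof to diverge from; your rank-one expansion over column indices, reduction to bilinear forms $(u^{(s,t)})^T\Theta\, v^{(s,t)}$ with $\Theta = \Ho(\bm{1}\bm{1}^T)$, spectral-norm bound, and final Cauchy--Schwarz over $(s,t)$ is exactly the standard route taken in that line of work. The bookkeeping is all accurate, including the deliberate choice to pull out $\|A\|_{2,\infty}$ and $\|D\|_{2,\infty}$ (rather than $\|B\|_{2,\infty}$, $\|C\|_{2,\infty}$) to match the asymmetric form of the stated bound.
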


The following lemma is the main tool we use to bound the outlier term in the induction steps.

\begin{lemma}\label{lem:P_Omega_AB}
Recall that $\Omega_{S^{\star}}$ is the support of the outliers in $\Omega$. If $\Po(S^{\star})$ is $2\alpha p$-sparse, then
$$
\ln \P_{\Omega_{S^{\star}}}(AB^T)\rn_{\mathrm{F}}^2\leq 2\alpha pn\cdot\min\left\{\ln A\rn_{\mathrm{F}}^2\ln B\rn_{2,\infty}^2,\ln A\rn_{2,\infty}^2\ln B\rn_{\mathrm{F}}^2\right\}
$$
holds uniformly for all $A,B\in\mathbb{R}^{n\times r}$.
\end{lemma}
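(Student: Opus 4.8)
The plan is to expand the Frobenius norm entrywise, apply Cauchy--Schwarz to each entry $(AB^T)_{ij}=\langle A_{i,:},B_{j,:}\rangle$, and then exploit the fact that $\Omega_{S^{\star}}$, being the support of the $2\alpha p$-sparse matrix $\Po(S^{\star})$, has at most $2\alpha p n$ entries per row and per column. The two bounds inside the $\min$ will come from summing in the two possible orders.

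First I would write
$$
\ln \P_{\Omega_{S^{\star}}}(AB^T)\rn_{\mathrm{F}}^2 = \sum_{(i,j)\in\Omega_{S^{\star}}} \langle A_{i,:},B_{j,:}\rangle^2 \leq \sum_{(i,j)\in\Omega_{S^{\star}}} \ln A_{i,:}\rn_2^2\,\ln B_{j,:}\rn_2^2,
$$
using Cauchy--Schwarz for the inner products. For the first term in the $\min$, I fix the row index $i$: since $\Omega_{S^{\star}}$ has no more than $2\alpha p n$ nonzero entries in row $i$, we have $\sum_{j:(i,j)\in\Omega_{S^{\star}}}\ln B_{j,:}\rn_2^2 \leq 2\alpha p n\,\ln B\rn_{2,\infty}^2$, and then $\sum_i \ln A_{i,:}\rn_2^2\cdot 2\alpha p n\,\ln B\rn_{2,\infty}^2 = 2\alpha p n\,\ln A\rn_{\mathrm{F}}^2\,\ln B\rn_{2,\infty}^2$. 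For the second term, I instead fix the column index $j$ and use the per-column sparsity: $\sum_{i:(i,j)\in\Omega_{S^{\star}}}\ln A_{i,:}\rn_2^2 \leq 2\alpha p n\,\ln A\rn_{2,\infty}^2$, whence $\sum_j \ln B_{j,:}\rn_2^2\cdot 2\alpha p n\,\ln A\rn_{2,\infty}^2 = 2\alpha p n\,\ln A\rn_{2,\infty}^2\,\ln B\rn_{\mathrm{F}}^2$. Taking the smaller of the two bounds gives the claimed inequality, and since $A,B$ were arbitrary the estimate is uniform.

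There is no genuine analytic obstacle here; the only point requiring attention is bookkeeping, namely matching which index is held fixed to which form of the sparsity bound (fixing the row invokes per-row sparsity and yields $\ln A\rn_{\mathrm{F}}^2\ln B\rn_{2,\infty}^2$, while fixing the column invokes per-column sparsity and yields $\ln A\rn_{2,\infty}^2\ln B\rn_{\mathrm{F}}^2$), and noting that the support of $\Po(S^{\star})$ is exactly $\Omega_{S^{\star}}$ so that Assumption 3 applies directly.
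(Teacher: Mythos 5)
Your proposal is correct and follows essentially the same route as the paper's proof: expand the squared Frobenius norm over $\Omega_{S^{\star}}$, apply Cauchy--Schwarz entrywise, and invoke the per-row (respectively per-column) $2\alpha p$-sparsity to obtain the two bounds in the $\min$. The paper only writes out the first bound and notes the second is symmetric, but the argument is identical to yours.
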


\begin{proof}
In the following we prove that $\ln \P_{\Omega_{S^{\star}}}(AB^T)\rn_{\mathrm{F}}^2\leq 2\alpha pn\cdot\ln A\rn_{\mathrm{F}}^2\ln B\rn_{2,\infty}^2$. The other part can be obtained similarly. By a direct calculation, one has
\begin{align*}
    \ln \P_{\Omega_{S^{\star}}}(AB^T)\rn_{\mathrm{F}}^2
    = & \sum_{i=1}^n\sum_{(i,j)\in\Omega_{S^{\star}}} \lb A_{i,:}\lb B_{j,:} \rb^T\rb^2 \\
    \leq & \sum_{i=1}^n\sum_{(i,j)\in\Omega_{S^{\star}}} \ln A_{i,:}\rn_2^2\ln B_{j,:} \rn_2^2\\
    = & \sum_{i=1}^n\ln A_{i,:}\rn_2^2\cdot\lb \sum_{(i,j)\in\Omega_{S^{\star}}} \ln B_{j,:} \rn_2^2 \rb \\
    \leq & \sum_{i=1}^n\ln A_{i,:}\rn_2^2\cdot(2\alpha pn\ln B\rn_{2,\infty}^2)
    = 2\alpha pn\cdot\ln A\rn_{\mathrm{F}}^2\ln B\rn_{2,\infty}^2,
\end{align*}
where the second inequality follows from the sparsity assumption about $\Po(S^{\star})$.
\end{proof}

\begin{lemma}\label{lem:general_threh}
For $\lambda>0$, the soft-thresholding function $T_{\lambda}^{soft}$, and the SCAD function $T_{\lambda,a}^{scad}~(a>2)$ defined as the following
\begin{equation}\label{eq:threshold_funs}
\T_{\lambda}^{soft}(x) =
\text{sign}(x)(|x|-\lambda)_{+},~\T_{\lambda,a}^{scad}(x) = \left\{\begin{array}{cc}
\text{sign}(x)(|x|-\lambda)_{+} & |x|\leq 2\lambda \\
\frac{(a-1)x-\text{sign}(x)\cdot a\lambda}{a-2} & 2\lambda<|x|< a\lambda \\
x & |x|\geq a\lambda 
\end{array}\right.,
\end{equation}
satisfy the three properties in \textup{\labelcref{P1,P2,P3}}.
\end{lemma}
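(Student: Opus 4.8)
The plan is to verify the three properties separately for $\T_{\lambda}^{soft}$ and for $\T_{\lambda,a}^{scad}$; in both cases the check is elementary once one exploits the odd, piecewise-linear structure of the functions. I will record the constants that fall out: $K=B=1$ for soft-thresholding, and $K=\tfrac{a-1}{a-2}$, $B=1$ for SCAD.

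For soft-thresholding, \labelcref{P1} is immediate since $(|x|-\lambda)_{+}=0$ whenever $|x|\le\lambda$. For \labelcref{P2} I would write $x\mapsto(|x|-\lambda)_{+}$ as the composition of the $1$-Lipschitz map $x\mapsto |x|-\lambda$ with the $1$-Lipschitz map $t\mapsto (t)_{+}$; reinstating $\text{sign}(x)$ does not inflate the constant, the only subtlety being a sign change across $0$, which is harmless because $\T_{\lambda}^{soft}$ vanishes on $[-\lambda,\lambda]$ (for $x>0>y$ one bounds $|\T_{\lambda}^{soft}(x)-\T_{\lambda}^{soft}(y)|=(x-\lambda)_{+}+(-y-\lambda)_{+}\le x+|y|=|x-y|$). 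Hence $K=1$. For \labelcref{P3}, note $\T_{\lambda}^{soft}(x)-x$ equals $-\text{sign}(x)\lambda$ when $|x|>\lambda$ and equals $-x$ when $|x|\le\lambda$, so $|\T_{\lambda}^{soft}(x)-x|\le\lambda$ in all cases, giving $B=1$.

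For SCAD, \labelcref{P1} again holds because on $|x|\le\lambda\ (\le 2\lambda)$ the function is exactly the soft-thresholding branch. The key preliminary step for \labelcref{P2} is to check that the three pieces agree at the breakpoints: a direct substitution shows $\T_{\lambda,a}^{scad}$ takes value $\text{sign}(x)\lambda$ at $|x|=2\lambda$ and $\text{sign}(x)a\lambda$ at $|x|=a\lambda$ from either side, so the function is continuous and odd. On the linear pieces the slope is $0$, $1$, or $\tfrac{a-1}{a-2}$, the last being the largest since $a>2$; continuity then glues these per-piece bounds into the global estimate $|\T_{\lambda,a}^{scad}(x)-\T_{\lambda,a}^{scad}(y)|\le\tfrac{a-1}{a-2}|x-y|$ for $x,y$ of the same sign, and the opposite-sign case follows by splitting at $0$ and using $\T_{\lambda,a}^{scad}(0)=0$. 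For \labelcref{P3} I would compute $\T_{\lambda,a}^{scad}(x)-x$ piecewise: it is $-x$ for $|x|\le\lambda$, $-\text{sign}(x)\lambda$ for $\lambda<|x|\le 2\lambda$, $\text{sign}(x)\tfrac{|x|-a\lambda}{a-2}$ for $2\lambda<|x|<a\lambda$ (whose absolute value is $\tfrac{a\lambda-|x|}{a-2}<\lambda$), and $0$ for $|x|\ge a\lambda$; thus $|\T_{\lambda,a}^{scad}(x)-x|\le\lambda$ everywhere, so $B=1$.

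There is no substantive obstacle here: the only thing requiring care is the breakpoint bookkeeping for SCAD — verifying continuity so that the per-piece Lipschitz and deviation bounds assemble into global ones — and the handling of the sign change at the origin in the Lipschitz argument; both are routine.
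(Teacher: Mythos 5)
Your proof is correct and follows essentially the same route as the paper's: verify \labelcref{P1,P2,P3} by elementary piecewise computation, obtaining $K=B=1$ for soft-thresholding and $K=\tfrac{a-1}{a-2}$, $B=1$ for SCAD. The only cosmetic difference is that the paper justifies the $1$-Lipschitz property of soft-thresholding by noting it is the proximal map of a convex function, whereas you argue it directly by composition and a sign-split at the origin; both are fine.
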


\begin{proof}
    For soft-thresholding, the first property is satisfied since $T_{\lambda}^{soft}(x)=0,~\forall\, |x|\leq \lambda$. The second property holds since it is the proximal mapping of a convex function, $|T_{\lambda}^{soft}(x)-T_{\lambda}^{soft}(y)|\leq|x-y|$. For the last property, it is easy to check that $|T_{\lambda}^{soft}(x)-x|\leq \lambda,~\forall |x|\leq\lambda$, and $|T_{\lambda}^{soft}(x)-x|= \lambda,~\forall\, |x|\geq\lambda$.

    For SCAD, it is a piece-wise linear function with knots at $\pm \lambda$, $\pm 2\lambda$, and $\pm a\lambda$. From the definition, $T_{\lambda,a}^{scad}(x)=0,~\forall |x|\leq \lambda$. By dividing $(-\infty,\infty)$ into eight intervals, and considering the slope between any two points from any two intervals, one can show that $|T_{\lambda,a}^{scad}(x)-T_{\lambda,a}^{scad}(y)|\leq \frac{a-1}{a-2}|x-y|$. Finally, it is easy to check that $|T_{\lambda,a}^{scad}(x)-x|\leq \lambda,~\forall |x|\leq\lambda$, $|T_{\lambda,a}^{scad}(x)-x|= \lambda,~\forall\, \lambda\leq|x|\leq2\lambda$, and $|T_{\lambda,a}^{scad}(x)-x|\leq \lambda,~\forall\, 2\lambda\leq|x|\leq a\lambda$. 
\end{proof}

\subsection{Proof for Lemma \ref{lem:L_infinity}}\label{sec:proof_L_infty}
Note that
\begin{align*}
    L-L^{\star} = & U\Sigma V^T-U^{\star}\Sigma^{\star}(V^{\star})^T \\
    = & (U\Sigma V^T-(U^{\star}G)\Sigma V^T)+((U^{\star}G)\Sigma V^T-(U^{\star}G)\Sigma(V^{\star}G)^T)\\
    & +((U^{\star}G)\Sigma(V^{\star}G)^T-(U^{\star}G)\Sigma^{\star}(V^{\star}G)^T)+((U^{\star}G)\Sigma^{\star}(V^{\star}G)^T-U^{\star}\Sigma^{\star}(V^{\star})^T) \\
    := & R_1+R_2+R_3+R_4.
\end{align*}
For the first three terms, we have the bounds:
$$
\ln R_1\rn_{\infty}\leq \|\Delta\|_{2, \infty}\|F\|_{2, \infty}\|\Sigma\|_{2},
~\ln R_2\rn_{\infty}\leq \|\Delta\|_{2, \infty}\|F^{\star}\|_{2, \infty}\|\Sigma\|_{2},
~\ln R_3\rn_{\infty}\leq \|F^{\star}\|_{2, \infty}^2\|E\|_{2}.
$$
For the last term,
\begin{align*}
\ln R_4 \rn_{\infty} \leq & \|F^{\star}\|_{2, \infty}^2\ln G\Sigma^{\star}G^T-\Sigma^{\star}\rn_2
= \|F^{\star}\|_{2, \infty}^2\ln G\Sigma^{\star}-\Sigma^{\star}G\rn_2,
\end{align*}
and we can invoke Lemma \ref{lem:perturb_gt} to bound the $2$-norm by considering the augmented matrices
$$
\widehat{L} = \left[\begin{array}{cc}
0               &  L^{\star}+E\\
(L^{\star}+E)^T &  0
\end{array}\right],~\widehat{L^{\star}} = \left[\begin{array}{cc}
0            & L^{\star} \\
(L^{\star})^T &  0
\end{array}\right].
$$
The top-$r$ eigen-decompositions of $\widehat{L}$ and $\widehat{L^{\star}}$ are 
$$
\lb\frac{1}{\sqrt{2}}F\rb\Sigma \lb\frac{1}{\sqrt{2}}F\rb^T~\text{and}~ \lb\frac{1}{\sqrt{2}}F^{\star}\rb\Sigma^{\star}\lb\frac{1}{\sqrt{2}}F^{\star}\rb^T,
$$
respectively. Since $\ln\widehat{L}-\widehat{L^{\star}}\rn_2=\ln\widehat{E}\rn_2=\ln E\rn_2\leq\frac12\sigma_r^{\star}$, by Lemma \ref{lem:perturb_gt}, 
$$
\ln \Sigma^{\star}G - G\Sigma^{\star}\rn_2 \leq (2+4\kappa)\ln E\rn_2.
$$

\subsection{Proof for Lemma \ref{lem:thresh}}\label{sec:proof_thresh}

For $(k,l)\in \Omega^{(-i)} \setminus \Omega_{S^{\star}}$,
$$
|(M-L^{t,i})_{kl}|=|(L^{\star}-L^{t,i})_{kl}|\leq \lb\frac{\mu r}{n}\sigma_{1}^{\star}\rb\gamma^t\leq \xi^t.
$$
Due to \labelcref{P1} of the thresholding function, $\text{Supp}\lb S^{t,i}\rb\subseteq\Omega^{(-i)}\cap\Omega_{S^{\star}}$. For $(k,l)\in \Omega^{(-i)}\cap\Omega_{S^{\star}}$,
\begin{align*}
    |(S^{t,i}-S^{\star})_{kl}|=&|\T_{\xi^t}((L^{\star}+S^{\star}-L^{t,i})_{kl})-S^{\star}_{kl}| \\
    \leq&|\T_{\xi^t}((L^{\star}+S^{\star}-L^{t,i})_{kl})-\T_{\xi^t}(S^{\star}_{kl})|+|\T_{\xi^t}(S^{\star}_{kl})-S^{\star}_{kl}|\\
    \leq&K|(L^{\star}-L^{t,i})_{kl}| + B\xi^t
    \leq (K+B)\xi^t,
\end{align*}
where in the second inequality, we use \labelcref{P2,P3} of the thresholding function. The conclusion follows by noting that $\xi^t\leq C_{\text{init}}\lb\frac{\mu r}{n}\sigma_1^{\star}\rb \gamma^t$.

\end{document}